\providecommand{\U}[1]{\protect\rule{.1in}{.1in}}
\newcommand{\cmmnt}[1]{}
\newtheorem{theorem}{Theorem}
\newtheorem{corollary}{Corollary}
\newtheorem{definition}{Definition}
\newtheorem{lemma}{Lemma}
\newtheorem{proposition}{Proposition}
\newtheorem{remark}{Remark}
\newenvironment{proof}[1][Proof]{\noindent\textbf{#1.} }{\ \rule{0.5em}{0.5em}}
\DeclareMathOperator{\Ext}{Ext}
\let\@fnsymbol\@arabic
\begin{document}

\title{\vspace{-.3in}\textbf{Limiting  one-way distillable secret key \\via privacy testing of extendible states}}

  \author{Vishal Singh\thanks{Mathematical Quantum Information RIKEN Hakubi Research Team, 
RIKEN Pioneering Research Institute (PRI) and RIKEN Center for Quantum Computing (RQC), Wako, Saitama 351-0198, Japan \\ (email: vishalsph04@gmail.com)}\hspace{0.3em} \thanks{Centre for Quantum Technologies, National University of Singapore, Singapore 117543, Singapore}\hspace{0.3em} \thanks{School of Applied and Engineering Physics, Cornell University, Ithaca, New York 14850, USA}
  \and
  Karol Horodecki\thanks{Institute of Informatics, National Quantum Information Centre,
Faculty of Mathematics, Physics and Informatics,
University of Gda\'nsk, Wita Stwosza 57, 80-308 Gda\'nsk, Poland}
\and
  Aby Philip\thanks{Institute of Fundamental Technological Research, Polish Academy of Sciences, Pawi\'nskiego 5B, 02-106 Warsaw, Poland.}
  \and
  Mark M.~Wilde\thanks{School of Electrical and Computer Engineering, Cornell University, Ithaca, New York 14850, USA}
  }

\date{ }
\maketitle

\begin{abstract}
    The notions of privacy tests and $k$-extendible states have both been instrumental in quantum information theory, particularly in understanding the limits of secure communication. In this paper, we determine the maximum probability with which an arbitrary $k$-extendible state can pass a privacy test, and we prove that it is equal to the maximum fidelity between an arbitrary $k$-extendible state and the standard maximally entangled state. Our findings, coupled with the resource theory of $k$-unextendibility, lead to an efficiently computable upper bound on the one-shot, one-way distillable key of a bipartite state, and we prove that it is equal to the best-known efficiently computable upper bound on the one-shot, one-way distillable entanglement. We also establish efficiently computable upper bounds on the one-shot, forward-assisted private capacity of channels. Extending our formalism to the independent and identically distributed setting, we obtain single-letter efficiently computable bounds on the $n$-shot, one-way distillable key of a state and the $n$-shot, forward-assisted private capacity of a channel. For some key examples of interest, our bounds are significantly tighter than other known efficiently computable bounds. 
\end{abstract}

\textbf{Index terms:} $k$-extendibility, private communication, secret-key distillation, one-shot private capacity, semidefinite programs

\tableofcontents

\section{Introduction}

Monogamy of entanglement is a unique feature of quantum correlations with no analog in classical probability theory~\cite{CKW00,Ter04}. Roughly stated, if two systems are highly entangled, then neither of them can be equally entangled with a third system. Not only has this property been a subject of fundamental interest in quantum information theory~\cite{KW04,OV06,dOCF14,BT24}, but it also is the vital feature that allows unconditional secure communication over a quantum network~\cite{E91, Paw10}.

The discovery of the first unconditionally secure communication protocol in~\cite{BB84} has led to a consolidated effort from the quantum information theory community to understand the connection between shared entanglement and the ability to perform unconditionally secure communication between two distant parties. Significant foundational developments were made in~\cite{HHHO05, HHHO09}, where the authors identified the mathematical structure of all bipartite states that yield a secret key upon local measurements, now known as ``\textit{private states}''. In~\cite{HHHLO08, HHHLO08_QP}, the authors devised the ``\textit{privacy test}'', a bipartite measurement that determines if a given state is \textit{private} or not. Determining the probability of a state to pass the privacy test has led to several insights into the theory of secure communication over a quantum network \cite{WTB17}.

On the other hand, symmetric extendibility of states has emerged as a powerful tool for understanding the limits of quantum information processing, capturing the notion of monogamy of entanglement~\cite{Wer89, DPS02,TDS03, DPS04}. It has been especially useful in understanding the limits of entanglement distillation and key distillation from a bipartite state under local operations and one-way classical communication, which we abbreviate as one-way LOCC in the remainder of this paper. However, a fundamental question has remained unanswered until now:
\begin{displayquote}
\textit{What is the maximum probability with which a symmetrically extendible state can pass the privacy test?}
\end{displayquote}
We definitively answer this question in our paper and use it to obtain efficiently computable bounds on several quantities of interest in the theory of secure communication over quantum channels.

We particularly focus on $k$-extendibility of states defined in~\cite{DPS02,DPS04}, which is a hierarchy of semidefinite conditions for testing the separability of a bipartite state. A bipartite state is separable if and only if it is $k$-extendible for every integer $k\ge 2$. Beyond its use as a relaxation of the separability criterion, the $k$-extendibility conditions identify a set of states, strictly larger than the set of separable states, that are useless for the task of entanglement distillation or key distillation using one-way LOCC protocols. This has motivated their study in a resource-theoretic framework~\cite{KDWW19,KDWW21}.

The $k$-extendibility of states was first studied from a resource-theoretic approach in~\cite{KDWW19,KDWW21}, where the authors defined the resource theory of $k$-unextendibility and used it to study entanglement transmission over quantum channels assisted by local operations and forward classical communication. Central to their developments was the fact that the fidelity between a $d$-dimensional, $k$-extendible state and the maximally entangled state of Schmidt rank $d$ cannot exceed $\frac{1}{d} + \frac{1}{k} - \frac{1}{dk}$. We show that this quantity is also the maximum probability with which any $k$-extendible state can pass the privacy test for $\log_2 d$ secret bits. As a consequence, we show that all the limits obtained in~\cite{KDWW19,KDWW21} on forward-assisted quantum communication tasks are, in fact, limits on the corresponding forward-assisted private communication tasks as well. 

We use the framework of the resource theory of $k$-unextendibility to study the ultimate limits of private communication over a quantum network. In what follows, we briefly discuss applications of our results to the task of secret-key distillation from bipartite states under one-way LOCC and to the task of private communication over a channel with local operations and forward public communication.

\subsection{Secret key distillation from states}

The task of secret-key distillation from states using local operations and an arbitrary amount of one-way public communication was studied in~\cite{DW05} in the asymptotic setting. Further studies extended the results to the non-asymptotic setting~\cite{RR12,KKGW21}, where two parties wish to establish a highly secure key, and not necessarily perfectly secure key, using a finite number of independent and identically distributed (i.i.d.)~states along with local operations and one-way public communication. This quantity is termed as the one-shot, one-way distillable key of a state. 

While previous works have obtained lower and upper bounds on the one-shot, one-way distillable key of a state, an efficiently computable upper bound on the one-shot, one-way distillable key of a state was only recently discovered~\cite{SW25}. However, the recent bound from~\cite{SW25} does not give a meaningful value if the error tolerance in the key distillation protocol is allowed to be too high or if the resource state is highly entangled.

Several upper bounds on the one-shot, one-way distillable key of a state, which is the number of secret bits that can be distilled from a state using local operations with public communication from both parties, have also been discovered, such as the hypothesis-testing relative entropy of entanglement bound~\cite{WTB17} and the squashed entanglement bound~\cite{Wilde16}. Naturally, these bounds serve as upper bounds on the one-shot, one-way distillable key of a state as well. However, neither the hypothesis-testing relative entropy of entanglement nor the squashed entanglement of a state are efficiently computable in general.

We obtain a new efficiently computable upper bound on the one-shot, one-way distillable key of a state, and we numerically demonstrate that our bound is tighter than the bound in~\cite{SW25} for isotropic states, as an example. Moreover, our methods allow us to obtain a family of upper bounds on the one-shot, one-way distillable key of a state. While some of the bounds in this family may not give a meaningful result for a given error tolerance, there always exists a bound in this family that yields a meaningful limit on the one-shot, one-way distillable key of the state. The hypothesis-testing relative entropy of entanglement bound~\cite{WTB17} appears as a limiting case of this family of bounds.

As stated earlier, our bounds on the one-shot, one-way distillable key are precisely equal to the bounds on the one-shot, one-way distillable entanglement of a state obtained in~\cite{KDWW19,KDWW21}, which are the best-known efficiently computable bounds on the one-shot, one-way distillable entanglement of a state to the best of our knowledge.

In a practical key distillation protocol, we often see that a large number of resourceful states are consumed before a single secret bit can be distilled with the desired security parameter. We use our methods to compute a lower bound on the minimum number of copies of an isotropic state that are needed to distill a single secret bit via a one-way LOCC protocol.

\subsection{Private communication over channels}

The notion of private capacity of a channel was first studied in~\cite{Dev05, CWY04}. Several developments in the study of private communication over a finite number of channel uses have been made in the last two decades~\cite{RR11, WTB17, Wilde17, RSW17, KKGW21}. An efficiently computable upper bound on the number of bits that can be securely transmitted over a single use of a channel assisted by local operations and forward public communication, which is termed as the one-shot, forward-assisted private capacity of the channel, was also discovered in~\cite{SW25}.

Here we introduce a new efficiently computable upper bound on the one-shot, forward-assisted private capacity of channels, and we numerically demonstrate that it is significantly tighter than the bound from~\cite{SW25} for erasure channels. Similar to the case of key distillation, we obtain a family of bounds on the one-shot, forward-assisted private capacity of a channel, and the hypothesis-testing relative entropy of entanglement of a channel appears as a limiting case of this family of bounds.

In applying our results to the one-shot, forward-assisted private capacity of channels, we define a new monotone for the resource theory of $k$-unextendibility of point-to-point channels, namely the $k$-unextendible generalized divergence of channels, which differs from the one considered in~\cite{KDWW19,KDWW21}. We take this slightly different approach in order to ensure that our bounds can be efficiently computed for every channel, which may not necessarily be the case for the monotone considered in~\cite{KDWW19,KDWW21}.

We also compute the minimum number of uses of an erasure channel needed to transmit a single bit securely over these channels when assisted by local operations and an arbitrary amount of forward public communication. 

\subsection{Summary of results}

The main technical result of this paper is a tight upper bound on the maximum probability with which a $k$-extendible state can pass a privacy test. We then use this result to obtain the following:
\begin{itemize}
    \item \textbf{(Theorem~\ref{theo:dd_1shot_key_bnd_k_ext})} Upper bound on the one-shot, one-way distillable key of a state, which can be computed using a semidefinite program.
    \item \textbf{(Corollary~\ref{cor:DD_key_n_shot_ub_sandwich})} Single-letter upper bound on the $n$-shot, one-way distillable key of a state, which can be computed using a semidefinite program.
    \item \textbf{(Theorem~\ref{theo:priv_cap_ub_hypo_test})} Upper bound on the one-shot, forward-assisted private capacity of a channel, which can be computed using a semidefinite program.
    \item \textbf{(Corollaries~\ref{cor:n_shot_priv_cap_ub} and~\ref{cor:n_shot_priv_cap_sandwich})} Single-letter upper bound on the $n$-shot, forward-assisted private capacity of a channel. The upper bound in Corollary~\ref{cor:n_shot_priv_cap_ub} can be computed using a semidefinite program.
\end{itemize}

The remainder of this paper is organized as follows:
\begin{itemize}
    \item In Section~\ref{sec:background}, we briefly review the notation used in this paper, the notion of secret keys and private states, and the resource theory of $k$-unextendibility.
    \item In Section~\ref{sec:priv_test}, we present the main technical result of this paper, which is a tight upper bound on the maximum probability with which a $k$-extendible state can pass a privacy test.
    \item In Section~\ref{sec:one_way_key_distill}, we review the notion of one-shot, one-way distillable key of a state, and we obtain efficiently computable upper bounds on the one-shot and $n$-shot, one-way distillable key of a state.
    \item In Section~\ref{sec:one_way_priv_cap}, we review the notion of one-shot, forward-assisted private capacity of a channel, and we obtain efficiently computable upper bounds on the one-shot and $n$-shot, forward-assisted private capacity of a channel.
    \item In Section~\ref{sec:applications}, we demonstrate numerical values of the upper bounds on the $n$-shot, one-way distillable key of isotropic states, and the $n$-shot, forward-assisted private capacity of erasure channels obtained from our bounds. We also compute a lower bound on the minimum number of isotropic states needed to distill a single secret bit using a one-way LOCC protocol for a fixed error tolerance. Similarly, we compute a lower bound on the minimum number of uses of an erasure channel to transmit a single bit over the channels with one-way LOCC assistance.  
\end{itemize} 

\section{Background}\label{sec:background}

In this section, we review some background material that is necessary to discuss the findings of this paper.

\subsection{Quantum states and channels}
A quantum state $\rho_A$ is a positive semidefinite, unit-trace operator acting on the Hilbert space $\mathcal{H}_A$ associated with the system $A$. We denote the set of all states acting on $\mathcal{H}_A$ by $\mathcal{S}(A)$, and we denote the dimension of $\mathcal{H}_A$ by $|A|$.

A bipartite state $\rho_{AB}$ acting on the Hilbert space $\mathcal{H}_A\otimes \mathcal{H}_B$ is said to be separable if it can written in the following form:
\begin{equation}
    \rho_{AB} = \sum_{x\in \mathcal{X}}p_x\sigma^x_A\otimes \tau^x_B,
\end{equation}
where $\mathcal{X}$ is an alphabet of arbitrary dimension, $\left\{p_x\right\}_{x\in \mathcal{X}}$ is a probability distribution, and $\left\{\sigma^x_A\right\}_{x\in \mathcal{X}}$ and $\left\{\tau^x_A\right\}_{x\in \mathcal{X}}$ are sets of quantum states. If a state is not separable, it is said to be \textit{entangled}. The maximally entangled state on the Hilbert space $\mathcal{H}_A\otimes \mathcal{H}_B$, with $|A| = |B|$, is defined as follows:
\begin{equation}
    \Phi^d_{AB} \coloneqq \frac{1}{d}\sum_{i,j=0}^{d-1}|i\rangle\!\langle  j|_A\otimes |i\rangle\!\langle j|_{B},
\end{equation}
where $\left\{|i\rangle\right\}_{i=0}^{d-1}$ is an orthonormal basis for both $\mathcal{H}_A$ and $\mathcal{H}_B$, and $d = |A| = |B|$ is the Schmidt rank of the maximally entangled state.

We often consider states acting on several isomorphic systems. To declutter the expressions, we use the following notation:
\begin{equation}
    B_{[k]} \coloneqq B_1B_2\cdots B_k,
\end{equation}
and we use the symbol $B_{[k]\setminus i}$ to describe the collection of systems $B_{[k]}$ but without system $B_i$. That is,
\begin{equation}
    B_{[k]\setminus i} \coloneqq B_1B_2\cdots B_{i-1}B_{i+1}B_{i+2}\cdots B_k.
\end{equation}

A quantum channel $\mathcal{N}_{A\to B}$ is a completely positive (CP), trace-preserving (TP) map that transforms a linear operator acting on $\mathcal{H}_A$ to a linear operator acting on $\mathcal{H}_B$. A channel is completely described by its Choi operator, which is defined as follows:
\begin{equation}\label{eq:Choi_op_defn}
    \Gamma^{\mathcal{N}}_{RB} \coloneqq \mathcal{N}_{A\to B}\!\left(d\Phi^d_{RA}\right),
\end{equation}
where system $R$ is isomorphic to system $A$ (denoted as $R\cong A$) and $d = |R| = |A|$. Rescaling the Choi operator to obtain a unit trace operator, we get the Choi state of the channel, which is defined as follows:
\begin{equation}
    \Phi^{\mathcal{N}}_{RB} \coloneqq \mathcal{N}_{A\to B}\!\left(\Phi^d_{RA}\right) = \frac{1}{|R|}\Gamma^{\mathcal{N}}_{RB}.
\end{equation}
We denote the set of all channels by $\operatorname{CPTP}$.

Channels that can be simulated by local operations and one-way classical communication are called one-way LOCC channels. An arbitrary bipartite one-way LOCC channel can be mathematically expressed as follows:
\begin{equation}
    \mathcal{L}^{\to}_{AB\to A'B'} = \sum_{x\in \mathcal{X}}\mathcal{E}^x_{A\to A'}\otimes\mathcal{F}^x_{B\to B'}, 
\end{equation}
where $\left\{\mathcal{E}^x_{A\to A'}\right\}_{x\in \mathcal{X}}$ is a set of completely positive maps such that $\sum_{x\in \mathcal{X}}\mathcal{E}^x_{A\to A'}$ is a quantum channel and $\left\{\mathcal{F}^x_{B\to B'}\right\}_{x\in \mathcal{X}}$ is a set of quantum channels.

\subsection{Secret keys and private states}

In this section, we review the notion of secret keys, private states, and the privacy test.

A $d$-dimensional tripartite key state is a classical-classical-quantum (ccq) state of the following form:
\begin{equation}
    \omega^d_{ABE} = \frac{1}{d}\sum_{i=0}^{d-1}|i\rangle\!\langle i|_A\otimes|i\rangle\!\langle i|_B\otimes \tau_E.
\end{equation}
When sharing a tripartite key state, Alice and Bob's classical symbols are perfectly correlated and uniformly random. Moreover, the eavesdropper's state is independent of Alice and Bob's systems, making it impossible for the eavesdropper to learn anything about them. The objective of any key distillation protocol is to ensure that the tripartite state shared between Alice and Bob and any possible eavesdropper is a tripartite key state.

A striking reduction from the tripartite picture of key distillation to a bipartite picture was discovered in~\cite{HHHO05,HHHO09}. In particular, any bipartite quantum state that yields $\log_2 d$ secret bits upon local measurements can be mathematically expressed in the following form:
\begin{equation}\label{eq:priv_st_defn}
    \gamma^d_{ABA'B'} = V_{ABA'B'}\!\left(\Phi^d_{AB}\otimes \tau_{A'B'}\right)V^{\dagger}_{ABA'B'},
\end{equation}
where $\Phi^d_{AB}$ is the maximally entangled state of Schmidt rank $d$, $\tau_{A'B'}$ is an arbitrary quantum state, and $V_{ABA'B'}$ is a unitary of the following form:
\begin{equation}\label{eq:twist_unitary_defn}
    V_{ABA'B'} = \sum_{i=0}^{d-1}I_A\otimes |i\rangle\!\langle i|_B\otimes U^i_{A'B'},
\end{equation}
with $\left\{U^i_{A'B'}\right\}_{i=0}^{d-1}$ being a set of arbitrary unitaries. Any state that is of the form given in~\eqref{eq:priv_st_defn} is called a private state of $\log_2 d$ secret bits. As such, the task of distilling secret keys is equivalent to the task of distilling private states from a shared bipartite state using a given set of operations, which is the set of one-way LOCC channels in this work. 

Distilling exact private states from a mixed state using one-way LOCC channels can be challenging. In fact, it is impossible to distill exact private states, even probabilistically, from commonly encountered states, such as Werner states and erased states~\cite{SW24_prob}, using one-way LOCC channels. In practice, we are often satisfied even if the distilled state is close to a private state with respect to some error tolerance $\varepsilon$. In this work, we follow~\cite{WTB17} and quantify the closeness of a bipartite state with a private state using fidelity of states, as defined below:
\begin{equation}
    F(\rho,\sigma) \coloneqq \left(\operatorname{Tr}\!\left[\sqrt{\sqrt{\sigma}\rho\sqrt{\sigma}}\right]\right)^2.
\end{equation}
This criterion of \textit{approximate} secret key distillation is also consistent with the usual notion of quantifying the error in key distillation by the trace distance between the final ccq state obtained after the protocol and an ideal tripartite key state, as argued in~\cite[Appendix C]{HHHO09}. 

One can test if a bipartite state is of the form given in~\eqref{eq:priv_st_defn} using the privacy test described by the POVM $\left\{\Pi^{\gamma}, I - \Pi^{\gamma}\right\}$~\cite{HHHLO08,HHHLO08_QP}, where
\begin{equation}\label{eq:priv_test_defn}
    \Pi^{\gamma}_{ABA'B'} \coloneqq V_{ABA'B'}\!\left(\Phi^d_{AB}\otimes I_{A'B'}\right)V^{\dagger}_{ABA'B'}
\end{equation}
and $V_{ABA'B'}$ is determined by the specific private state $\gamma^d_{ABA'B'}$ being tested for. Clearly, the state $\gamma^d_{ABA'B'}$ passes the privacy test with probability equal to one. Moreover, if for a given state $\omega_{ABA'B'}$,
\begin{equation}
    F\!\left(\omega_{ABA'B'}, \gamma^d_{ABA'B'}\right) \ge 1-\varepsilon,
    \label{eq:fid-to-priv-state}
\end{equation}
then the probability of $\omega_{ABA'B'}$ passing the privacy test is at least $1-\varepsilon$~\cite[Lemma 9]{WTB17}. That is, Eq.~\eqref{eq:fid-to-priv-state} implies that
\begin{equation}\label{eq:priv_test_pass_prob}
    \operatorname{Tr}\!\left[\Pi^{\gamma}_{ABA'B'}\omega_{ABA'B}\right] \ge 1-\varepsilon.
\end{equation}

\subsection{\texorpdfstring{$k$}{k}-Extendibility}

In this section, we briefly review the resource theory of $k$-unextendibility, which was developed in~\cite{KDWW19,KDWW21}, and is central to the results developed later in Sections~\ref{sec:one_way_key_distill} and~\ref{sec:one_way_priv_cap}.

For every integer $k\ge 2$, the resource theory of $k$-unextendibility comprises of $k$-extendible states as free states and $k$-extendible channels as free channels. 

The set of $k$-extendible states is defined as follows:
\begin{equation}
    \operatorname{Ext}_k\!\left(A\!:\!B\right) \coloneqq \left\{\begin{array}{cc}
        \sigma_{AB} \in \mathcal{S}(AB):  \\
          \exists ~\omega_{AB_{[k]}}\ge 0,\,  \operatorname{Tr}_{B_{[k]\setminus 1}}\!\left[\omega_{AB_{[k]}}\right] = \sigma_{AB},\\
          W^{\pi}_{B_{[k]}}\omega_{AB_{[k]}}\left(W^{\pi}_{B_{[k]}}\right)^{\dagger} = \omega_{AB_{[k]}} \qquad \forall \pi \in S_k
    \end{array} \right\},
\end{equation}
where $W^{\pi}_{B_{[k]}}$ is a unitary operator that permutes the systems $B_{[k]}$ according to the permutation $\pi$ in the symmetric group $S_k$.

It can be easily shown that every separable state is $k$-extendible for every $k\ge 2$. In fact, a bipartite state is separable if and only if it is $k$-extendible for every $k\ge 2$~\cite{DPS04} (see also \cite[Theorem~II.7]{CKMR08}). While testing the separability of a bipartite state is a hard problem~\cite{Gur03, Gha10}, one can test if a state is $k$-extendible for a fixed $k$ using a semidefinite program that scales polynomially with the dimension of the systems. Hence, the set of $k$-extendible states is a semidefinite relaxation of the set of separable states.

The free operations in the resource theory of $k$-unextendibility, as proposed in~\cite{KDWW19,KDWW21}, are $k$-extendible channels. A bipartite channel $\mathcal{N}_{AB\to A'B'}$ is said to be $k$-extendible if there exists a channel $\mathcal{P}_{AB_{[k]}\to A'B'_{[k]}}$ such that the following conditions are satisfied:
\begin{align}
    \operatorname{Tr}_{B'_{[k]\setminus 1}}\circ\mathcal{P}_{AB_{[k]}\to A'B'_{[k]}} &= \mathcal{N}_{AB\to A'B'}\otimes\operatorname{Tr}_{B_{[k]\setminus 1}},\label{eq:marginal_cond_k_ext}\\
    \mathcal{W}^{\pi}_{B'_{[k]}}\circ\mathcal{P}_{AB_{[k]}\to A'B'_{[k]}} &= \mathcal{P}_{AB_{[k]}\to A'B'_{[k]}} \circ\mathcal{W}^{\pi}_{B_{[k]}} \qquad \forall \pi \in S_k,\label{eq:perm_cov_cond_k_ext}
\end{align}
where $\mathcal{W}^{\pi}$ is the unitary channel, $\mathcal{W}^{\pi}(\cdot) = W^{\pi}(\cdot)\left(W^{\pi}\right)^{\dagger}$, corresponding to the permutation $\pi$ in the symmetric group $S_k$. The action of a $k$-extendible channel results in a $k$-extendible state, which justifies their treatment as free operations in the resource theory of $k$-unextendibility.

Every one-way LOCC channel is $k$-extendible for every $k\ge 2$. As such, the set of $k$-extendible channels can be viewed as a semidefinite relaxation of the set of one-way LOCC channels.

In~\cite{KDWW19, KDWW21}, the authors defined the $k$-unextendible divergence of a state, which serves as a resource monotone in the resource theory of $k$-unextendibility. Let $\mathbf{D}$ be a generalized divergence~\cite{PV10}. Then the $k$-unextendible generalized divergence of a state $\rho_{AB}$ is defined as follows:
\begin{equation}\label{eq:gen_k_unext_defn}
    \mathbf{E}_k\!\left(\rho_{AB}\right) \coloneqq \inf_{\sigma_{AB}\in \operatorname{Ext}_k(A:B)}\mathbf{D}\!\left(\rho_{AB}\middle\Vert\sigma_{AB}\right).
\end{equation}
The $k$-unextendible divergence of a state has the following properties, justifying its use as a resource monotone in the resource theory of $k$-unextendibility:
\begin{enumerate}
    \item The $k$-unextendible generalized divergence decreases monotonically under the action of bipartite $k$-extendible channels. That is,
\begin{equation}\label{eq:unext_ent_dat_proc}
    \mathbf{E}_k\!\left(\rho_{AB}\right) \geq \mathbf{E}_k\!\left(\mathcal{N}_{AB\to A'B'}\!\left(\rho_{AB}\right)\right) , 
\end{equation}
    for every $k$-extendible channel $\mathcal{N}_{AB\to A'B'}$. We refer the reader to~\cite{KDWW21} for a proof and further details.
    
    \item The $k$-unextendible generalized divergence of a $k$-extendible state is equal to the minimum value of the underlying divergence acting on an arbitrary pair of states. Consequently, the $k$-unextendible generalized divergence vanishes for $k$-extendible states. Furthermore, if the underlying divergence is faithful, then the induced $k$-unextendible divergence of a state is equal to zero if and only if the state is $k$-extendible. 
\end{enumerate}

In this work, we employ the $k$-unextendible divergence induced by the hypothesis-testing relative entropy, the $\alpha$-sandwiched R\'enyi relative entropy for $\alpha \in (1,\infty)$, and the $\alpha$-geometric R\'enyi relative entropy for $\alpha \in (1,2]$. We discuss these quantities here briefly.

\subsubsection{\texorpdfstring{$k$}{k}-Unextendible hypothesis testing divergence}

The hypothesis testing relative entropy between states $\rho$ and $\sigma$, also known as smooth-min relative entropy, is defined for a parameter $\varepsilon\in [0,1]$ as follows~\cite{BD10,BD11,WR12}:
\begin{equation}\label{eq:hypo_test_defn}
    D^{\varepsilon}_H\!\left(\rho\Vert\sigma\right) \coloneqq -\log_2 \inf_{0\le \Lambda \le I}\left\{\operatorname{Tr}\!\left[\Lambda\sigma\right]: \operatorname{Tr}\!\left[\Lambda\rho\right]\ge 1-\varepsilon\right\}.
\end{equation}
The $k$-unextendible hypothesis testing divergence is then defined as follows:

\begin{equation}\label{eq:unext_ent_hypo_test}
        E^{\varepsilon}_k\!\left(\rho_{AB}\right) = \inf_{\sigma_{AB}\in \operatorname{Ext}_k(A:B)}D^{\varepsilon}_H\!\left(\rho_{AB}\Vert\sigma_{AB}\right).
\end{equation}

The $k$-unextendible hypothesis testing divergence can be computed using a semidefinite program. See Appendix~\ref{app:SDPs} for the explicit semidefinite program.

\subsubsection{\texorpdfstring{$k$}{k}-Unextendible sandwiched R\'enyi divergence}

\label{sec:sandwich_k_unext_st}

The $\alpha$-sandwiched R\'enyi relative entropy between a state $\rho$ and a positive semidefinite operator $\sigma$ is defined for a parameter $\alpha \in \left[\frac{1}{2},1\right)\cup (1,\infty)$ as follows~\cite{MDSST13, WWY14}:
\begin{equation}
    \widetilde{D}_{\alpha}\!\left(\rho\Vert\sigma\right) \coloneqq \frac{1}{\alpha - 1}\log_2\operatorname{Tr}\!\left[\left(\sigma^{\frac{1-\alpha}{2\alpha}}\rho\sigma^{\frac{1-\alpha}{2\alpha}}\right)^{\alpha}\right].
\end{equation}
The $k$-unextendible sandwiched R\'enyi divergence of a state is then defined as follows:
\begin{equation}\label{eq:sandwich_unext_ent}
    \widetilde{E}^{\alpha}_k\!\left(\rho_{AB}\right)\coloneqq \inf_{\sigma_{AB}\in \operatorname{Ext}_k(A:B)}\widetilde{D}_{\alpha}\!\left(\rho_{AB}\Vert\sigma_{AB}\right) \qquad \forall \alpha \in \left[\frac{1}{2},1\right)\cup(1,\infty).
\end{equation}

The $k$-unextendible sandwiched R\'enyi divergence has several desirable properties. Here we note some key properties that we use in this paper and we refer the reader to~\cite{KDWW21} for further reading:
\begin{enumerate}
    \item \textbf{Subadditivity:} The $k$-unextendible sandwiched R\'enyi divergence is subadditive under tensor products for every $k\ge 2$ and every $\alpha \in \left[\frac{1}{2},1\right)\cup(1,\infty)$. That is,
    \begin{equation}
        \widetilde{E}^{\alpha}_k\!\left(\rho_{AB}\otimes \sigma_{CD}\right) \le \widetilde{E}^{\alpha}_k\!\left(\rho_{AB}\right) + \widetilde{E}^{\alpha}_k\!\left(\sigma_{CD}\right),
    \end{equation}
    where $AC:BD$ is the relevant bipartition for the state $\rho_{AB}\otimes\sigma_{CD}$.
    \item \textbf{Relation with $k$-unextendible hypothesis testing divergence: } As a straightforward consequence of the following well-known inequality~\cite[Lemma~5]{CMW16}:
    \begin{equation}
        D^{\varepsilon}_H\!\left(\rho\Vert\sigma\right) \le \widetilde{D}_{\alpha}(\rho\Vert\sigma) + \frac{\alpha}{\alpha - 1}\log_2\!\left(\frac{1}{1-\varepsilon}\right), \qquad \forall \alpha \in (1,\infty), \varepsilon \in [0,1)
    \end{equation}
    the following inequality holds for every $k\ge 2$, $\alpha \in (1,\infty)$, and $\varepsilon\in [0,1)$:
    \begin{equation}\label{eq:hypo_test_ent_le_sandwich}
        E^{\varepsilon}_k\!\left(\rho_{AB}\right) \le \widetilde{E}^{\alpha}_k(\rho_{AB}) + \frac{\alpha}{\alpha - 1}\log_2\!\left(\frac{1}{1-\varepsilon}\right).
    \end{equation}
    \item \textbf{Efficiently computable:} The $k$-unextendible sandwiched R\'enyi divergence can be efficiently computed for a fixed $k\ge 2$ and some fixed $\alpha \in \left[\frac{1}{2},1\right)\cup(1,2]$ using the results from~\cite{HSF25}. Furthermore, in the limit $\alpha \to \infty$, the $k$-unextendible sandwiched R\'enyi divergence converges to the $k$-unextendible divergence induced by the max-relative entropy~\cite{Dat09}. This quantity, denoted by $E^{\max}_k$ in~\cite{KDWW19} and~\cite{KDWW21}, can be computed using a semidefinite program, which we detail in Appendix~\ref{app:SDPs}.

\end{enumerate}

\section{Privacy test for \texorpdfstring{$k$}{k}-extendible states}\label{sec:priv_test}

In this section, we establish an upper bound on the probability with which a $k$-extendible state can pass a privacy test. We later use this bound, which we formally state in Theorem~\ref{thm:priv_pass_prob_k_ext}, to obtain limits on the one-shot, one-way distillable key of a state in Section~\ref{sec:one_way_key_distill} and the  one-shot, forward-assisted private capacity of a channel in Section~\ref{sec:one_way_priv_cap}.

Before turning our attention to arbitrary $k$-extendible states, let us first examine a special class of $k$-extendible states, which we call \emph{$k$-pure extendible} states. The notion of \emph{pure extendible} states was introduced in~\cite{ML09}, where they considered two-extendibility of states only. Here we generalize the idea to $k$-extendibility and obtain results that are analogous to~\cite[Lemma 2]{ML09} and~\cite[Corollary 3]{ML09}.

\begin{definition}[$k$-pure extendible state]
    A bipartite state $\rho_{AB}$ is said to be $k$-pure extendible if there exists a pure state $\psi_{AB_{[k]}}$ such that 
    \begin{equation}
        \operatorname{Tr}_{B_{[k]\setminus 1}}\!\left[\psi_{AB_{[k]}}\right] = \rho_{AB},
    \end{equation}
    and
    \begin{equation}
        W^{\pi}_{B_{[k]}}\psi_{AB_{[k]}}\left(W^{\pi}_{B_{[k]}}\right)^{\dagger} = \psi_{AB_{[k]}}\qquad \forall \pi \in S_k,
    \end{equation}
    where $W^{\pi}$ is the permutation operator corresponding to the permutation $\pi$ in the symmetric group~$S_k$.
\end{definition}

\begin{proposition}\label{prop:pure_k_ext_iff}
    A bipartite state is $k$-extendible if and only if it can be written as a convex combination of $k$-pure extendible states.
\end{proposition}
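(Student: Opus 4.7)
The plan is to handle the two directions of the biconditional separately, with the forward direction being a routine convex-combination argument and the reverse direction requiring the structural input of the symmetric group's representation theory.

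For the $(\Leftarrow)$ direction, suppose $\rho_{AB} = \sum_x p_x \rho^x_{AB}$ where each $\rho^x_{AB}$ is $k$-pure extendible via a pure, permutation-invariant extension $\psi^x_{AB_{[k]}}$. Then $\omega_{AB_{[k]}} \coloneqq \sum_x p_x \psi^x_{AB_{[k]}}$ is positive semidefinite, has unit trace, reduces to $\rho_{AB}$ on $AB_1$, and is permutation-invariant as a convex combination of permutation-invariant operators; hence it is a valid $k$-extension of $\rho_{AB}$.

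For the $(\Rightarrow)$ direction, I would first observe that a pure state $\psi = |\psi\rangle\langle\psi|$ on $\mathcal{H}_A \otimes \mathcal{H}_{B_{[k]}}$ is permutation-invariant if and only if $W^\pi_{B_{[k]}}|\psi\rangle = c_\pi |\psi\rangle$ for each $\pi$, with $\pi \mapsto c_\pi$ a one-dimensional unitary representation of $S_k$. Since the only such representations for $k \ge 2$ are the trivial and sign representations, $|\psi\rangle$ must lie in $\mathcal{H}_A \otimes \bigl(\operatorname{Sym}^k(\mathcal{H}_B) \oplus \Lambda^k(\mathcal{H}_B)\bigr)$. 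Consequently, the claim reduces to showing that any $k$-extendible $\rho_{AB}$ admits a permutation-invariant extension supported on this subspace; the spectral decomposition of such an extension into pure eigenvectors, followed by partial trace over $B_{[k]\setminus 1}$, then produces the desired convex decomposition of $\rho_{AB}$ into $k$-pure extendible states.

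The case $k = 2$ is immediate, since $\mathcal{H}_{B_1} \otimes \mathcal{H}_{B_2} = \operatorname{Sym}^2(\mathcal{H}_B) \oplus \Lambda^2(\mathcal{H}_B)$, so every swap-invariant extension already has the required form. For general $k \ge 3$, I would invoke the Schur--Weyl decomposition $\mathcal{H}_{B_{[k]}} \cong \bigoplus_\lambda U_\lambda \otimes V_\lambda$ to write any permutation-invariant extension as $\omega = \bigoplus_\lambda p_\lambda \, \sigma_\lambda \otimes I_{V_\lambda}/d_\lambda$, where $\sigma_\lambda$ is a state on $\mathcal{H}_A \otimes U_\lambda$. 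The principal obstacle, and where I expect the bulk of the work to lie, is to replace each block indexed by a non-one-dimensional $V_\lambda$ by a block supported in $\operatorname{Sym}^k(\mathcal{H}_B)$ while preserving the marginal on $AB_1$; I would attempt this via a purification of $\sigma_\lambda$ combined with the natural isometric embedding $\mathcal{H}_B^{\otimes k} \hookrightarrow \operatorname{Sym}^k(\mathcal{H}_B \otimes \mathbb{C}^k)$, carefully tracking the $A$-system correlations through the Schur--Weyl duality between $U_\lambda$ and $V_\lambda$ to ensure marginal preservation.
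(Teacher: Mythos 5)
Your easy direction is fine, and your observation that a pure permutation-invariant vector must transform under a one-dimensional representation of $S_k$ --- hence lie in $\mathcal{H}_A\otimes\bigl(\operatorname{Sym}^k(\mathcal{H}_B)\oplus\Lambda^k(\mathcal{H}_B)\bigr)$ --- is correct and is the right way to see what is actually at stake in the hard direction. The problem is that you do not close the reduction for $k\ge 3$, and the route you sketch cannot close it as stated: the embedding into $\operatorname{Sym}^k(\mathcal{H}_B\otimes\mathbb{C}^k)$ is the standard purification trick from de Finetti arguments, but it enlarges each $B$ system to $\mathcal{H}_B\otimes\mathbb{C}^k$, so the Bose-symmetric pure state you obtain is an extension of some state on $A(B\tilde{B})$, not a pure extension of $\rho_{AB}$ on the original spaces $\mathcal{H}_A\otimes\mathcal{H}_B^{\otimes k}$ that the definition of $k$-pure extendibility requires. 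Replacing a mixed-symmetry block $\sigma_\lambda\otimes I_{V_\lambda}/d_\lambda$ with $\dim V_\lambda\ge 2$ by something supported on $\operatorname{Sym}^k\oplus\Lambda^k$ while preserving the $AB_1$ marginal is exactly the step you flag as ``the bulk of the work,'' and it remains unproven in your write-up; the forward direction is therefore incomplete for $k\ge 3$. (Your $k=2$ argument is complete, since the swap is a single self-adjoint involution and every eigenspace of the extension splits into its $\pm 1$ eigenspaces.)

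For comparison, the paper takes a much shorter route: it spectrally decomposes a $k$-extension $\sigma_{AB_{[k]}}$ and asserts that, since $[\sigma, I_A\otimes W^\pi]=0$ for every $\pi$, there is a single eigenbasis of $\sigma$ consisting of simultaneous eigenvectors of every $W^\pi$; each eigenvector is then a $k$-pure extension. For $k=2$ this is immediate, but for $k\ge 3$ the operators $W^\pi$ do not commute with one another, so commuting with each of them individually does not by itself yield a simultaneous eigenbasis: a degenerate eigenspace of $\sigma$ may carry an irreducible representation of $S_k$ of dimension at least two (e.g.\ the mixed-symmetry sector of $(\mathbb{C}^2)^{\otimes 3}$), in which case no eigenbasis of $\sigma$ consists of $W^\pi$-eigenvectors. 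The existence of such a basis is equivalent to every eigenspace decomposing into one-dimensional irreps of $S_k$, i.e.\ to the extension being supported on $\mathcal{H}_A\otimes(\operatorname{Sym}^k\oplus\Lambda^k)$ --- precisely the statement your reduction isolates. So you have correctly identified the one step on which the proposition hinges for $k\ge 3$, but neither your proposal nor the paper's argument, read literally, supplies it; to finish, you would need to show that a $k$-extension can always be re-chosen with support in $\mathcal{H}_A\otimes(\operatorname{Sym}^k\oplus\Lambda^k)$ while preserving the $AB_1$ marginal, or else restrict the claim to $k=2$.
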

\begin{proof}
    The forward implication is trivial because a convex combination of $k$-extendible states is $k$-extendible. To see the reverse implication, let $\rho_{AB}$ be an arbitrary $k$-extendible state and let $\sigma_{AB_{[k]}}$ be a $k$-extension of $\rho_{AB}$. The permutation invariance condition on $k$-extendible states implies that 
    \begin{equation}
        \left[I_A\otimes W^{\pi}_{B_{[k]}},\sigma_{AB_{[k]}}\right] = 0 \qquad \forall \pi \in S_k.
    \end{equation}
    As such, $I_A\otimes W^{\pi}_{B_{[k]}}$ and $\sigma_{AB_{[k]}}$ share a common normal eigenbasis, say $\left\{|\psi^j\rangle_{AB_{[k]}}\right\}_{j=0}^{|A||B|^k-1}$, for all $\pi \in S_k$. We can then write 
    \begin{equation}
        \sigma_{AB_{[k]}} = \sum_{i=0}^{|A||B|^{k}-1}\lambda_j|\psi^{j}\rangle\!\langle \psi^j|_{AB_{[k]}},
    \end{equation}
    where $\lambda_j$ are probability masses.

    The eigenvalues of a permutation operator are given by the roots of identity. This is evident from the fact that all elements in the symmetric group have a finite order. That is, for every element $\pi \in S_k$, there exists a positive integer $n$ such that $\left(W^{\pi}\right)^n = I$. Since each $|\psi^j\rangle_{AB_{[k]}}$ is an eigenvector of $I_A\otimes W^{\pi}_{B_{[k]}}$,
    \begin{equation}
        W^{\pi}_{B_{[k]}}|\psi^j\rangle_{AB_{[k]}} = \omega_{j,\pi}|\psi^j\rangle_{AB_{[k]}},
    \end{equation}
    where $\omega_{j,\pi}$ is the $n^{\text{th}}$ root of identity for some positive integer $n$. As such,
    \begin{equation}
        W^{\pi}_{B_{[k]}}|\psi^j\rangle\!\langle \psi^j|_{AB_{[k]}}\left( W^{\pi}_{B_{[k]}}\right)^{\dagger} = |\omega_{j,\pi}|^2|\psi^j\rangle\!\langle \psi^j|_{AB_{[k]}} = |\psi^j\rangle\!\langle \psi^j|_{AB_{[k]}}.
    \end{equation}
    Since the last equality holds for every $\pi \in S_k$, we conclude that $\operatorname{Tr}_{B_{[k]\setminus 1}}\!\left[|\psi^i\rangle\!\langle \psi^i|_{AB_{[k]}}\right]$ is a $k$-pure extendible state for every $j \in \{0,1,\ldots,|A||B|^k-1\}$. We can then write the state $\rho_{AB}$ as
    \begin{align}
        \rho_{AB} &= \operatorname{Tr}_{B_{[k]\setminus 1}}\!\left[\sigma_{AB_{[k]}}\right]\\
        &= \sum_{j=0}^{|A||B|^{k}-1}\lambda_j\operatorname{Tr}_{B_{[k]\setminus 1}}\!\left[|\psi^{j}\rangle\!\langle \psi^j|_{AB_{[k]}}\right],
    \end{align}
    which is a convex combination of $k$-pure extendible states.
\end{proof}

\begin{theorem}\label{thm:priv_pass_prob_k_ext}
    Let $\sigma_{ABA'B'}$ be a $k$-extendible state with respect to the partition $AA'\!:\!BB'$, with $|A| = |B| = d$. Let $\left\{\Pi^{\gamma}_{ABA'B'}, I_{ABA'B'} - \Pi^{\gamma}_{ABA'B'}\right\}$ be a privacy test as defined in~\eqref{eq:priv_test_defn}. Then the probability of $\sigma_{ABA'B'}$ passing the privacy test is bounded from above as follows:
    \begin{equation}
        \operatorname{Tr}\!\left[\Pi^{\gamma}_{ABA'B'}\sigma_{ABA'B'}\right] \le \frac{1}{d} + \frac{1}{k} - \frac{1}{dk}.
    \end{equation}
\end{theorem}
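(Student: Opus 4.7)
The strategy is to reduce the privacy-test bound to the known maximum-fidelity bound for $k$-extendible states with respect to the maximally entangled state $\Phi^d_{AB}$ from~\cite{KDWW19,KDWW21}. By Proposition~\ref{prop:pure_k_ext_iff} and linearity of $\sigma\mapsto\operatorname{Tr}[\Pi^\gamma\sigma]$, it suffices to treat the case where $\sigma_{AA'BB'}$ is $k$-pure extendible, with pure witness $|\psi\rangle_{AA'(BB')_{[k]}}$ satisfying $\sigma_{AA'BB'}=\operatorname{Tr}_{(BB')_{[k]\setminus 1}}|\psi\rangle\!\langle\psi|$ and $W^{\pi}_{(BB')_{[k]}}|\psi\rangle = e^{i\theta_\pi}|\psi\rangle$ for every $\pi\in S_k$. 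Using $\Pi^\gamma = V(\Phi^d_{AB}\otimes I_{A'B'})V^\dagger$ and cyclicity of the trace, I would rewrite the probability as
\begin{equation*}
    \operatorname{Tr}\!\left[\Pi^\gamma_{ABA'B'}\sigma_{AA'BB'}\right] \;=\; \operatorname{Tr}\!\left[\Phi^d_{AB}\,\tilde\sigma_{AB}\right],\qquad \tilde\sigma_{AB}\;\coloneqq\;\operatorname{Tr}_{A'B'}\!\left[V^\dagger\sigma_{AA'BB'}V\right].
\end{equation*}

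The heart of the argument is to prove that $\tilde\sigma_{AB}$ is $k$-extendible on the bipartition $A\!:\!B$, after which the theorem follows immediately from the max-fidelity bound $F(\tilde\sigma_{AB},\Phi^d_{AB})\le \frac{1}{d}+\frac{1}{k}-\frac{1}{dk}$. To produce a $k$-extension of $\tilde\sigma_{AB}$ from the pure extension $|\psi\rangle$, I would start with the candidate
\begin{equation*}
    \tilde\sigma^{\mathrm{ext}}_{AB_{[k]}} \;\coloneqq\; \operatorname{Tr}_{A'B'_{[k]}}\!\left[\mathcal{U}^\dagger\,|\psi\rangle\!\langle\psi|\,\mathcal{U}\right],\qquad \mathcal{U}\;=\;V_1 V_2\cdots V_k,
\end{equation*}
where $V_i$ is the twisting unitary supported on $AB_iA'B'_i$ acting as the original $V$ on the $i$-th copy. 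Using the fact that $V_j$ acts trivially on $B_iB'_i$ for $j\ne i$, a direct partial-trace computation shows that the marginal on $AB_1$ equals $\tilde\sigma_{AB_1}$. The obstruction is that the $V_i$ all act non-trivially on the common system $A'$ and therefore do not commute, so $\tilde\sigma^{\mathrm{ext}}_{AB_{[k]}}$ is not manifestly permutation-symmetric on $B_{[k]}$. My plan is to repair this by symmetrizing over orderings, replacing the ordered product with
\begin{equation*}
    \tilde\sigma^{\mathrm{ext}}_{AB_{[k]}} \;=\; \frac{1}{k!}\sum_{\tau\in S_k}\operatorname{Tr}_{A'B'_{[k]}}\!\left[V_\tau^\dagger\,|\psi\rangle\!\langle\psi|\,V_\tau\right],\qquad V_\tau\;=\;V_{\tau(1)}\cdots V_{\tau(k)},
\end{equation*}
and then verifying, by pushing the partial traces through the non-commuting $V_i$'s using unitarity of each $U^i_{A'B'_i}$ together with the permutation invariance of $|\psi\rangle$, that every ordered summand produces the same marginal $\tilde\sigma_{AB_i}$ on each $AB_i$.

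This permutation-symmetrization step will be the main obstacle: tracking the $A'$-entangled Kraus-like structure through the non-commuting twistings is delicate, and one must confirm both the marginal condition and positivity of the resulting extension. Should this route not yield the cleanest derivation, the fallback is to invoke the permutation covariance $W^\pi\Pi^\gamma_{AB_iA'B'_i}(W^\pi)^\dagger=\Pi^\gamma_{AB_{\pi(i)}A'B'_{\pi(i)}}$, together with the permutation symmetry of $\sigma_{AA'(BB')_{[k]}}$, to write
\begin{equation*}
    \operatorname{Tr}[\Pi^\gamma\sigma] \;=\; \operatorname{Tr}\!\left[\frac{1}{k}\sum_{i=1}^{k}\Pi^\gamma_{AB_iA'B'_i}\,\sigma_{AA'(BB')_{[k]}}\right] \;\le\; \left\|\frac{1}{k}\sum_{i=1}^{k}\Pi^\gamma_{AB_iA'B'_i}\right\|_\infty
\end{equation*}
and then prove that this twisted operator norm coincides with the untwisted one $\left\|\frac{1}{k}\sum_i \Phi^d_{AB_i}\otimes I_{A'B'_i}\right\|_\infty = \frac{1}{d}+\frac{1}{k}-\frac{1}{dk}$, which is the standard bound underlying the $k$-extendible max-fidelity result. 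In this route the obstacle becomes showing that conjugating each untwisted projector by its own $V_i$ cannot inflate the operator norm of the sum beyond its untwisted value.
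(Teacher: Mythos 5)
There is a genuine gap: both of your routes stop exactly where the theorem's content begins, and the concrete construction in your main route does not work. For the candidate extension $\operatorname{Tr}_{A'B'_{[k]}}[\mathcal{U}^{\dagger}\psi\,\mathcal{U}]$ with $\mathcal{U}=V_1\cdots V_k$, the claim that the marginal on $AB_1$ equals $\tilde\sigma_{AB_1}$ is false. Writing $V_2\cdots V_k=\sum_m |m\rangle\!\langle m|_A\otimes \tilde U^m$ with $\tilde U^m$ supported on $A'B'_2\cdots B'_k$, the conjugation by this operator cannot be cancelled under the partial trace because it acts nontrivially on the retained system $A$ (through the control) and on $A'$, which it shares with $V_1$. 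Concretely, the $(m,n)$ off-diagonal blocks in the $A$ computational basis of the $AB_1$ marginal acquire factors of the form $\operatorname{Tr}[(\tilde U^m)^{\dagger}(\cdot)\tilde U^n]\neq\operatorname{Tr}[\cdot]$ for $m\neq n$; only the diagonal blocks survive intact, and it is precisely the off-diagonal blocks that enter $\operatorname{Tr}[\Phi^d_{AB}\tilde\sigma_{AB}]$. Symmetrizing over orderings restores permutation invariance but not the marginal condition. More fundamentally, the premise of this route---that the privacy-squeezed state $\tilde\sigma_{AB}=\operatorname{Tr}_{A'B'}[V^{\dagger}\sigma V]$ of a $k$-extendible state is itself $k$-extendible---is doubtful and is not what the theorem asserts: the squeezing map is controlled on $A$ and acts jointly on $A'B'$, hence is a global operation across the $AA'\!:\!BB'$ cut, and the analogous preservation statement for PPT is famously false (this is exactly what makes PPT states with nonzero key possible). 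The theorem only says the squeezed state obeys the same \emph{fidelity} bound as $k$-extendible states, which is strictly weaker than $k$-extendibility.

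Your fallback route is much closer to the actual proof, but the step you defer---that $\bigl\lVert\frac{1}{k}\sum_i \Pi^{\gamma}_{AB_iA'B'_i}\bigr\rVert_{\infty}$ equals the untwisted value---is the entire theorem, and it does not follow from any generic principle: conjugating each projector in a sum by its own unitary can certainly inflate the operator norm (two rank-one projectors with overlap $c$ have $\lVert P_1+P_2\rVert=1+|c|$, which individual rotations can push to $2$). What saves the day here is the specific control structure of the twisting: the paper's proof shows, via the Kronecker-delta collapse in the computation of the cross terms $\beta_{ij}$ in~\eqref{eq:beta_ij_defn}--\eqref{eq:beta_ij_ub}, that the relevant overlaps between vectors in the ranges of $\Pi^{\gamma}_i$ and $\Pi^{\gamma}_j$ remain bounded by $1/d$ after twisting. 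The paper then avoids a full operator-norm argument by applying Uhlmann's theorem and Cauchy--Schwarz to a single well-chosen vector $|\Psi\rangle=\sum_i(S_{B_1B_i}\otimes S_{B'_1B'_i})V|\Phi^d\rangle|\psi^{\tau}\rangle$ whose squared norm is $k+\frac{k^2-k}{d}$. Until you supply that overlap computation (or an equivalent), neither route constitutes a proof.
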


\begin{proof}
    We first prove the statement of the theorem for $k$-pure extendible states and then use Proposition~\ref{prop:pure_k_ext_iff} to generalize it to arbitrary $k$-extendible states.
    
    Let $\sigma_{ABA'B'}$ be a $k$-extendible state with respect to the partition $AA'\!:\!BB'$, with $|A| = |B| = d$. Let $\left\{U^i_{A'B'}\right\}_{i=0}^{d-1}$ be a set of unitary operators that determines the privacy test. That is,
    \begin{equation}
        \Pi^{\gamma}_{ABA'B'} = V_{ABA'B'}\!\left(\Phi^d_{AB}\otimes I_{A'B'}\right)V^{\dagger}_{ABA'B'},
    \end{equation}
    where 
    \begin{equation}\label{eq:twisting_unitary_k_ext_proof}
        V_{ABA'B'} \coloneqq \sum_{i=0}^{d-1}|i\rangle\!\langle i|_A\otimes I_B\otimes U^i_{A'B'}.
    \end{equation}
    Note that
    \begin{align}
        \operatorname{Tr}\!\left[\Pi^{\gamma}_{ABA'B'}\sigma_{ABA'B'}\right] &= \operatorname{Tr}\!\left[V_{ABA'B'}\!\left(\Phi^d_{AB}\otimes I_{A'B'}\right)V^{\dagger}_{ABA'B'}\sigma_{ABA'B'}\right]\\
        &= \operatorname{Tr}\!\left[\!\left(\Phi^d_{AB}\otimes I_{A'B'}\right)V^{\dagger}_{ABA'B'}\sigma_{ABA'B'}V_{ABA'B'}\right]\\
        &= \operatorname{Tr}\!\left[\Phi^d_{AB}\operatorname{Tr}_{A'B'}\!\left[V^{\dagger}_{ABA'B'}\sigma_{ABA'B'}V_{ABA'B'}\right]\right]\\
        &= F\!\left(\Phi^d_{AB},\operatorname{Tr}_{A'B'}\!\left[V^{\dagger}_{ABA'B'}\sigma_{ABA'B'}V_{ABA'B'}\right]\right),\label{eq:priv_pass_prob_fid_k}
    \end{align}
    where the second equality follows from the cyclicity of trace and the final equality follows from the fact that the fidelity between a pure state $\psi$ and a mixed state $\sigma$ is equal to $\operatorname{Tr}\!\left[\psi\sigma\right]$. Let $\psi^{\sigma}_{AB_{[k]}A'B'_{[k]}}$ be a $k$-pure extension of $\sigma_{ABA'B'}$. We know from Uhlmann's theorem that there exists a state $\psi^{\tau}_{A'B'_{[k]}B_{[k]\setminus 1}}$ such that
    \begin{equation}\label{eq:uhlmanns_app}
        F\!\left(\Phi^d_{AB},\operatorname{Tr}_{A'B'}\!\left[V^{\dagger}_{ABA'B'}\sigma_{ABA'B'}V_{ABA'B'}\right]\right) = \left|\left(\langle \Phi^d|_{AB}\otimes \langle \psi^{\tau}|_{A'B'_{[k]}B_{[k]\setminus 1}}\right)V^{\dagger}_{ABA'B'}|\psi^{\sigma}\rangle\right|^2.
    \end{equation}
    
    Let $S_{B_1B_2}$ be the swap operator on systems $B_1$ and $B_2$. Now consider the following vector:
    \begin{align}
        |\Psi\rangle &\coloneqq \sum_{i=1}^k \left(S_{B_1B_i}\otimes S_{B'_1B'_i}\right)V_{AB_1A'B'_1}\left(|\Phi^d\rangle_{AB_1}\otimes |\psi^{\tau}\rangle_{A'B'_{[k]}B_{[k]\setminus 1}}\right)\\
        &= \sum_{i=1}^k V_{AB_iA'B'_i}\left(S_{B_1B_i}\otimes S_{B'_1B'_i}\right)\left(|\Phi^d\rangle_{AB_1}\otimes |\psi^{\tau}\rangle_{A'B'_{[k]}B_{[k]\setminus 1}}\right)\\
        &= \sum_{i=1}^k V_{AB_iA'B'_i}\left(|\Phi^d\rangle_{AB_i}\otimes |\varphi^{i}\rangle_{A'B'_{[k]}B_{[k]\setminus i}}\right),
    \end{align}
    where $\varphi^i$ is some normalized state vector, the details of which are not necessary for the proof, and for this reason we have also suppressed the dependence on the symbol $\tau$ in the notation.
    
    Since $|\psi^{\sigma}\rangle$ is a $k$-pure extension of $\sigma_{ABA'B'}$, 
    \begin{equation}
        \left(S_{B_1B_i}\otimes S_{B_1B_i}\right)|\psi^{\sigma}\rangle = \left(S_{B_1B_i}\otimes S_{B'_1B'_i}\right)^{\dagger}|\psi^{\sigma}\rangle = |\psi^{\sigma}\rangle \qquad \forall i\in [k],
    \end{equation}
    where the first equality follows from the fact that the swap operator is self-adjoint. Clearly,
    \begin{align}
        \langle \Psi|\psi^{\sigma}\rangle &= \sum_{i=1}^k\left(\langle \Phi^d|\otimes \langle \psi^{\tau}|\right)V^{\dagger}_{ABA'B'}\left(S_{B_1B_i}\otimes S_{B'_1B'_i}\right)^{\dagger}|\psi^{\sigma}\rangle\\
        &= k\left(\langle \Phi^d|\otimes \langle \psi^{\tau}|\right)V^{\dagger}_{ABA'B'}|\psi^{\sigma}\rangle\label{eq:ext_constr_overlap}
    \end{align}
    Recall from~\eqref{eq:priv_pass_prob_fid_k} and~\eqref{eq:uhlmanns_app} that $|\langle \Psi|\psi^{\sigma}\rangle|^2$ is proportional to the quantity that we wish to bound from above. To achieve this goal we use the Cauchy--Schwarz inequality,
    \begin{equation}\label{eq:CS_inequality}
         |\langle \Psi|\psi^{\sigma}\rangle|^2 \le \langle \psi^{\sigma}|\psi^{\sigma}\rangle \langle \Psi|\Psi\rangle = \langle \Psi|\Psi\rangle,
    \end{equation}
    where the equality follows because $|\psi^{\sigma}\rangle$ is a normalized state vector.

    Let us now evaluate $\langle \Psi|\Psi\rangle$. Consider the following inner product:
    \begin{equation}\label{eq:beta_ij_defn}
        \beta_{ij} \coloneqq \left(\langle\varphi^i|_{A'B'_{[k]}B_{[k]\setminus i}}\otimes\langle \Phi^d|_{AB_i}\right)V^{\dagger}_{AB_iA'B'_i}V_{AB_jA'B'_j}\left(|\Phi^d\rangle_{AB_j}\otimes|\varphi^j\rangle_{A'B'_{[k]}B_{[k]\setminus j}}\right).
    \end{equation}
    It can be easily verified that $\langle \Psi|\Psi\rangle = \sum_{i,j=1}^k\beta_{ij}$ and $\beta_{ii} = 1$ for every $i\in [k]$. We can expand the state $|\varphi^i\rangle$ in the computational basis and write
    \begin{equation}\label{eq:phi_i_SD}
    |\varphi^{i}\rangle_{A'B'_{[k]}B_{[k]\setminus i}} = \sum_{\ell = 0}^{d-1} \lambda^{\ell}_{i,j}|\ell\rangle_{B_j}|\xi^{\ell}_{i,j}\rangle_{A'B'_{[k]}B_{[k]\setminus \{i,j\}}},
    \end{equation}
     where $\left\{|\xi^{\ell}_{i,j}\rangle\right\}_{\ell = 0}^{d-1}$ is a set of normalized state vectors  and $\left\{\lambda^{\ell}_{i,j}\right\}_{\ell=0}^{d-1}$ is a set of complex numbers satisfying the normalization condition $\sum_{\ell=0}^{d-1}\left|\lambda_{i,j}^{\ell}\right|^2 = 1$ for every $i,j\in [k]$. Expanding~\eqref{eq:beta_ij_defn} for $i\neq j$ using~\eqref{eq:phi_i_SD}, we have
    \begin{equation}\label{eq:beta_expanded}
        \beta_{ij}
        = \sum_{\ell,p=0}^{d-1}\overline{\lambda^{\ell}_{i,j}}\lambda^{p}_{j,i}\langle \xi^{\ell}_{i,j}|\langle \ell|_{B_j}\langle \Phi^d|_{AB_i}V^{\dagger}_{AB_iA'B'_i}V_{AB_jA'B'_j}|\Phi^d\rangle_{AB_j}|p\rangle_{B_i}|\xi^{p}_{j,i}\rangle.
    \end{equation}
    Using the definition of the twisting unitary from~\eqref{eq:twist_unitary_defn}, we can write
    \begin{align}
        &V_{AB_jA'B'_j}|\Phi^d\rangle_{AB_j}|p\rangle_{B_i}|\xi^{p}_{j,i}\rangle_{A'B'_{[k]}B_{[k]\setminus \{i,j\}}}\notag\\ &= \sum_{m-0}^{d-1}|m\rangle\!\langle m|_A\otimes I_{B_j}\otimes U^m_{A'B'_j}|\Phi^d\rangle_{AB_j}|p\rangle_{B_i}|\xi^p_{j,i}\rangle_{A'B'_{[k]}B_{[k]\setminus \{i,j\}}}\\
        &= \sum_{m=0}^{d-1}|m\rangle\!\langle m|_A\left(\frac{1}{\sqrt{d}}\sum_{n=0}^{d-1}|n\rangle_A|n\rangle_{B_j}\right)|p\rangle_{B_i}U^m_{A'B'_j}|\xi^p_{j,i}\rangle_{A'B'_{[k]}B_{[k]\setminus \{i,j\}}}\\
        &= \frac{1}{\sqrt{d}}\sum_{m=0}^{d-1}|m\rangle_A|m\rangle_{B_j}|p\rangle_{B_i}U^m_{A'B'_j}|\xi^p_{j,i}\rangle_{A'B'_{[k]}B_{[k]\setminus \{i,j\}}}.
    \end{align}
    Substituting the above equality into~\eqref{eq:beta_expanded}, we can write
    \begin{align}
        \beta_{ij} &= \frac{1}{d}\sum_{m,n,\ell,p=0}^{d-1}\overline{\lambda^{\ell}_{i,j}}\lambda^p_{j,i}\langle \xi^{\ell}_{i,j}|\left(U^n_{A'B'_i}\right)^{\dagger}\langle \ell|_{B_j}\langle n|_{B_i}\langle n|m\rangle_A|m\rangle_{B_j}|p\rangle_{B_i}U^m_{A'B'_j}|\xi^p_{j,i}\rangle\\
        &= \frac{1}{d}\sum_{m,n,\ell,p=0}^{d-1}\overline{\lambda^{\ell}_{i,j}}\lambda^p_{j,i}\langle \xi^{\ell}_{i,j}|\left(U^n_{A'B'_i}\right)^{\dagger}U^m_{A'B'_j}|\xi^p_{j,i}\rangle\delta_{m,n}\delta_{n,p}\delta_{\ell,m}\\
        &= \frac{1}{d}\sum_{m=0}^{d-1}\overline{\lambda^{m}_{i,j}}\lambda^m_{j,i} \langle \xi^m_{i,j}|\left(U^m_{A'B'_i}\right)^{\dagger}U^m_{A'B'_j}|\xi^m_{j,i}\rangle.  
    \end{align}
    Furthermore,
    \begin{align}
        |\beta_{ij}| &\le \frac{1}{d}\sum_{m=0}^{d-1}\overline{\lambda^{m}_{i,j}}\lambda^m_{j,i}\left|\langle \xi^m_{i,j}|\left(U^m_{A'B'_i}\right)^{\dagger}U^m_{A'B'_j}|\xi^m_{j,i}\rangle\right|\\
        &\le \frac{1}{d}\sum_{m=0}^{d-1}\overline{\lambda^{m}_{i,j}}\lambda^m_{j,i},\\
        &\le \frac{1}{d}\left(\sum_{m=0}^{d-1}\left|\lambda^m_{i,j}\right|^2\right)\left(\sum_{m=0}^{d-1}\left|\lambda^m_{j,i}\right|^2\right)\\
        &= \frac{1}{d},\label{eq:beta_ij_ub}
    \end{align}
    where the first inequality follows from the triangle inequality, the second inequality follows from the fact that $U^m|\xi^m\rangle$ is a normalized state vector and the absolute value of its overlap with another state vector is less than or equal to one, the third inequality follows from Cauchy--Schwarz inequality, and the equality follows from the normalization condition of the state vector given in~\eqref{eq:phi_i_SD}. Using the triangle inequality once again, we arrive at the following:
    \begin{align}
        |\langle\Psi|\Psi\rangle| &= \left|\sum_{i,j=0}^{k-1}\beta_{ij}\right|\\
        &\le \sum_{i,j=0}^{k-1}|\beta_{ij}|\\
        &\le \sum_{i=0}^{k-1}|\beta_{ii}| + \sum_{\substack{i,j=0,\\i\neq j}}^{k-1}|\beta_{ij}|\\
        &\le k + \frac{1}{d}(k^2-k),
    \end{align}
    where the final inequality follows from~\eqref{eq:beta_ij_ub} and the fact that $\beta_{ii} = 1$ for every $i\in \{0,1,\ldots,k-1\}$.  Now using~\eqref{eq:CS_inequality} and~\eqref{eq:ext_constr_overlap}, we have
    \begin{equation}
        k^2|\langle\Phi^d\otimes \psi^{\tau}|V^{\dagger}_{ABA'B'}|\psi^{\sigma}\rangle|^2 \le \langle\Psi|\Psi\rangle\\
        \le k\left(1 + \frac{k-1}{d}\right).
    \end{equation}
    Substituting the above inequality into~\eqref{eq:uhlmanns_app} and using~\eqref{eq:priv_pass_prob_fid_k}, we arrive at the following inequality:
    \begin{equation}
        \operatorname{Tr}\!\left[\Pi^{\gamma}_{ABA'B'}\sigma_{ABA'B'}\right] \le \frac{1}{d} + \frac{1}{k} - \frac{1}{dk},
    \end{equation}
    which holds for every $k$-pure extendible state $\sigma_{ABA'B'}$ and every privacy test $\left\{\Pi^{\gamma}, I - \Pi^{\gamma}\right\}$ when the dimension of each key system is equal to $d$. Finally, since every $k$-extendible state can be written as a convex combination of $k$-pure extendible states, as stated in Proposition~\ref{prop:pure_k_ext_iff}, we conclude that the statement of the theorem holds for every $k$-extendible state.
\end{proof}

\section{Limits on one-way secret-key distillation from states}\label{sec:one_way_key_distill}

In this section, we obtain an SDP computable upper bound on the number of secret bits that can be distilled from an arbitrary bipartite state in the one-shot regime  using one-way LOCC channels.

Let us begin by defining the quantity of interest, which is the one-shot, one-way distillable key of a bipartite state. There are several ways to quantify the error in a key distillation protocol, which leads to different definitions of the one-shot, one-way distillable key of a state (see for example ~\cite{RR12,KKGW21}). In this work, we use the error criterion from~\cite{WTB17}.

\begin{definition}\label{def:dd_1shot_key_defn}
    The one-shot, one-way distillable key of a state is defined as follows:
    \begin{equation}\label{eq:dd_1shot_key_defn}
        K^{\varepsilon,\to}\!\left(\rho_{AB}\right) \coloneqq \sup_{\substack{d\in \mathbb
        N,\\ \gamma^d_{A'B'A''B''},\\ \mathcal{L}^{\to }\in \operatorname{1WL}}}\left\{\log_2d: F\!\left(\mathcal{L}^{\to}_{AB\to A'B'A''B''}\!\left(\rho_{AB}\right), \gamma^d_{A'B'A''B''}\right)\ge 1-\varepsilon\right\},
    \end{equation}
    where the supremum is over every $d\in \mathbb{N}$, private state $\gamma^d_{A'B'A''B''}$, and one-way LOCC channel $\mathcal{L}^{\to}_{AB\to A'B'A''B''}$.
\end{definition}

\subsection{Upper bounds on the one-shot, one-way distillable key of a state}

In this section, we obtain an upper bound on the one-shot, one-way distillable key of a state using the $k$-unextendible hypothesis testing divergence. 

\begin{theorem}\label{theo:dd_1shot_key_bnd_k_ext}
    Fix $k\ge 2$ and $\varepsilon \in [0,1]$. If $E^{\varepsilon}_{k}\!\left(\rho_{AB}\right) \le \log_2k$, then the one-shot, one-way distillable key of a state $\rho_{AB}$ is bounded from above by the following quantity:
    \begin{equation}\label{eq:dd_1shot_key_ub_k_ext}
        K^{\varepsilon,\to}\!\left(\rho_{AB}\right) \le -\log_2\!\left(2^{-E^{\varepsilon}_{k}\!\left(\rho_{AB}\right)} - \frac{1}{k}\right) + \log_2\!\left(\frac{k-1}{k}\right).
    \end{equation}
\end{theorem}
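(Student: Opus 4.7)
The plan is to combine Theorem~\ref{thm:priv_pass_prob_k_ext} with the data-processing inequality for the hypothesis testing relative entropy and the fact that one-way LOCC channels are $k$-extendible. The essential idea is to use the privacy test as a concrete feasible point inside the SDP defining $D^{\varepsilon}_H$, turning the privacy-testing bound for $k$-extendible states into an upper bound on the key.

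First, I would fix $d \in \mathbb{N}$ together with a one-way LOCC channel $\mathcal{L}^{\to}_{AB \to A'B'A''B''}$ and a private state $\gamma^d_{A'B'A''B''}$ witnessing $d$ in the supremum of~\eqref{eq:dd_1shot_key_defn}, so that $F(\mathcal{L}^{\to}(\rho_{AB}), \gamma^d_{A'B'A''B''}) \ge 1-\varepsilon$. By the fidelity-to-privacy-test implication in~\eqref{eq:priv_test_pass_prob}, the output state $\omega \coloneqq \mathcal{L}^{\to}(\rho_{AB})$ passes the privacy test $\{\Pi^\gamma, I - \Pi^\gamma\}$ with probability at least $1-\varepsilon$. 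In particular, $\Pi^\gamma$ is a feasible point for the infimum defining $D^\varepsilon_H$ in~\eqref{eq:hypo_test_defn}.

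Next, I would pick an arbitrary $\sigma_{AB} \in \operatorname{Ext}_k(A\!:\!B)$. Since every one-way LOCC channel is $k$-extendible, $\mathcal{L}^{\to}(\sigma_{AB})$ is a $k$-extendible state on the $AA''\!:\!BB''$ partition with key-system dimension $d$, and Theorem~\ref{thm:priv_pass_prob_k_ext} gives $\operatorname{Tr}[\Pi^\gamma \mathcal{L}^{\to}(\sigma_{AB})] \le \tfrac{1}{d} + \tfrac{1}{k} - \tfrac{1}{dk}$. Using $\Pi^\gamma$ as a feasible test in the SDP for $D^\varepsilon_H(\omega \Vert \mathcal{L}^{\to}(\sigma_{AB}))$ and then invoking data-processing of $D^\varepsilon_H$ under $\mathcal{L}^{\to}$ yields
\begin{equation}
D^\varepsilon_H(\rho_{AB} \Vert \sigma_{AB}) \;\ge\; D^\varepsilon_H(\omega \Vert \mathcal{L}^{\to}(\sigma_{AB})) \;\ge\; -\log_2\!\left(\tfrac{1}{d} + \tfrac{1}{k} - \tfrac{1}{dk}\right).
\end{equation}
Taking the infimum over $\sigma_{AB} \in \operatorname{Ext}_k(A\!:\!B)$ gives $E^\varepsilon_k(\rho_{AB}) \ge -\log_2(\tfrac{1}{d} + \tfrac{1}{k} - \tfrac{1}{dk})$, equivalently $2^{-E^\varepsilon_k(\rho_{AB})} \le \tfrac{1}{d} + \tfrac{1}{k} - \tfrac{1}{dk}$.

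Finally, I would isolate $d$ by the algebraic rearrangement
\begin{equation}
2^{-E^\varepsilon_k(\rho_{AB})} - \tfrac{1}{k} \;\le\; \tfrac{1}{d}\cdot\tfrac{k-1}{k},
\end{equation}
which, under the hypothesis $E^\varepsilon_k(\rho_{AB}) \le \log_2 k$ ensuring the left-hand side is nonnegative, inverts to $d \le \tfrac{k-1}{k}\big(2^{-E^\varepsilon_k(\rho_{AB})} - \tfrac{1}{k}\big)^{-1}$; taking $\log_2$ and then the supremum over $d$ yields exactly~\eqref{eq:dd_1shot_key_ub_k_ext}. The conceptual obstacle has already been dispatched by Theorem~\ref{thm:priv_pass_prob_k_ext}, so the remaining work is essentially bookkeeping; the one step where I would be careful is verifying that applying the privacy test to $\mathcal{L}^{\to}(\sigma_{AB})$ is legitimate in Theorem~\ref{thm:priv_pass_prob_k_ext}, which requires the $k$-extendibility partition to line up with the $AA''\!:\!BB''$ key/shield bipartition on which the privacy test acts; this follows because $\mathcal{L}^{\to}$ respects the $A\!:\!B$ split and thus maps $k$-extendibility across $A\!:\!B$ to $k$-extendibility across $AA''\!:\!BB''$.
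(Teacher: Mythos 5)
Your proposal is correct and follows essentially the same route as the paper's proof: apply the privacy test as a feasible measurement in the hypothesis-testing SDP, bound its acceptance probability on $\mathcal{L}^{\to}(\sigma_{AB})$ via Theorem~\ref{thm:priv_pass_prob_k_ext} using the fact that one-way LOCC preserves $k$-extendibility, invoke data processing, take the infimum over $\sigma_{AB}\in\operatorname{Ext}_k(A\!:\!B)$, and rearrange under the hypothesis $E^{\varepsilon}_k(\rho_{AB})\le\log_2 k$. The only quibble is notational: the relevant bipartition for the output state is $A'A''\!:\!B'B''$ rather than $AA''\!:\!BB''$, but your reasoning there is exactly right.
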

\begin{proof}
    Let $\mathcal{L}^{\to}_{AB\to A'B'A''B''}$ be a one-way LOCC channel, and let $\gamma^d_{A'B'A''B''}$ be a private state such that
    \begin{equation}
        F\!\left(\mathcal{L}^{\to}\!\left(\rho_{AB}\right), \gamma^d_{A'B'A''B''}\right) \ge 1-\varepsilon.
    \end{equation}
    Then we know from~\eqref{eq:priv_test_pass_prob} that
    \begin{equation}
        \operatorname{Tr}\!\left[\Pi^{\gamma}_{A'B'A''B''}\mathcal{L}^{\to}(\rho_{AB})\right] \ge 1-\varepsilon.
    \end{equation}

    Let $\sigma_{AB}$ be a $k$-extendible state. Then $\mathcal{L}^{\to}_{AB\to A'B'A''B''}(\sigma_{AB})$ is also a $k$-extendible state since a one-way LOCC channel preserves the $k$-extendibility of a state. Consequently, 
    \begin{equation}
        \operatorname{Tr}\!\left[\Pi^{\gamma}_{A'B'A''B''}\mathcal{L}^{\to}(\sigma_{AB})\right] \le\frac{1}{d} +\frac{1}{k} - \frac{1}{dk} 
    \end{equation}
    as per Theorem~\ref{thm:priv_pass_prob_k_ext}.
    
    Recall the definition of hypothesis testing relative entropy from~\eqref{eq:hypo_test_defn}. Since $ \Pi^{\gamma}_{A'B'A''B''}$ is a valid measurement operator, it follows that
    \begin{equation}
        D^{\varepsilon}_H\!\left(\mathcal{L}^{\to}\!\left(\rho_{AB}\right)\middle \Vert\mathcal{L}^{\to}\!\left(\sigma_{AB}\right)\right) \ge -\log_2\!\left(\frac{1}{d} + \frac{1}{k} - \frac{1}{dk}\right).
    \end{equation}
    Furthermore, the data-processing inequality for the hypothesis testing relative entropy implies that
    \begin{equation}
        D^{\varepsilon}_H\!\left(\rho_{AB}\middle \Vert\sigma_{AB}\right)\ge D^{\varepsilon}_H\!\left(\mathcal{L}^{\to}\!\left(\rho_{AB}\right)\middle \Vert\mathcal{L}^{\to}\!\left(\sigma_{AB}\right)\right) \ge -\log_2\!\left(\frac{1}{d} + \frac{1}{k} - \frac{1}{dk}\right).
    \end{equation}
    Since the above inequality holds for every $k$-extendible state $\sigma_{AB}$, we conclude that
    \begin{equation}
        E^{\varepsilon}_{k}\!\left(\rho_{AB}\right) \ge -\log_2\!\left(\frac{1}{d} + \frac{1}{k} - \frac{1}{dk}\right),
    \end{equation}
    where $E^{\varepsilon}_k(\cdot)$ is defined in~\eqref{eq:unext_ent_hypo_test}. If $E^{\varepsilon}_{k}\!\left(\rho_{AB}\right) \le \log_2k$, we can rearrange the above inequality to arrive at the following (see~\cite[Remark~3]{KDWW21} for more mathematical details):
    \begin{equation}
        \log_2 d \le  \log_2\!\left(\frac{k-1}{k}\right)- \log_2\!\left(2^{-E^{\varepsilon}_{k}\!\left(\rho_{AB}\right)}-\frac{1}{k}\right).
    \end{equation}
    Note that the above inequality holds for every positive integer $d$ such that $F\!\left(\mathcal{L}^{\to}(\rho_{AB}),\gamma^d_{A'B'A''B''}\right)\ge 1-\varepsilon$ for some private state $\gamma^d_{A'B'A''B''}$ and some one-way LOCC channel $\mathcal{L}^{\to}_{AB\to A'B'A''B''}$. Therefore, by definition of the one-shot, one-way distillable key of a state,
    \begin{equation}\label{eq:dd_key_ub_sdp_computable}
        K^{\varepsilon,\to}\!\left(\rho_{AB}\right) \le \log_2\!\left(\frac{k-1}{k}\right)- \log_2\!\left(2^{-E^{\varepsilon}_{k}\!\left(\rho_{AB}\right)}-\frac{1}{k}\right).
    \end{equation}
    This concludes the proof.
\end{proof}

\medskip

Recall that the set of $k$-extendible states converges to the set of separable states as $k\to \infty$. As such,
\begin{equation}
    \lim_{k\to \infty}E^{\varepsilon}_k\!\left(\rho_{AB}\right) = E^{\varepsilon}_R\!\left(\rho_{AB}\right),
\end{equation}
where $E^{\varepsilon}_R(\rho)$ is the hypothesis-testing relative entropy of entanglement of the state $\rho$ defined as~\cite{BD11}:
\begin{equation}
    E^{\varepsilon}_R\!\left(\rho_{AB}\right) \coloneqq \inf_{\sigma \in \operatorname{SEP}(A:B)}D^{\varepsilon}_H\!\left(\rho_{AB}\Vert\sigma_{AB}\right)
\end{equation}
with $\operatorname{SEP}(A\!:\!B)$ being the set of states that are separable across the bipartition $A\!:\!B$. 

One can verify that, in the limit $k\to \infty$, Theorem~\ref{theo:dd_1shot_key_bnd_k_ext} implies that
\begin{equation}\label{eq:hypo_test_ent_key_ub}
    K^{\varepsilon,\to}\!\left(\rho_{AB}\right) \le E^{\varepsilon}_R\!\left(\rho_{AB}\right).
\end{equation}
The hypothesis-testing relative entropy of entanglement of a state is known to be an upper bound on the one-shot distillable key of the state, even when both Alice and Bob can communicate with each other publicly~\cite{WTB17}. Naturally, the upper bound on the one-shot distillable key of a state from~\cite{WTB17} implies that the hypothesis-testing relative entropy of entanglement is also an upper bound on the one-shot, one-way distillable key of the state, which is what we have recovered in~\eqref{eq:hypo_test_ent_key_ub}.

\subsection{Upper bounds on the \texorpdfstring{$n$}{n}-shot, one-way distillable key of a state}

We are often interested in distilling secret keys from multiple copies of a state, which is called the $n$-shot, one-way distillable key of the state, with $n$ equal to the number of copies of the state  used in the distillation protocol.  In principle, one can compute the $n$-shot one-way distillable key of a state $\rho_{AB}$ by simply computing the bound in Theorem~\ref{theo:dd_1shot_key_bnd_k_ext} for  $\rho_{AB}^{\otimes n}$, but the time complexity of the semidefinite program (SDP) scales exponentially with $n$, making such a computation impractical. One can possibly reduce the time complexity of this computation to $O(\text{poly}(n))$ by following the approach of~\cite{FST22}. Such a reduction was found in~\cite{SNW25} specifically in the context of $k$-extendibility. 

Here we  relax the bound in Theorem~\ref{theo:dd_1shot_key_bnd_k_ext} to obtain a single-letter upper bound on the $n$-shot, one-way distillable key of a state by considering the $k$-unextendible sandwiched R\'enyi divergence. 

Recall from~\eqref{eq:hypo_test_ent_le_sandwich} that
\begin{equation}
    E^{\varepsilon}_k\!\left(\rho_{AB}\right) \le \widetilde{E}^{\alpha}_k\!\left(\rho_{AB}\right) + \frac{\alpha}{\alpha - 1}\log_2\!\left(\frac{1}{1-\varepsilon}\right) \qquad \forall \alpha \in (1,\infty), k\ge 2, \varepsilon\in [0,1),
\end{equation}
where $\widetilde{E}^{\alpha}_k$ is the $k$-extendible sandwiched R\'enyi divergence defined in~\eqref{eq:sandwich_unext_ent}. 
The subadditivity of $\widetilde{E}^{\alpha}_k$ under  tensor products implies that
\begin{equation}
    E^{\varepsilon}_k\!\left(\rho^{\otimes n}_{AB}\right) \le n\widetilde{E}^{\alpha}_k\!\left(\rho_{AB}\right) + \frac{\alpha}{\alpha - 1}\log_2\!\left(\frac{1}{1-\varepsilon}\right) \qquad \forall \alpha \in (1,\infty), k\ge 2, \varepsilon\in [0,1), n\in \mathbb{N}.
\end{equation}
Substituting the above inequality into~\eqref{eq:dd_1shot_key_ub_k_ext}, we arrive at a single-letter upper bound on the $n$-shot, one-way distillable key of a state, which we state formally in Corollary~\ref{cor:DD_key_n_shot_ub_sandwich} below.
\begin{corollary}\label{cor:DD_key_n_shot_ub_sandwich}
    Fix $\varepsilon \in [0,1)$, $\alpha \in (1,\infty)$, and an integer $k\ge 2$. Let $\rho_{AB}$ be an arbitrary bipartite state. If 
    \begin{equation}
    n\widetilde{E}^{\alpha}_k\!\left(\rho_{AB}\right) + \frac{\alpha}{\alpha - 1}\log_2\!\left(\frac{1}{1-\varepsilon}\right) \le \log_2k,
\end{equation}
    then the one-shot, one-way distillable key of a state is bounded from above as follows: \begin{equation}\label{eq:DD_key_n_shot_ub_sandwich}
    K^{\varepsilon,\to}\left(\rho^{\otimes n}_{AB}\right) \le \log_2\!\left(\frac{k-1}{k}\right) - \log_2\!\left(2^{-n\widetilde{E}^{\alpha}_k(\rho)}(1-\varepsilon)^{\frac{\alpha}{\alpha - 1}}-\frac{1}{k}\right).
\end{equation}
\end{corollary}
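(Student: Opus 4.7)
The plan is to combine Theorem~\ref{theo:dd_1shot_key_bnd_k_ext}, applied directly to the bipartite state $\rho_{AB}^{\otimes n}$ (with the grouped bipartition $A_1\cdots A_n : B_1\cdots B_n$), with the two already-recalled properties of the $k$-unextendible sandwiched R\'enyi divergence: the comparison with the $k$-unextendible hypothesis testing divergence in~\eqref{eq:hypo_test_ent_le_sandwich}, and subadditivity under tensor products. The proof is then essentially an algebraic substitution, so the conceptual content is minimal; the only point that requires verification is that the quantitative hypothesis needed to invoke Theorem~\ref{theo:dd_1shot_key_bnd_k_ext} is implied by the assumption stated in the corollary.

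First I would apply Theorem~\ref{theo:dd_1shot_key_bnd_k_ext} to $\rho_{AB}^{\otimes n}$ to obtain
\begin{equation}
K^{\varepsilon,\to}\!\left(\rho_{AB}^{\otimes n}\right) \le \log_2\!\left(\tfrac{k-1}{k}\right) - \log_2\!\left(2^{-E^{\varepsilon}_{k}(\rho_{AB}^{\otimes n})} - \tfrac{1}{k}\right),
\end{equation}
valid whenever $E^{\varepsilon}_k(\rho_{AB}^{\otimes n}) \le \log_2 k$. Next I would chain~\eqref{eq:hypo_test_ent_le_sandwich} with subadditivity of $\widetilde{E}^{\alpha}_k$ to obtain
\begin{equation}
E^{\varepsilon}_k\!\left(\rho_{AB}^{\otimes n}\right) \le \widetilde{E}^{\alpha}_k\!\left(\rho_{AB}^{\otimes n}\right) + \frac{\alpha}{\alpha - 1}\log_2\!\left(\frac{1}{1-\varepsilon}\right) \le n\widetilde{E}^{\alpha}_k\!\left(\rho_{AB}\right) + \frac{\alpha}{\alpha - 1}\log_2\!\left(\frac{1}{1-\varepsilon}\right).
\end{equation}
The corollary's hypothesis states precisely that the right-hand side is at most $\log_2 k$, which verifies the dimensional constraint needed to apply Theorem~\ref{theo:dd_1shot_key_bnd_k_ext}.

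The final step is to substitute the upper bound on $E^{\varepsilon}_k(\rho_{AB}^{\otimes n})$ into the one-shot bound. Since the map $E \mapsto -\log_2(2^{-E} - 1/k)$ is monotonically increasing on $\{E : E < \log_2 k\}$, enlarging $E^{\varepsilon}_k(\rho_{AB}^{\otimes n})$ to the sandwiched upper bound can only increase the right-hand side, and using $2^{-(nE + c)} = 2^{-nE}(1-\varepsilon)^{\alpha/(\alpha-1)}$ for $c = \frac{\alpha}{\alpha - 1}\log_2(\frac{1}{1-\varepsilon})$ yields precisely~\eqref{eq:DD_key_n_shot_ub_sandwich}. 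The only subtle point, and the one I would flag explicitly, is confirming that each substitution preserves the correct direction of inequality; this follows from the monotonicity just noted together with the fact that $2^{-(\cdot)}$ is decreasing.
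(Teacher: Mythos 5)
Your proposal is correct and follows essentially the same route as the paper: apply Theorem~\ref{theo:dd_1shot_key_bnd_k_ext} to $\rho_{AB}^{\otimes n}$, upper-bound $E^{\varepsilon}_k(\rho_{AB}^{\otimes n})$ by chaining~\eqref{eq:hypo_test_ent_le_sandwich} with subadditivity of $\widetilde{E}^{\alpha}_k$, and substitute using the monotonicity of $E \mapsto -\log_2(2^{-E}-1/k)$. Your explicit check that the corollary's hypothesis guarantees $E^{\varepsilon}_k(\rho_{AB}^{\otimes n}) \le \log_2 k$, and your note on the direction of each substitution, are small but welcome refinements of the paper's more terse presentation.
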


As mentioned earlier in Section~\ref{sec:sandwich_k_unext_st}, the $k$-unextendible sandwiched R\'enyi divergence of states can be efficiently computed for $\alpha \in \left[\frac{1}{2},1\right)\cup(1,2]\cup\{+\infty\}$. Therefore,~\eqref{eq:DD_key_n_shot_ub_sandwich} yields a single-letter, efficiently computable upper bound on the $n$-shot, one-way distillable key of a state.

\section{Limits on forward-assisted private capacity of channels}\label{sec:one_way_priv_cap}

In this section, we analyze limits on private communication over a channel assisted by forward classical communication. 

Let us first consider the task of secret-key distillation from a channel, also known as secret-key generation~\cite{WTB17}. In this task, Alice and Bob wish to establish a secret key between them using a quantum channel $\mathcal{N}_{A\to B}$. To achieve this, Alice prepares an arbitrary bipartite state and sends one share of the state to Bob using the channel $\mathcal{N}$. Bob then applies a quantum channel to the state he received, which we shall call the decoding channel, such that the bipartite state established between Alice and Bob at the end of this protocol is close to a private state. 

The ability to establish a secret key using a channel $\mathcal{N}_{A\to B}$ can be quantified by the one-shot distillable key of the channel, which is defined for an error parameter $\varepsilon\in [0,1]$ as follows (see~\cite{WTB17}):
\begin{equation}\label{eq:1shot_dist_key_ch_defn}
    K^{\varepsilon}\!\left(\mathcal{N}_{A\to B}\right) \coloneqq \sup_{\substack{\rho_{AA'A''}\in \mathcal{S}(AA'A''),\\ \mathcal{D}_{B\to B'B''}\in \operatorname{CPTP}, \\ d\in \mathbb{N}, ~\gamma^d_{A'A''B'B''}}}\left\{\log_2d: F\!\left(\mathcal{D}_{B\to B'B''}\circ\mathcal{N}_{A\to B}\!\left(\rho_{AA'A''}\right), \gamma^d_{A'A''B'B''}\right)\ge 1-\varepsilon\right\},
\end{equation}
where the supremum is over every positive integer $d$, every state $\rho_{AA'A''}$, every channel $\mathcal{D}_{B\to B'B''}$, and every private state $\gamma^d_{A'A''B'B''}$ with $d = |A'| = |B'|$. Comparing with Definition~\ref{def:dd_1shot_key_defn}, it can be easily seen that
\begin{equation}
    K^{\varepsilon}\!\left(\mathcal{N}_{A\to B}\right) \le \sup_{\rho_{AA'A''}\in \mathcal{S}(AA'A'')}K^{\varepsilon}\!\left(\mathcal{N}_{A\to B}\!\left(\rho_{AA'A''}\right)\right)
\end{equation}
since the local channel $\mathcal{D}_{B\to B'B''}$ is an instance of a one-way LOCC channel. We note that that systems $A'$ and $A''$ can be arbitrarily large in the above inequality.

Now consider the setting where Alice can publicly announce an arbitrary amount of classical data beside the channel $\mathcal{N}$. The quantity of interest in this setting is the one-shot, one-way distillable key of a channel, which is defined for an error parameter $\varepsilon \in [0,1]$ as follows:
\begin{multline}\label{eq:1shot_1w_key_ch_defn}
    K^{\varepsilon,\to}\!\left(\mathcal{N}_{A\to B}\right)\\ \coloneqq \sup_{\substack{\rho_{XAA'A''}\in \mathcal{S}(XAA'A''),\\ \mathcal{D}_{BX\to B'B''}\in \operatorname{CPTP}, \\ d\in \mathbb{N}, ~\gamma^d_{A'A''B'B''}}}\left\{\log_2d: F\!\left(\mathcal{D}_{BX\to B'B''}\circ\mathcal{N}_{A\to B}\!\left(\rho_{XAA'A''}\right), \gamma^d_{A'A''B'B''}\right)\ge 1-\varepsilon\right\},
\end{multline}
where system $X$ is classical and the remaining symbols have the same meaning as in~\eqref{eq:1shot_dist_key_ch_defn}. Once again, it can be easily verified that
\begin{equation}\label{eq:dist_key_ch_le_dist_key_st}
    K^{\varepsilon,\to}\!\left(\mathcal{N}_{A\to B}\right) \le \sup_{\rho_{AA'A''}\in \mathcal{S}(AA'A'')}K^{\varepsilon,\to}\!\left(\mathcal{N}_{A\to B}\!\left(\rho_{AA'A''}\right)\right).
\end{equation}

In general, there may exist protocols that facilitate secure communication over a channel without explicitly establishing a secret key~\cite{DLL03}. The notion of private capacity of a channel~\cite{Dev05, CWY04} is then more useful to quantify the amount of data that can be securely transmitted over the channel.

We follow~\cite[Chapter 16]{KW20} to define the one-shot private capacity of a channel, which is consistent with the definition of private capacity of the channel in the asymptotic regime defined in~\cite{Dev05,CWY04}. Let $\mathcal{E}_{X\to A}$ be a channel that Alice uses to encode some classical data on system $X$, drawn with respect to some probability distribution over a symbol $\mathcal{X}$, into a quantum state on system $A$. She then sends the quantum state over the channel $\mathcal{N}_{A\to B}$ to Bob. Bob then uses a decoding channel $\mathcal{D}_{B\to \hat{X}}$ to decode the classical message. The eavesdropper may have access to the purifying system $E$ coming out from some isometric extension $\mathcal{U}^{\mathcal{N}}_{A\to BE}$ of the channel $\mathcal{N}_{A\to B}$. The worst-case error in secure transmission of classical data using this protocol is defined as follows:
\begin{equation}
    p_{\operatorname{err}}(\mathcal{X},\mathcal{E},\mathcal{N},\mathcal{D}) \coloneqq \inf_{\sigma_E}\sup_{x\in \mathcal{X}}\left(1-F\!\left(|x\rangle\!\langle x|_{\hat{X}}\otimes \sigma_E,\mathcal{D}_{B\to \hat{X}}\circ\mathcal{U}^{\mathcal{N}}_{A\to BE}\circ\mathcal{E}_{X\to A}\!\left(|x\rangle\!\langle x|_X\right)\right)\right),
\end{equation}
where the infimum is over every state $\sigma_E$ and the supremum is over every letter $x$ in the alphabet~$\mathcal{X}$. The one-shot private capacity of a channel is then defined as follows:
\begin{equation}
    P^{\varepsilon}\!\left(\mathcal{N}_{A\to B}\right) \coloneqq \sup_{\substack{\mathcal{X}, \\ \mathcal{E}_{X\to A},\mathcal{D}_{B\to \hat{X}}\in \operatorname{CPTP}}} \left\{\log_2|\mathcal{X}|: p_{\operatorname{err}}\!\left(\mathcal{X},\mathcal{E},\mathcal{N},\mathcal{D}\right)\le \varepsilon\right\},
\end{equation}
where the supremum is over every classical alphabet $\mathcal{X}$, every classical-to-quantum channel $\mathcal{E}_{X\to A}$, and every quantum-to-classical channel $\mathcal{D}_{B\to \hat{X}}$.

In the presence of a free classical side channel from Alice to Bob, the quantity of interest is the one-shot, forward-assisted private capacity of the channel, which is defined in the same way as the one-shot forward-assisted private capacity of the channel, but with Alice having the ability to publicly announce an arbitrarily large amount of classical data. We denote the one-shot, forward-assisted private capacity of the channel $\mathcal{N}_{A\to B}$ by the symbol $P^{\varepsilon,\to}\!\left(\mathcal{N}_{A\to B}\right)$.

One can always transform a private communication protocol into a secret-key distillation protocol by transmitting a symbol chosen from a uniform probability distribution. Therefore,
\begin{equation}
    P^{\varepsilon}\!\left(\mathcal{N}_{A\to B}\right) \leq K^{\varepsilon}\!\left(\mathcal{N}_{A\to B}\right).
\end{equation}
However, when forward classical communication can be performed for free, a secret-key distillation protocol can be transformed back into a private communication protocol by using the one-time-pad scheme. Therefore,
\begin{equation}
    P^{\varepsilon,\to}\!\left(\mathcal{N}_{A\to B}\right) = K^{\varepsilon,\to}\!\left(\mathcal{N}_{A\to B}\right).
\end{equation}
Now using~\eqref{eq:dist_key_ch_le_dist_key_st}, we find that
\begin{equation}
    P^{\varepsilon,\to}\!\left(\mathcal{N}_{A\to B}\right) \le \sup_{\rho_{AA'A''}\in \mathcal{S}(AA'A'')}K^{\varepsilon,\to}\!\left(\mathcal{N}_{A\to B}\!\left(\rho_{AA'A''}\right)\right),
\end{equation}
where the dimension of $A'$ and $A''$ can be unbounded. Put differently,
\begin{equation}\label{eq:priv_cap_le_key_st}
    P^{\varepsilon,\to}\!\left(\mathcal{N}_{A\to B}\right) \le \sup_{\rho_{RA}\in \mathcal{S}(RA)}K^{\varepsilon,\to}\!\left(\mathcal{N}_{A\to B}\!\left(\rho_{RA}\right)\right),
\end{equation}
where there is no restriction on the dimension of the system $R$.

\subsection{Upper bounds on the one-shot, forward-assisted private capacity of a channel}

In this section, we obtain semidefinite computable upper bounds on the one-shot, forward-assisted private capacity of a channel by using the inequality in~\eqref{eq:priv_cap_le_key_st} along with the results from Section~\ref{sec:one_way_key_distill}. To achieve this goal, we define the $k$-unextendible generalized divergence of a channel as a measure for quantifying the unextendibility of a point-to-point channel.

\begin{definition}\label{def:k_unext_gen_div_ch_defn}
    The $k$-unextendible generalized divergence of a channel is defined as follows:
\begin{equation}\label{eq:k_unext_gen_div_ch_defn}
    \mathbf{E}_k\!\left(\mathcal{N}_{A\to B}\right) \coloneqq \inf_{\mathcal{M}_{A\to B}\in \operatorname{CPTP}} \sup_{\rho_{RA}\in \mathcal{S}(RA)}\left\{\begin{array}{c}
        \mathbf{D}\!\left(\mathcal{N}_{A\to B}\!\left(\rho_{RA}\right)\middle\Vert\mathcal{M}_{A\to B}\!\left(\rho_{RA}\right)\right):\\ \mathcal{M}_{A\to B}\!\left(\Phi_{A'A}\right)\in \operatorname{Ext}_k(A'\!:\!B) 
        
    \end{array}\right\},
\end{equation}
where $A'\cong A$.
\end{definition}

The quantity 
\begin{equation}
	 \sup_{\rho_{RA}\in \mathcal{S}(RA)}\mathbf{D}\!\left(\mathcal{N}_{A\to B}\!\left(\rho_{RA}\right)\middle\Vert\mathcal{M}_{A\to B}\!\left(\rho_{RA}\right)\right) \eqcolon \mathbf{D}\!\left(\mathcal{N}_{A\to B}\middle\Vert\mathcal{M}_{A\to B}\right).
\end{equation}
is called the generalized divergence of channels~\cite{CMW16,LKDW18}. This allows us to rewrite the $k$-unextendible generalized divergence of a channel more concisely as follows:
\begin{equation}
\mathbf{E}_k\!\left(\mathcal{N}_{A\to B}\right) \coloneqq \inf_{\mathcal{M}_{A\to B}\in \operatorname{CPTP}} \left\{\begin{array}{c}
        \mathbf{D}\!\left(\mathcal{N}_{A\to B}\middle\Vert\mathcal{M}_{A\to B}\right): \mathcal{M}_{A\to B}\!\left(\Phi_{A'A}\right)\in \operatorname{Ext}_k(A'\!:\!B) 
        
    \end{array}\right\},
    \label{eq:k-unext-div-gen-def}
\end{equation}
where $A'\cong A$. 

The state $\Gamma^{\mathcal{M}}_{A'B}\coloneqq \mathcal{M}_{A\to B}\!\left(\Phi_{A'A}\right)$ is called the Choi state of the channel $\mathcal{M}_{A\to B}$. A point-to-point channel whose Choi state is $k$-extendible is called a point-to-point $k$-extendible channel~\cite{PBHS11}. This definition of point-to-point $k$-extendible channels is consistent with the definition of bipartite $k$-extendible channels in the sense that for every $k$-extendible channel $\mathcal{N}_{A\to B}$ there exists an extended channel $\mathcal{P}_{A\to B_{[k]}}$ such that the conditions in~\eqref{eq:marginal_cond_k_ext} and~\eqref{eq:perm_cov_cond_k_ext} are satisfied after fixing $A', B_1,B_2,\ldots,B_k$ to be trivial systems.

To be precise, if the Choi state of a point-to-point channel $\mathcal{M}_{A\to B}$ is $k$-extendible, then there exists a channel $\mathcal{P}_{A\to B_{[k]}}$ such that the following equalities hold:
\begin{equation}
	 \operatorname{Tr}_{B_{[k]\setminus 1}}\circ\mathcal{P}_{A\to B_{[k]}} = \mathcal{M}_{A\to B_1},
\end{equation}
and
\begin{equation}
	\mathcal{W}^{\pi}_{B_{[k]}}\circ\mathcal{P}_{A\to B_{[k]}} = \mathcal{P}_{A\to B_{[k]}}\qquad \forall \pi \in S_k,
\end{equation}
where $\mathcal{W}^{\pi}$ is the permutation channel corresponding to the permutation $\pi$ in the symmetric group~$S_k$.

Note that the equivalence between the $k$-extendibility of a channel and the $k$-extendibility of its Choi state holds only in the case of point-to-point channels. In the bipartite case, the Choi state of a channel being $k$-extendible is necessary for the channel to be $k$-extendible but not sufficient.

We now define the following special cases of \eqref{eq:k-unext-div-gen-def}, which are useful for our purposes in what follows:
\begin{align}
	E^{\varepsilon}_k\!\left(\mathcal{N}_{A\to B}\right) &\coloneqq \inf_{\mathcal{M}_{A\to B}\in \operatorname{CPTP}}\left\{D^{\varepsilon}_H\!\left(\mathcal{N}_{A\to B}\middle\Vert\mathcal{M}_{A\to B}\right): \mathcal{M}_{A\to B}\!\left(\Phi_{RA}\right)\in \operatorname{Ext}_k\!\left(R\!:\!B\right)\right\},\label{eq:hypo_test_ent_ch_defn}\\
    \widetilde{E}^{\alpha}_k\!\left(\mathcal{N}_{A\to B}\right) &\coloneqq \inf_{\mathcal{M}_{A\to B}\in \operatorname{CPTP}}\left\{\widetilde{D}_{\alpha}\!\left(\mathcal{N}\middle\Vert\mathcal{M}\right): \mathcal{M}_{A\to B}\!\left(\Phi_{RA}\right)\in \operatorname{Ext}_k\!\left(R\!:\!B\right)\right\} \quad \forall \alpha \in \left[\frac{1}{2},1\right)\cup(1,\infty),\label{eq:sandwich_ent_ch_defn}\\
	\widehat{E}^{\alpha}_k\!\left(\mathcal{N}_{A\to B}\right) &\coloneqq \inf_{\mathcal{M}_{A\to B}\in \operatorname{CPTP}}\left\{\widehat{D}_{\alpha}\!\left(\mathcal{N}\middle\Vert\mathcal{M}\right): \mathcal{M}_{A\to B}\!\left(\Phi_{RA}\right)\in \operatorname{Ext}_k\!\left(R\!:\!B\right)\right\} \quad \forall \alpha \in (0,1)\cup(1,2],\label{eq:geo_ent_ch_defn}
\end{align}
where $R\cong A$ in all the above equalities. As is evident from their definitions, we call the quantities in~\eqref{eq:hypo_test_ent_ch_defn},~\eqref{eq:sandwich_ent_ch_defn}, and~\eqref{eq:geo_ent_ch_defn} the $k$-unextendible hypothesis testing divergence of a channel, the $k$-unextendible sandwiched R\'enyi divergence of a channel, and the $k$-unextendible geometric R\'enyi divergence of a channel, respectively.

\begin{remark}\label{rem:unext_gen_ch_div_diff}
Our definition of the $k$-unextendible generalized divergence of a channel differs from the one introduced in~\cite{KDWW19,KDWW21}. All the upper bounds on the one-shot, forward-assisted private capacity of a channel obtained in this work (Theorem~\ref{theo:priv_cap_ub_hypo_test} and Corollary~\ref{cor:n_shot_priv_cap_ub}) hold true for both definitions of $k$-unextendible generalized divergence, the one given in Definition~\ref{def:k_unext_gen_div_ch_defn} as well as the one considered in~\cite{KDWW19, KDWW21}. However, it is not clear if the $k$-unextendible generalized divergence of channels defined in~\cite{KDWW19,KDWW21} can be efficiently computed. On the other hand, the quantities defined in~\eqref{eq:hypo_test_ent_ch_defn}--\eqref{eq:geo_ent_ch_defn} can be computed via  semidefinite programs (see Appendix~\ref{app:SDPs} for details), which warrants their use in the rest of this work.
\end{remark}

\begin{lemma}\label{lem:unext_div_ch_ge_unext_div_st}
	For every quantum channel $\mathcal{N}_{A\to B}$ and every integer $k\ge 2$,
	\begin{equation}\label{eq:k_unext_ch_ge_st}
		\mathbf{E}_k\!\left(\mathcal{N}_{A\to B}\right) \ge \sup_{\rho_{RA}\in \mathcal{S}(RA)}\mathbf{E}_k\!\left(\mathcal{N}_{A\to B}\!\left(\rho_{RA}\right)\right).
	\end{equation}
\end{lemma}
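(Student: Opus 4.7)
The plan is to fix an arbitrary channel $\mathcal{M}_{A\to B}$ feasible for the infimum defining $\mathbf{E}_k(\mathcal{N}_{A\to B})$, i.e.\ one whose Choi state $\mathcal{M}_{A\to B}(\Phi_{A'A})$ is $k$-extendible across the bipartition $A'\!:\!B$, and to show that for every input $\rho_{RA}$ the output state $\mathcal{M}_{A\to B}(\rho_{RA})$ lies in $\operatorname{Ext}_k(R\!:\!B)$. The lemma will then follow by a standard sup/inf exchange at the level of generalized divergences.

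First I would invoke the equivalence, spelled out in the paragraphs immediately preceding the lemma, between $k$-extendibility of the Choi state of a point-to-point channel and the existence of an extended channel $\mathcal{P}_{A\to B_{[k]}}$ that satisfies the marginal condition $\operatorname{Tr}_{B_{[k]\setminus 1}}\circ\mathcal{P}_{A\to B_{[k]}} = \mathcal{M}_{A\to B_1}$ and the permutation invariance $\mathcal{W}^{\pi}_{B_{[k]}}\circ\mathcal{P}_{A\to B_{[k]}} = \mathcal{P}_{A\to B_{[k]}}$ for every $\pi\in S_k$. Applying $\mathcal{P}_{A\to B_{[k]}}$ (which acts trivially on $R$) to an arbitrary $\rho_{RA}$ produces a state $\omega_{RB_{[k]}}\coloneqq \mathcal{P}_{A\to B_{[k]}}(\rho_{RA})$ whose marginal on $RB_1$ is $\mathcal{M}_{A\to B}(\rho_{RA})$ and which is permutation invariant on $B_{[k]}$, witnessing that $\mathcal{M}_{A\to B}(\rho_{RA})\in\operatorname{Ext}_k(R\!:\!B)$.

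With this in hand, I would chain inequalities as follows. For every such feasible $\mathcal{M}$ and every $\rho_{RA}\in\mathcal{S}(RA)$,
\begin{align}
\mathbf{D}\!\left(\mathcal{N}_{A\to B}\middle\Vert\mathcal{M}_{A\to B}\right) &\ge \mathbf{D}\!\left(\mathcal{N}_{A\to B}(\rho_{RA})\middle\Vert\mathcal{M}_{A\to B}(\rho_{RA})\right) \notag\\
&\ge \inf_{\sigma_{RB}\in\operatorname{Ext}_k(R:B)}\mathbf{D}\!\left(\mathcal{N}_{A\to B}(\rho_{RA})\middle\Vert\sigma_{RB}\right) = \mathbf{E}_k\!\left(\mathcal{N}_{A\to B}(\rho_{RA})\right),
\end{align}
where the first inequality is the definition of the generalized divergence of channels as a supremum over input states, and the second uses the fact established above that $\mathcal{M}_{A\to B}(\rho_{RA})$ is itself a feasible point for the infimum defining $\mathbf{E}_k$ on states.

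Since the right-hand side does not depend on $\mathcal{M}$, I would take the infimum over feasible $\mathcal{M}$ on the left and the supremum over $\rho_{RA}$ on the right to conclude $\mathbf{E}_k(\mathcal{N}_{A\to B})\ge \sup_{\rho_{RA}}\mathbf{E}_k(\mathcal{N}_{A\to B}(\rho_{RA}))$, as claimed. The only subtle step is the first one, which hinges on the point-to-point equivalence between channel $k$-extendibility and $k$-extendibility of the Choi state; this equivalence is genuinely used, since one needs the extension $\mathcal{P}_{A\to B_{[k]}}$ to act on an arbitrary ancillary system $R$ without destroying permutation symmetry, and this is where the point-to-point restriction matters (the bipartite analog, as noted in the paper, would fail). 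Everything else is a routine unpacking of definitions.
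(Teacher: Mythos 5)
Your proposal is correct and follows essentially the same route as the paper's proof: both arguments hinge on the observation that a point-to-point channel with a $k$-extendible Choi state maps every input $\rho_{RA}$ to a state in $\operatorname{Ext}_k(R\!:\!B)$ (witnessed by $\mathcal{P}_{A\to B_{[k]}}(\rho_{RA})$), and then conclude via a sup/inf exchange, which the paper phrases as the max-min inequality. No substantive difference.
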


\begin{proof}
	Let $\mathcal{M}_{A\to B}$ be a $k$-extendible channel for some integer $k\ge 2$. Then there exists a channel $\mathcal{P}_{A\to B_{[k]}}$ such that 
	\begin{align}
		\operatorname{Tr}_{B_{[k]\setminus 1}}\circ\mathcal{P}_{A\to B_{[k]}} &= \mathcal{M}_{A\to B_1},\\
		\mathcal{W}^{\pi}_{B_{[k]}}\circ\mathcal{P}_{A\to B_{[k]}} &= \mathcal{P}_{A\to B_{[k]}}.
	\end{align}
	Let $\rho_{RA}$ be an arbitrary bipartite state, with system $R$ being of arbitrary dimension. The following equalities hold for the state $\mathcal{P}_{A\to B_{[k]}}\!\left(\rho_{RA}\right)$:
	\begin{align}
		\operatorname{Tr}_{B_{[k]\setminus 1}}\!\left[\mathcal{P}_{A\to B_{[k]}}\!\left(\rho_{RA}\right)\right] &= \mathcal{M}_{A\to B_1}\!\left(\rho_{RA}\right),\\
		\mathcal{W}^{\pi}_{B_{[k]}}\!\left(\mathcal{P}_{A\to B_{[k]}}\!\left(\rho_{RA}\right)\right) &= \mathcal{P}_{A\to B_{[k]}}\!\left(\rho_{RA}\right) \qquad \forall \pi \in S_k.
	\end{align}
	Therefore, $\mathcal{M}_{A\to B}\!\left(\rho_{RA}\right)$ is a $k$-extendible state with $\mathcal{P}_{A\to B_{[k]}}\!\left(\rho_{RA}\right)$ being its $k$-extension. 
	
	Now applying the max-min inequality to~\eqref{eq:k_unext_gen_div_ch_defn}, we arrive at the following inequality:
	\begin{align}
		\mathbf{E}_k\!\left(\mathcal{N}_{A\to B}\right) &\ge  \sup_{\rho_{RA}\in \mathcal{S}(RA)}\inf_{\mathcal{M}_{A\to B}\in \operatorname{CPTP}}\left\{\begin{array}{c}
        \mathbf{D}\!\left(\mathcal{N}_{A\to B}\!\left(\rho_{RA}\right)\middle\Vert\mathcal{M}_{A\to B}\!\left(\rho_{RA}\right)\right):\\ \mathcal{M}_{A\to B}\!\left(\Phi_{RA}\right)\in \operatorname{Ext}_k(R\!:\!B)         
    \end{array}\right\}\\
    &\ge \sup_{\rho_{RA}\in \mathcal{S}(RA)}\inf_{\sigma_{RB}\in \mathcal{S}(RB)}\left\{\begin{array}{c}
        \mathbf{D}\!\left(\mathcal{N}_{A\to B}\!\left(\rho_{RA}\right)\middle\Vert\sigma_{RB}\right):\\ \sigma_{RB}\in \operatorname{Ext}_k(R\!:\!B) 
        \end{array}\right\}\\
        &= \sup_{\rho_{RA}\in \mathcal{S}(RA)}\mathbf{E}_k\!\left(\mathcal{N}_{A\to B}\!\left(\rho_{RA}\right)\right),
	\end{align}
	where the second inequality follows from the fact that $\mathcal{M}_{A\to B}\!\left(\rho_{RA}\right)\in \operatorname{Ext}_k\!\left(R\!:\!B\right)$ if $\mathcal{M}_{A\to B}$ is a point-to-point $k$-extendible channel.
\end{proof}

\medskip

Note that the upper bound on the one-shot, one-way distillable key of a state, stated in Theorem~\ref{theo:dd_1shot_key_bnd_k_ext}, is a monotonically increasing function with respect to the $k$-unextendible hypothesis-testing divergence of the state. We use this fact, along with the inequalities in~\eqref{eq:priv_cap_le_key_st} and~\eqref{eq:k_unext_ch_ge_st}, to obtain an upper bound on the one-shot, forward-assisted private capacity of a channel, which we state formally in Theorem~\ref{theo:priv_cap_ub_hypo_test} below.
\begin{theorem}\label{theo:priv_cap_ub_hypo_test}
    The one-shot, forward-assisted private capacity of a channel $\mathcal{N}_{A\to B}$ is bounded from above by the following quantity:
    \begin{equation}
        P^{\varepsilon,\to}\!\left(\mathcal{N}_{A\to B}\right) \le \log_2\!\left(\frac{k-1}{k}\right) -\log_2\!\left(2^{-E^{\varepsilon}_k(\mathcal{N})} - \frac{1}{k}\right),
    \end{equation}
    where $E^{\varepsilon}_k\!\left(\mathcal{N}\right)$ is defined in~\eqref{eq:hypo_test_ent_ch_defn}.
\end{theorem}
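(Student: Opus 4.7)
The plan is to combine three ingredients already established in the paper: the reduction of the one-shot forward-assisted private capacity to the one-shot one-way distillable key of the channel output state, which is~\eqref{eq:priv_cap_le_key_st}; the upper bound on the one-shot one-way distillable key of an arbitrary bipartite state from Theorem~\ref{theo:dd_1shot_key_bnd_k_ext}; and the channel-to-state inequality for the $k$-unextendible hypothesis-testing divergence from Lemma~\ref{lem:unext_div_ch_ge_unext_div_st}. No new technical lemmas should be needed.

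Concretely, I would begin by invoking~\eqref{eq:priv_cap_le_key_st} to reduce the problem to bounding $K^{\varepsilon,\to}(\mathcal{N}_{A\to B}(\rho_{RA}))$ uniformly over all input states $\rho_{RA}$. For each such $\rho_{RA}$, I would apply Theorem~\ref{theo:dd_1shot_key_bnd_k_ext} to the state $\mathcal{N}_{A\to B}(\rho_{RA})$, yielding
\begin{equation}
K^{\varepsilon,\to}\!\left(\mathcal{N}_{A\to B}(\rho_{RA})\right) \le \log_2\!\left(\frac{k-1}{k}\right) - \log_2\!\left(2^{-E^{\varepsilon}_{k}(\mathcal{N}_{A\to B}(\rho_{RA}))} - \frac{1}{k}\right).
\end{equation}
The crucial observation is that the right-hand side is a monotonically non-decreasing function of $E^{\varepsilon}_{k}(\mathcal{N}(\rho_{RA}))$ on the range where it is well defined: increasing $E^{\varepsilon}_k$ shrinks $2^{-E^{\varepsilon}_k}$, shrinks $2^{-E^{\varepsilon}_k} - 1/k$, and therefore increases the overall expression. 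Hence taking a supremum over $\rho_{RA}$ commutes with the substitution of $E^{\varepsilon}_k$, and I may replace $E^{\varepsilon}_{k}(\mathcal{N}(\rho_{RA}))$ by $\sup_{\rho_{RA}} E^{\varepsilon}_{k}(\mathcal{N}(\rho_{RA}))$. By Lemma~\ref{lem:unext_div_ch_ge_unext_div_st}, this supremum is in turn bounded from above by the channel quantity $E^{\varepsilon}_{k}(\mathcal{N}_{A\to B})$ defined in~\eqref{eq:hypo_test_ent_ch_defn}, and substituting produces exactly the stated bound.

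The only subtle point, and the one I would flag explicitly, is the domain hypothesis $E^{\varepsilon}_k \le \log_2 k$ built into Theorem~\ref{theo:dd_1shot_key_bnd_k_ext}. If $E^{\varepsilon}_{k}(\mathcal{N}_{A\to B}) > \log_2 k$, the argument $2^{-E^{\varepsilon}_k(\mathcal{N})} - 1/k$ of the outer logarithm becomes non-positive, so the conclusion of the theorem is vacuous and the bound is understood to be trivially valid; if $E^{\varepsilon}_{k}(\mathcal{N}_{A\to B}) \le \log_2 k$, then by Lemma~\ref{lem:unext_div_ch_ge_unext_div_st} we have $E^{\varepsilon}_{k}(\mathcal{N}(\rho_{RA})) \le \log_2 k$ uniformly in $\rho_{RA}$, so Theorem~\ref{theo:dd_1shot_key_bnd_k_ext} applies to every channel output and the chain of inequalities above goes through without further conditions. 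I do not anticipate any genuine obstacle, as all of the heavy lifting (the privacy-test bound for $k$-extendible states and the data-processing structure of the $k$-unextendible divergence) has been carried out in earlier sections.
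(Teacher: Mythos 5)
Your proposal is correct and follows essentially the same route as the paper's proof in Appendix~\ref{app:priv_cap_ub}: reduce via~\eqref{eq:priv_cap_le_key_st}, apply Theorem~\ref{theo:dd_1shot_key_bnd_k_ext} to each output state $\mathcal{N}_{A\to B}(\rho_{RA})$, use monotonicity to push the supremum into the exponent, and finish with Lemma~\ref{lem:unext_div_ch_ge_unext_div_st}. Your explicit treatment of the domain hypothesis $E^{\varepsilon}_k \le \log_2 k$ is a point the paper's proof passes over silently, and it is handled correctly.
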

\begin{proof}
	See Appendix~\ref{app:priv_cap_ub}.    
\end{proof}

\medskip

\subsection{Upper bounds on the \texorpdfstring{$n$}{n}-shot, forward-assisted private capacity}

In this section, we obtain single-letter upper bounds on the $n$-shot, forward-assisted private capacity of a channel.

First, we note that the $\alpha$-geometric R\'enyi relative entropy of channels is additive under tensor products for every $\alpha \in \left(\frac{1}{2},1\right)\cup(1,2]$~\cite{FF21, KW21} (see~\cite[Lemma 3]{SW25_channels} for an explicit proof). This implies that the $k$-unextendible geometric R\'enyi divergence of channels is subadditive under tensor products (see Appendix~\ref{app:n_shot_priv_cap_ub_proof} for a complete proof). That is,
\begin{equation}
	\widehat{E}^{\alpha}_k\!\left(\mathcal{N}^{\otimes n}_{A\to B}\right) \le n\widehat{E}^{\alpha}_k\!\left(\mathcal{N}_{A\to B}\right) \qquad \forall n\in \mathbb{N}.
\end{equation}
This equips us with tools to obtain a single-letter upper bound on the $n$-shot, forward-assisted private capacity of a channel, which we state in Corollary~\ref{cor:n_shot_priv_cap_ub} below.

\begin{corollary}\label{cor:n_shot_priv_cap_ub}
	Fix $\alpha \in (1,2]$ and integer $k\ge 2$. For a given channel $\mathcal{N}_{A\to B}$, $\varepsilon \in [0,1]$, and  $n\in \mathbb{N}$, if
	\begin{equation}
		n\widehat{E}^{\alpha}_k\!\left(\mathcal{N}_{A\to B}\right) + \frac{\alpha}{\alpha - 1}\log_2\!\left(\frac{1}{1-\varepsilon}\right) \le \log_2k,
	\end{equation}
	then
	\begin{equation}
		P^{\varepsilon,\to}\!\left(\mathcal{N}^{\otimes n}_{A\to B}\right) \le \log_2\!\left(\frac{k-1}{k}\right) - \log_2\!\left(2^{-n\widehat{E}^{\alpha}_k\left(\mathcal{N}\right)}(1-\varepsilon)^{\frac{\alpha}{\alpha - 1}} - \frac{1}{k}\right).
	\end{equation}
	
\end{corollary}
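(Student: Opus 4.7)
The plan is to combine three ingredients: the one-shot channel bound in Theorem~\ref{theo:priv_cap_ub_hypo_test}, a channel-level inequality that controls the $k$-unextendible hypothesis testing divergence of a channel by its $k$-unextendible geometric R\'enyi divergence, and the subadditivity of $\widehat{E}^{\alpha}_k$ under tensor products already stated above the corollary.

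First, I would apply Theorem~\ref{theo:priv_cap_ub_hypo_test} directly to the tensor-product channel $\mathcal{N}^{\otimes n}_{A\to B}$. Provided the resulting argument of the outer logarithm is positive (which is exactly what the hypothesis of the corollary will ensure), this yields
\begin{equation}
    P^{\varepsilon,\to}\!\left(\mathcal{N}^{\otimes n}_{A\to B}\right) \le \log_2\!\left(\frac{k-1}{k}\right) - \log_2\!\left(2^{-E^{\varepsilon}_k\left(\mathcal{N}^{\otimes n}\right)} - \frac{1}{k}\right).
\end{equation}
Observe that the function $x\mapsto -\log_2\!\left(2^{-x}-\tfrac{1}{k}\right)$ is monotonically increasing on the interval where $2^{-x}>1/k$, so any valid upper bound on $E^{\varepsilon}_k(\mathcal{N}^{\otimes n})$ may be substituted in place of $E^{\varepsilon}_k(\mathcal{N}^{\otimes n})$ without changing the direction of the inequality.

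Second, I would lift the state-level inequality~\eqref{eq:hypo_test_ent_le_sandwich} to the level of channels in the geometric variant. The same well-known bound from~\cite{CMW16} applied at the state level reads $D^{\varepsilon}_H(\rho\Vert\sigma) \le \widehat{D}_{\alpha}(\rho\Vert\sigma) + \tfrac{\alpha}{\alpha-1}\log_2\!\left(\tfrac{1}{1-\varepsilon}\right)$ for $\alpha\in(1,2]$ (since $\widetilde{D}_{\alpha}\le \widehat{D}_{\alpha}$). Taking the supremum over input states $\rho_{RA}$, then the infimum over channels $\mathcal{M}_{A\to B}$ whose Choi state is $k$-extendible, and using that the additive correction is independent of $\mathcal{M}$ and the input state, one obtains
\begin{equation}
    E^{\varepsilon}_k\!\left(\mathcal{N}_{A\to B}\right) \le \widehat{E}^{\alpha}_k\!\left(\mathcal{N}_{A\to B}\right) + \frac{\alpha}{\alpha-1}\log_2\!\left(\frac{1}{1-\varepsilon}\right).
\end{equation}
Applied to $\mathcal{N}^{\otimes n}$ and combined with the subadditivity $\widehat{E}^{\alpha}_k(\mathcal{N}^{\otimes n})\le n\widehat{E}^{\alpha}_k(\mathcal{N})$ noted immediately before the corollary, this gives
\begin{equation}
    E^{\varepsilon}_k\!\left(\mathcal{N}^{\otimes n}\right) \le n\widehat{E}^{\alpha}_k\!\left(\mathcal{N}\right) + \frac{\alpha}{\alpha-1}\log_2\!\left(\frac{1}{1-\varepsilon}\right).
\end{equation}

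Third, under the stated hypothesis $n\widehat{E}^{\alpha}_k(\mathcal{N}) + \tfrac{\alpha}{\alpha-1}\log_2\!\left(\tfrac{1}{1-\varepsilon}\right) \le \log_2 k$, the right-hand side above is at most $\log_2 k$, which guarantees that $2^{-E^{\varepsilon}_k(\mathcal{N}^{\otimes n})}\ge 1/k$ so the bound from Theorem~\ref{theo:priv_cap_ub_hypo_test} is well defined and the monotonicity argument applies. Substituting and simplifying via $2^{-\frac{\alpha}{\alpha-1}\log_2\left(1/(1-\varepsilon)\right)}=(1-\varepsilon)^{\alpha/(\alpha-1)}$ yields
\begin{equation}
    P^{\varepsilon,\to}\!\left(\mathcal{N}^{\otimes n}_{A\to B}\right) \le \log_2\!\left(\frac{k-1}{k}\right) - \log_2\!\left(2^{-n\widehat{E}^{\alpha}_k(\mathcal{N})}(1-\varepsilon)^{\frac{\alpha}{\alpha-1}} - \frac{1}{k}\right),
\end{equation}
which is the claimed single-letter bound.

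The main obstacle is the channel-level lifting step: one must be careful that the infimum defining $\widehat{E}^{\alpha}_k(\mathcal{N})$ is over the same feasible set of channels $\mathcal{M}$ (those with $k$-extendible Choi state) as that used in defining $E^{\varepsilon}_k(\mathcal{N})$, so that taking $\inf_{\mathcal{M}}\sup_{\rho_{RA}}$ on both sides of the state-level inequality legitimately transfers the additive constant. Everything else is either monotonicity of a scalar function or the subadditivity statement already provided.
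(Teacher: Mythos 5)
Your proof is correct and follows essentially the same route as the paper's: the one-shot bound of Theorem~\ref{theo:priv_cap_ub_hypo_test}, the chain $D^{\varepsilon}_H \le \widetilde{D}_{\alpha} + \tfrac{\alpha}{\alpha-1}\log_2\tfrac{1}{1-\varepsilon} \le \widehat{D}_{\alpha} + \tfrac{\alpha}{\alpha-1}\log_2\tfrac{1}{1-\varepsilon}$ for $\alpha\in(1,2]$, subadditivity of $\widehat{E}^{\alpha}_k$ for channels, and monotonicity of $x\mapsto-\log_2\!\left(2^{-x}-\tfrac{1}{k}\right)$. The only cosmetic difference is that you lift the hypothesis-testing-versus-R\'enyi inequality directly to the channel divergences (taking $\sup_{\rho_{RA}}$ and then $\inf_{\mathcal{M}}$ over the same feasible set of channels with $k$-extendible Choi state), whereas the paper works with the state quantities $E^{\varepsilon}_k\!\left(\mathcal{N}^{\otimes n}(\rho_{RA^n})\right)$ and invokes Lemma~\ref{lem:unext_div_ch_ge_unext_div_st}; both routes are valid.
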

\begin{proof}
	See Appendix~\ref{app:n_shot_priv_cap_ub_proof}.
\end{proof}

The $k$-unextendible geometric R\'enyi divergence of channels can be computed for rational values of $\alpha \in (1,2]$ by means of a semidefinite program. As such, the upper bound on the $n$-shot, forward-assisted private capacity of a channel given in Corollary~\ref{cor:n_shot_priv_cap_ub} is efficiently computable. In Appendix~\ref{app:SDPs}, we present the semidefinite program to compute the $k$-unextendible geometric R\'enyi divergence of a channel for $\alpha = 1+2^{-\ell}$ and $\ell \in \mathbb{N}$. We refer the reader to~\cite{FS17} for a detailed discussion on this topic.

The $\alpha$-sandwiched R\'enyi relative entropy of channels is not generally additive or subadditive under tensor products (see \cite{Fang2020} for counterexamples in the limit $\alpha \to 1$), which prevents us from replacing the $k$-unextendible geometric R\'enyi divergence of channels with the $k$-unextendible sandwiched R\'enyi divergence of channels in Corollary~\ref{cor:n_shot_priv_cap_ub}. However, it was shown in~\cite[Theorem 6]{TWW17} that the Rains information of a channel, induced by the $\alpha$-sandwiched R\'enyi relative entropy, is weakly subadditive. 

Following techniques from~\cite{TWW17}, we find that the $k$-unextendible sandwiched R\'enyi divergence of channels also obeys weak subadditivity, as stated in Proposition~\ref{prop:weak_subadd_sandwich} below.
\begin{proposition}\label{prop:weak_subadd_sandwich}
    Let $\mathcal{N}_{A\to B}$ be a quantum channel. Fix $\alpha \in (1,\infty)$, $k\ge 2$, and $n\in \mathbb{N}$. Let $\rho_{RA^n}$ be an arbitrary state. Then, the following inequality holds:
    \begin{equation}
        \widetilde{E}^{\alpha}_k\!\left(\mathcal{N}^{\otimes n}_{A\to B}(\rho_{RA^n})\right) \le n\widetilde{E}^{\alpha}_k\!\left(\mathcal{N}_{A\to B}\right) + \frac{\alpha}{\alpha - 1}\log_2\!\left(\binom{n+|A|^2-1}{n}\right).
    \end{equation}
\end{proposition}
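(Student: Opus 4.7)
The plan is to mimic the argument of~\cite{TWW17} for the weak subadditivity of the Rains information of a channel, with ``$k$-extendible'' playing the role of ``PPT'' throughout. First, I would pick a single-letter optimizer $\mathcal{M}^*_{A\to B}$ of $\widetilde{E}^{\alpha}_k(\mathcal{N})$, so that $\widetilde{D}_{\alpha}(\mathcal{N}\Vert\mathcal{M}^*) = \widetilde{E}^{\alpha}_k(\mathcal{N})$ and the Choi state $\mathcal{M}^*(\Phi_{RA})$ lies in $\operatorname{Ext}_k(R\!:\!B)$. Since tensor products of $k$-extendible states are $k$-extendible, $(\mathcal{M}^*(\Phi_{RA}))^{\otimes n}$ lies in $\operatorname{Ext}_k(R^n\!:\!B^n)$, and applying $(\mathcal{M}^*)^{\otimes n}$ to $\rho_{RA^n}$ produces a state in $\operatorname{Ext}_k(R\!:\!B^n)$ because a local operation on the $R$-register cannot disturb $k$-extendibility on the $B$-register. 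This yields
\begin{equation*}
\widetilde{E}^{\alpha}_k\!\left(\mathcal{N}^{\otimes n}(\rho_{RA^n})\right) \le \widetilde{D}_{\alpha}\!\left(\mathcal{N}^{\otimes n}(\rho_{RA^n}) \Vert (\mathcal{M}^*)^{\otimes n}(\rho_{RA^n})\right) \le \widetilde{D}_{\alpha}\!\left(\mathcal{N}^{\otimes n} \Vert (\mathcal{M}^*)^{\otimes n}\right),
\end{equation*}
where the second inequality follows from the variational characterization of the sandwiched R\'enyi channel divergence as a supremum over inputs.

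The remaining task is the weak subadditivity estimate
\begin{equation*}
\widetilde{D}_{\alpha}\!\left(\mathcal{N}^{\otimes n} \Vert (\mathcal{M}^*)^{\otimes n}\right) \le n\,\widetilde{D}_{\alpha}(\mathcal{N}\Vert\mathcal{M}^*) + \frac{\alpha}{\alpha-1}\log_2\!\binom{n+|A|^2-1}{n},
\end{equation*}
which, combined with $\widetilde{D}_{\alpha}(\mathcal{N}\Vert\mathcal{M}^*) = \widetilde{E}^{\alpha}_k(\mathcal{N})$, would finish the proof. I would reduce the left-hand side to a supremum of $\widetilde{D}_{\alpha}(\mathcal{N}^{\otimes n}(\psi)\Vert(\mathcal{M}^*)^{\otimes n}(\psi))$ over pure inputs $\psi_{R_0^n A^n}$ with $|R_0| = |A|$; joint permutation-covariance of $\mathcal{N}^{\otimes n}$ and $(\mathcal{M}^*)^{\otimes n}$ lets one restrict to permutation-symmetric pure inputs in the symmetric subspace of $(\mathcal{H}_{R_0}\otimes\mathcal{H}_A)^{\otimes n}$. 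The Christandl--K\"onig--Renner post-selection inequality then provides the operator dominance
\begin{equation*}
\psi_{R_0^n A^n} \le \binom{n+|A|^2-1}{n}\int |\phi\rangle\!\langle\phi|_{R_0A}^{\otimes n}\,d\mu(\phi).
\end{equation*}
The scaling identity $\widetilde{D}_{\alpha}(c\omega\Vert\sigma) = \widetilde{D}_{\alpha}(\omega\Vert\sigma) + \frac{\alpha}{\alpha-1}\log_2 c$ for $\alpha>1$, together with monotonicity of $T\mapsto\operatorname{Tr}[T^{\alpha}]$ under the L\"owner order, converts this dominance into the additive correction $\frac{\alpha}{\alpha-1}\log_2\binom{n+|A|^2-1}{n}$, while joint quasi-convexity of $\widetilde{D}_{\alpha}$ in the first argument reduces the de Finetti mixture to its worst pure i.i.d.~term, on which tensor-power additivity yields the $n\,\widetilde{D}_{\alpha}(\mathcal{N}\Vert\mathcal{M}^*)$ contribution.

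The main obstacle will be applying the post-selection operator bound simultaneously and consistently to both arguments of the sandwiched R\'enyi divergence: a naive use of the dominance on $\psi$ leaves the denominator $(\mathcal{M}^*)^{\otimes n}(\psi)$ tied to the original input rather than to the de Finetti mixture $\tilde{\psi} = \int |\phi\rangle\!\langle\phi|^{\otimes n}_{R_0A}\,d\mu(\phi)$, which obstructs the subsequent quasi-convexity step. Resolving this requires an additional symmetrization that exploits the permutation covariance of both $\mathcal{N}$ and $\mathcal{M}^*$, so that $\psi$ can be transferred to $\tilde{\psi}$ in both slots at only the polynomial overhead already accounted for, paralleling the technical core of~\cite{TWW17}. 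Once this weak subadditivity of the sandwiched channel divergence is in hand, assembling it with the two-step chain displayed above produces the claimed bound.
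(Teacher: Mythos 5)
Your reduction to the intermediate claim is fine: picking a (near-)optimal $\mathcal{M}^*$ with $k$-extendible Choi state, noting that $(\mathcal{M}^*)^{\otimes n}(\rho_{RA^n})\in\operatorname{Ext}_k(R\!:\!B^n)$, and bounding $\widetilde{E}^{\alpha}_k(\mathcal{N}^{\otimes n}(\rho_{RA^n}))$ by the channel divergence $\widetilde{D}_{\alpha}(\mathcal{N}^{\otimes n}\Vert(\mathcal{M}^*)^{\otimes n})$ are all correct steps. The gap is exactly the obstacle you flag at the end, and your proposed repair does not close it. Once you have frozen the second argument to be $(\mathcal{M}^*)^{\otimes n}$ applied to the \emph{same} input, the post-selection dominance $\psi\le C\,\tilde{\psi}$ only helps in the numerator (via $A\le B\Rightarrow\operatorname{Tr}[A^{\alpha}]\le\operatorname{Tr}[B^{\alpha}]$); for the denominator you would need a reverse bound of the form $(\mathcal{M}^*)^{\otimes n}(\psi)\ge c\,(\mathcal{M}^*)^{\otimes n}(\tilde{\psi})$ with $c$ subexponentially small, and no such bound exists --- a rank-one symmetric $\psi$ does not dominate the full de Finetti mixture from below. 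Joint permutation covariance of $(\mathcal{M}^*)^{\otimes n}$ (which holds for any $\mathcal{M}^*$) only lets you symmetrize the input; it does not transfer $\psi$ to $\tilde{\psi}$ in the second slot. Note also that the paper itself records that the sandwiched R\'enyi divergence of channels is not additive or subadditive in general, so your intermediate claim $\widetilde{D}_{\alpha}(\mathcal{N}^{\otimes n}\Vert(\mathcal{M}^*)^{\otimes n})\le n\widetilde{D}_{\alpha}(\mathcal{N}\Vert\mathcal{M}^*)+O(\log\binom{n+|A|^2-1}{n})$ is at best unproven and should be treated with suspicion, not assumed as the ``technical core'' to be filled in later.

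The paper (following \cite{TWW17} more faithfully than your plan does) avoids this problem structurally: it never fixes the second argument. All the symmetrization, post-selection, and convexity steps are carried out on the \emph{optimized} state quantity $\widetilde{E}^{\alpha}_k(\cdot)$, where the infimum over $\operatorname{Ext}_k$ is re-taken after every manipulation. Concretely: (i) quasi-convexity of $\rho\mapsto\widetilde{E}^{\alpha}_k(\rho)$ is proved from joint convexity of $\widetilde{Q}_{\alpha}$ together with convexity of $\operatorname{Ext}_k$; (ii) the operator dominance $\phi^{\overline{\rho}}\le\binom{n+|A|^2-1}{n}\int d\mu(\phi)\,\phi^{\otimes n}$ is converted into an additive correction using the one-sided inequality $\omega^1\le\gamma\omega^2\Rightarrow\widetilde{E}^{\alpha}_k(\omega^1)\le\frac{\alpha}{\alpha-1}\log_2\gamma+\widetilde{E}^{\alpha}_k(\omega^2)$, which holds precisely because the second argument is re-optimized; (iii) quasi-convexity reduces the mixture to the worst i.i.d.\ term, subadditivity of $\widetilde{E}^{\alpha}_k$ under tensor products of \emph{states} gives the factor $n$, and only at the very last step is Lemma~\ref{lem:unext_div_ch_ge_unext_div_st} invoked to pass to $\widetilde{E}^{\alpha}_k(\mathcal{N})$. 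If you restructure your argument this way --- keeping the infimum inside rather than extracting $\mathcal{M}^*$ at the outset --- the denominator problem you identified disappears and the proof goes through.
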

\begin{proof}
    See Appendix~\ref{app:weak_subadd_sandwich_proof}.
\end{proof}

\medskip

We can now use Proposition~\ref{prop:weak_subadd_sandwich} to obtain a single-letter upper bound on the $n$-shot, forward assisted private capacity of a channel in terms of its $k$-unextendible sandwiched R\'enyi divergence. We state this bound  in Corollary~\ref{cor:n_shot_priv_cap_sandwich} below:
\begin{corollary}\label{cor:n_shot_priv_cap_sandwich}
    Fix $\alpha \in (1,\infty)$ and integer $k\ge 2$. For a given channel $\mathcal{N}_{A\to B}$, $\varepsilon\in [0,1]$, and  $n\in \mathbb{N}$, if
	\begin{equation}
		n\widetilde{E}^{\alpha}_k\!\left(\mathcal{N}_{A\to B}\right) + \frac{\alpha}{\alpha - 1}\log_2\!\left(\frac{C(n,|A|)}{1-\varepsilon}\right) \le \log_2k,
	\end{equation}
	where
    \begin{equation}
        C(n,|A|) \coloneqq \binom{n+|A|^2-1}{n} ,
    \end{equation}
    then
	\begin{equation}
		P^{\varepsilon,\to}\!\left(\mathcal{N}^{\otimes n}_{A\to B}\right) \le \log_2\!\left(\frac{k-1}{k}\right) - \log_2\!\left(2^{-n\widetilde{E}^{\alpha}_k\left(\mathcal{N}\right)}\!\left(\frac{1-\varepsilon}{C(n,|A|)}\right)^{\frac{\alpha}{\alpha - 1}} - \frac{1}{k}\right).
	\end{equation}
\end{corollary}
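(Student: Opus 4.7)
The plan is to derive Corollary~\ref{cor:n_shot_priv_cap_sandwich} as a direct consequence of Theorem~\ref{theo:dd_1shot_key_bnd_k_ext} applied to the output state $\mathcal{N}^{\otimes n}(\rho_{RA^n})$, combined with the inequality~\eqref{eq:hypo_test_ent_le_sandwich} relating $E^{\varepsilon}_k$ and $\widetilde{E}^{\alpha}_k$ at the state level, and the weak-subadditivity bound established in Proposition~\ref{prop:weak_subadd_sandwich}. The overall chain mirrors the derivation of Corollary~\ref{cor:DD_key_n_shot_ub_sandwich}, but with the subadditivity estimate for i.i.d.\ states replaced by the weak-subadditivity estimate that is available for the sandwiched R\'enyi quantity at the level of channels.

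First, I would use~\eqref{eq:priv_cap_le_key_st} in the form
\begin{equation}
    P^{\varepsilon,\to}\!\left(\mathcal{N}^{\otimes n}_{A\to B}\right) \le \sup_{\rho_{RA^n}}K^{\varepsilon,\to}\!\left(\mathcal{N}^{\otimes n}_{A\to B}\!\left(\rho_{RA^n}\right)\right)
\end{equation}
to reduce the channel problem to a state problem. For each fixed input $\rho_{RA^n}$, I would apply Theorem~\ref{theo:dd_1shot_key_bnd_k_ext} to the state $\mathcal{N}^{\otimes n}(\rho_{RA^n})$, giving an upper bound in terms of $E^{\varepsilon}_k(\mathcal{N}^{\otimes n}(\rho_{RA^n}))$, valid whenever the latter is at most $\log_2 k$.

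Next, I would chain two inequalities to convert this into a bound involving only $\widetilde{E}^{\alpha}_k(\mathcal{N}_{A\to B})$. The first is the state-level estimate~\eqref{eq:hypo_test_ent_le_sandwich},
\begin{equation}
    E^{\varepsilon}_k\!\left(\mathcal{N}^{\otimes n}(\rho_{RA^n})\right) \le \widetilde{E}^{\alpha}_k\!\left(\mathcal{N}^{\otimes n}(\rho_{RA^n})\right) + \frac{\alpha}{\alpha - 1}\log_2\!\left(\frac{1}{1-\varepsilon}\right),
\end{equation}
and the second is Proposition~\ref{prop:weak_subadd_sandwich},
\begin{equation}
    \widetilde{E}^{\alpha}_k\!\left(\mathcal{N}^{\otimes n}(\rho_{RA^n})\right) \le n\widetilde{E}^{\alpha}_k\!\left(\mathcal{N}_{A\to B}\right) + \frac{\alpha}{\alpha - 1}\log_2 C(n,|A|).
\end{equation}
Together these give, uniformly over $\rho_{RA^n}$,
\begin{equation}
    E^{\varepsilon}_k\!\left(\mathcal{N}^{\otimes n}(\rho_{RA^n})\right) \le n\widetilde{E}^{\alpha}_k\!\left(\mathcal{N}_{A\to B}\right) + \frac{\alpha}{\alpha - 1}\log_2\!\left(\frac{C(n,|A|)}{1-\varepsilon}\right),
\end{equation}
so the hypothesis of the corollary implies that Theorem~\ref{theo:dd_1shot_key_bnd_k_ext} applies for every $\rho_{RA^n}$.

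Finally, I would substitute into the bound from Theorem~\ref{theo:dd_1shot_key_bnd_k_ext} and use monotonicity of $x \mapsto -\log_2(x - 1/k)$ on the relevant range, noting that $2^{-E^{\varepsilon}_k(\mathcal{N}^{\otimes n}(\rho))} \ge 2^{-n\widetilde{E}^{\alpha}_k(\mathcal{N})}(C(n,|A|)/(1-\varepsilon))^{-\alpha/(\alpha - 1)}$, to obtain the claimed bound on $K^{\varepsilon,\to}(\mathcal{N}^{\otimes n}(\rho_{RA^n}))$. Taking the supremum over $\rho_{RA^n}$ and combining with the first display yields the corollary. There is no real obstacle here beyond bookkeeping: the substantive work (the state-level privacy-test bound, the reduction from channels to states, and the weak subadditivity of $\widetilde{E}^{\alpha}_k$ for channels) is already packaged in Theorem~\ref{thm:priv_pass_prob_k_ext}, inequality~\eqref{eq:priv_cap_le_key_st}, and Proposition~\ref{prop:weak_subadd_sandwich}; the only care required is to verify that the validity condition $E^{\varepsilon}_k(\mathcal{N}^{\otimes n}(\rho)) \le \log_2 k$ holds uniformly in $\rho_{RA^n}$ under the stated assumption.
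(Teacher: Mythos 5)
Your proposal is correct and follows essentially the same route as the paper: reduce to states via~\eqref{eq:priv_cap_le_key_st}, apply Theorem~\ref{theo:dd_1shot_key_bnd_k_ext} to $\mathcal{N}^{\otimes n}(\rho_{RA^n})$, pass from $E^{\varepsilon}_k$ to $\widetilde{E}^{\alpha}_k$ via~\eqref{eq:hypo_test_ent_le_sandwich}, and invoke Proposition~\ref{prop:weak_subadd_sandwich} for the single-letter reduction (the paper packages the first two steps as~\eqref{eq:priv_cap_le_hypo_test_st}). Your explicit check that the validity condition $E^{\varepsilon}_k(\mathcal{N}^{\otimes n}(\rho_{RA^n}))\le\log_2 k$ holds uniformly in $\rho_{RA^n}$ is a welcome bit of care that the paper's write-up leaves implicit.
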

\begin{proof}
    See Appendix~\ref{app:n_shot_priv_cap_sandwich_proof}.
\end{proof}

\begin{remark}
    The $n$-shot, forward-assisted quantum capacity of a channel can never be larger than the $n$-shot, forward-assisted private capacity of the channel. This is because every quantum communication protocol can be transformed into a private communication protocol by simply transmitting one share of a maximally entangled state and both parties measuring their respective systems in the computational basis. Therefore, the quantities in Theorem~\ref{theo:priv_cap_ub_hypo_test}, Corollary~\ref{cor:n_shot_priv_cap_ub}, and Corollary~\ref{cor:n_shot_priv_cap_sandwich} also serve as upper bounds on the one-shot, forward-assisted quantum capacity of a channel.
\end{remark}

\section{Numerical examples}\label{sec:applications}

In this section, we demonstrate the results obtained in Sections~\ref{sec:one_way_key_distill} and~\ref{sec:one_way_priv_cap} with some numerical examples. 

First, we compute several quantities of interest in the context of one-way secret-key distillation using isotropic states. In particular, we compute upper bounds on the one-shot, one-way distillable key and the $n$-shot, one-way distillable key rate of isotropic states in Section~\ref{sec:n_shot_iso_key}. In Section~\ref{sec:iso_single_bit}, we compute a lower bound on the minimum number of secret bits needed to distill a single secret with some fixed error tolerance. 

Next, we numerically demonstrate the results from Section~\ref{sec:one_way_priv_cap} for erasure channels. We compute upper bounds on the $n$-shot, forward-assisted private capacity in Section~\ref{sec:n_shot_eras_cap}, and we compute the minimum number of uses of an erasure channel needed to transmit a single secret bit using a one-way LOCC protocol in Section~\ref{sec:eras_ch_min_use}.

\begin{figure}
    \centering
    \includegraphics[width=0.65\linewidth]{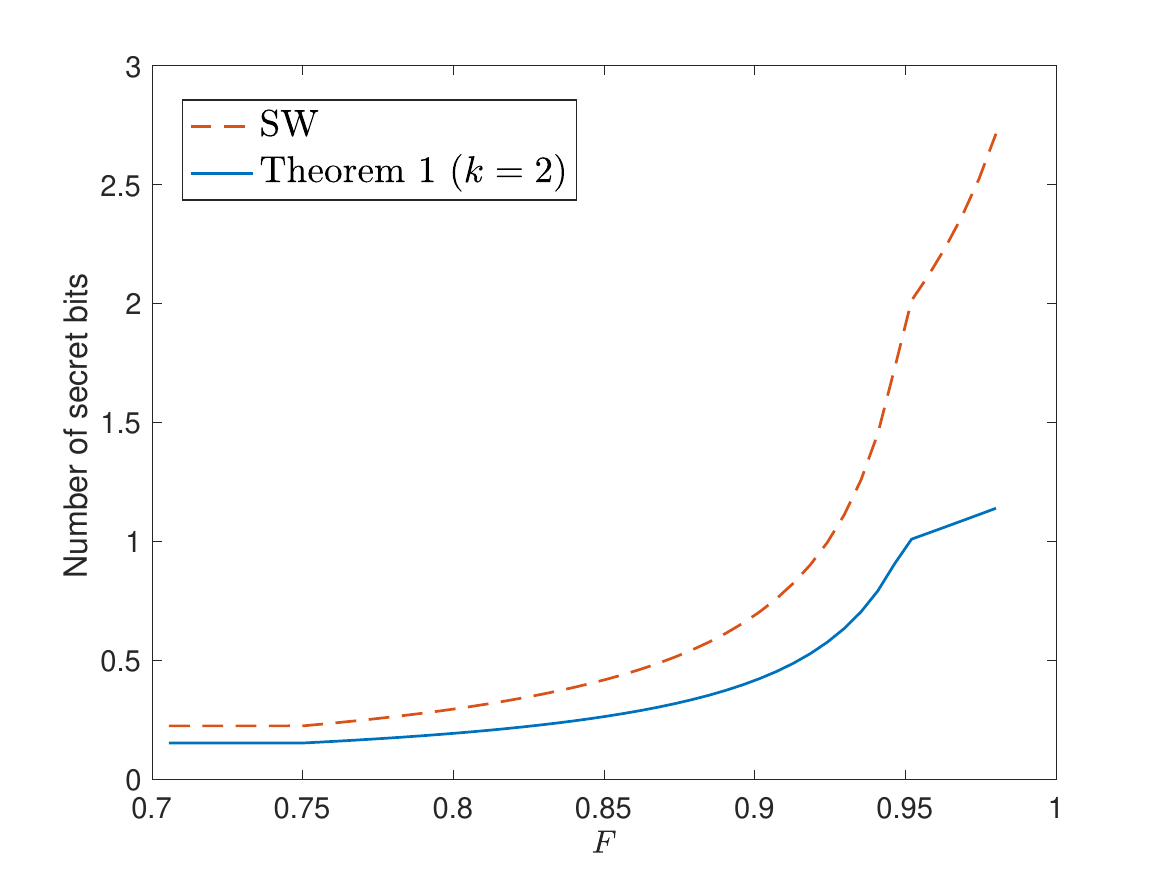}
    \caption{Upper bound on the number of secret bits that can be distilled from a single copy of an isotropic state with $\varepsilon = 0.05$. The bound from Theorem~\ref{theo:dd_1shot_key_bnd_k_ext} is compared against the bound from~\cite[Theorem 2]{SW25} for different values of the parameter $F$ of the isotropic state (see~\eqref{eq:iso_st_defn} for reference).}
    \label{fig:iso_unext_vs_unjoin}
\end{figure}

\subsection{\texorpdfstring{$n$}{n}-Shot, one-way distillable key of isotropic states}\label{sec:n_shot_iso_key}

We first demonstrate our upper bounds on the one-shot, one-way distillable key of an isotropic state using Theorem~\ref{theo:dd_1shot_key_bnd_k_ext}. Recall that a $d$-dimensional isotropic state is defined for a parameter $F\in [0,1]$ as follows~\cite{HH99}:
\begin{equation}\label{eq:iso_st_defn}
	\zeta^{F,d}_{AB} \coloneqq F\Phi^d_{AB} + (1-F)\frac{I_{AB}-\Phi^d_{AB}}{d^2-1}.
\end{equation}
In Figure~\ref{fig:iso_unext_vs_unjoin}, we plot the numerical values of the upper bound on the one-shot, one-way distillable key of an isotropic state obtained for different values of the parameter $F$, with $\varepsilon = 0.05$ and $k=2$. We compare our bounds with the analogous bound obtained in~\cite[Theorem 2]{SW25}, and we observe that the bound from Theorem~\ref{theo:dd_1shot_key_bnd_k_ext} performs better than the bound from~\cite[Theorem 2]{SW25} for this example.

Next, we turn our attention to the task of distilling a secret key from $n$ i.i.d.~copies of an isotropic state. While we can use Corollary~\ref{cor:DD_key_n_shot_ub_sandwich} to compute an upper bound on the $n$-shot, one-way distillable key of an isotropic state, here we exploit the symmetries of isotropic states to compute this quantity using Theorem~\ref{theo:dd_1shot_key_bnd_k_ext} itself in a computationally feasible way.

In~\cite[Proposition 1]{KDWW21}, the authors showed that the $k$-unextendible generalized divergence of an isotropic state $\zeta^{F,d}_{AB}$ is equal to the following:
\begin{equation}
	\mathbf{E}_k\!\left(\zeta^{F,d}_{AB}\right) = \inf_{G\in \left[0,\frac{1}{d} + \frac{1}{k} - \frac{1}{dk}\right]}\mathbf{D}\!\left(\kappa(F)\middle\Vert\kappa(G)\right),
\end{equation}
where 
\begin{equation}
	\kappa(F) \coloneqq F|0\rangle\!\langle 0| + (1-F)|1\rangle\!\langle 1|.
\end{equation}
Furthermore, in~\cite[Section V.A]{KDWW21}, the authors argued that the $k$-unextendible hypothesis-testing divergence of $n$ i.i.d.~copies of an isotropic state is bounded from above by the following:
\begin{equation}\label{eq:n_iso_hypo_test_ub_bernoulli}
	E^{\varepsilon}_k\!\left(\left(\zeta^{F,d}_{AB}\right)^{\otimes n}\right) \le \inf_{G\in \left[0,\frac{1}{d} + \frac{1}{k} - \frac{1}{dk}\right]} D^{\varepsilon}_H\!\left(\left\{F,1-F\right\}^{\times n}
    \middle \Vert\left\{G,1-G\right\}^{\times n}\right),
\end{equation}
where the quantity on the right hand side of the above equality is the hypothesis-testing relative entropy between two Bernoulli distributions. The hypothesis-testing relative entropy between two Bernoulli distributions can be computed using a linear program, which greatly reduces the cost of computing the upper bound on the $n$-shot, one-way distillable key of an isotropic state using Theorem~\ref{theo:dd_1shot_key_bnd_k_ext}. Fixing $G = \frac{1}{d} + \frac{1}{k} - \frac{1}{dk}$, we can eliminate any dependence of the choice of $k$ on the computational complexity of computing an upper bound on the $n$-shot, one-way distillable key of an isotropic state using Theorem~\ref{theo:dd_1shot_key_bnd_k_ext} and~\eqref{eq:n_iso_hypo_test_ub_bernoulli}.

Recall that setting $k\to \infty$ in Theorem~\ref{theo:dd_1shot_key_bnd_k_ext} leads to the hypothesis-testing relative entropy of entanglement, which is a well-known bound on the one-shot distillable key of a state, and hence, an upper bound on the one-shot, one-way distillable key of the state as well. From~\eqref{eq:n_iso_hypo_test_ub_bernoulli}, one can verify that
\begin{equation}\label{eq:hypo_test_rel_entanglement_iso_n_copy}
	E^{\varepsilon}_{k = \infty}\!\left(\left(\zeta^{F,d}_{AB}\right)^{\otimes n}\right) \le  D^{\varepsilon}_H\!\left(\left\{F,1-F\right\}^{\times n}\middle \Vert \left\{\frac{1}{d},1-\frac{1}{d}\right\}^{\times n}\right).
\end{equation}
This relaxation allows us to numerically compare our bounds with the hypothesis-testing relative entropy of entanglement bound on the $n$-shot, one-way distillable key of a state, obtained in~\cite{WTB17}.

\begin{figure}
	\centering
	\includegraphics[width = 0.7\linewidth]{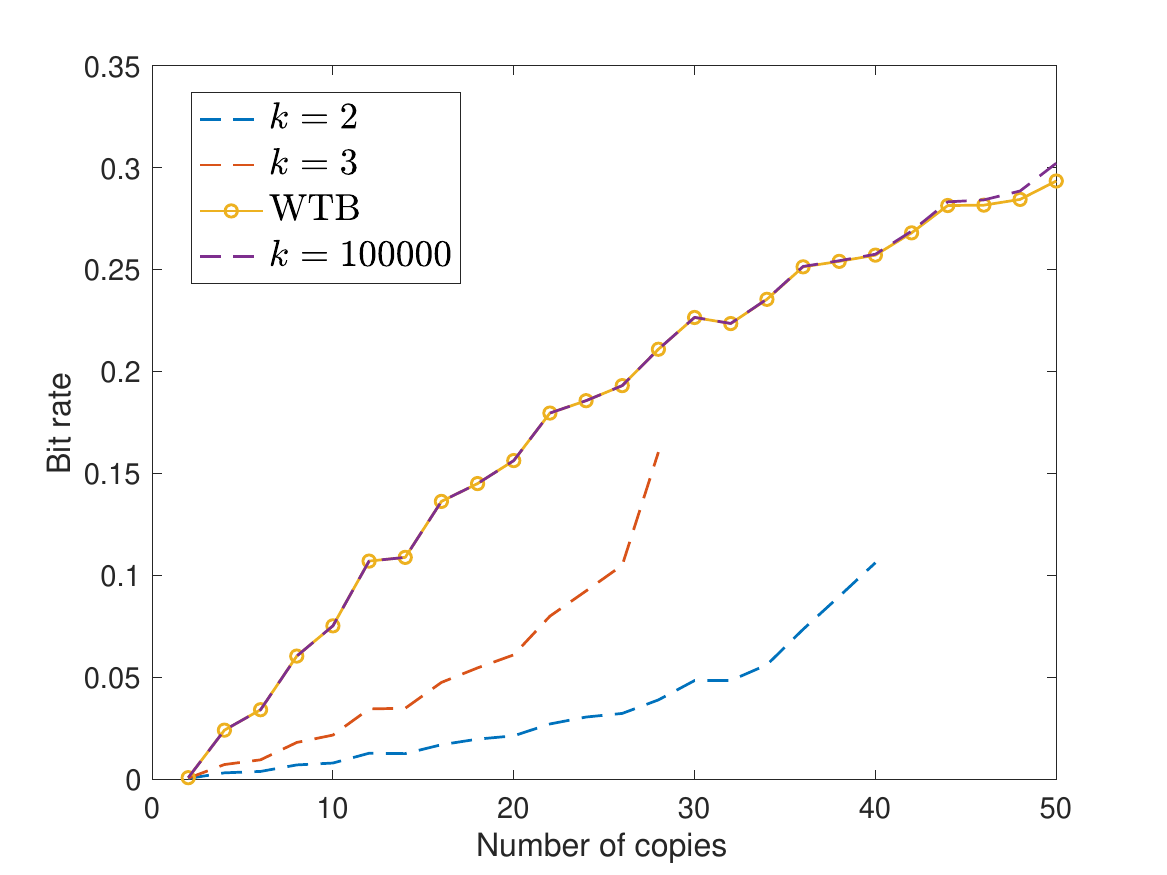}
	\caption{Upper bounds on the $n$-shot, one-way distillable key rate of a two-dimensional isotropic state with $F = 0.95$ and $\varepsilon = 10^{-5}$. The bounds are computed for different values of $k$ using Theorem~\ref{theo:dd_1shot_key_bnd_k_ext}, and they are compared against the hypothesis-testing relative entropy of entanglement bound. The bounds from Theorem~\ref{theo:dd_1shot_key_bnd_k_ext} can only be computed for a finite number of copies of the state, say $n$, since $E^{\varepsilon}_k\!\left(\rho^{\otimes n}\right)$ must be less than $\log_2 k$ for the bound to hold. This restriction manifests itself in the plot as the curves corresponding to $k=2$ and $k=3$ end abruptly.}
	\label{fig:iso_dist_key_hypo_rate}
\end{figure}

In Figure~\ref{fig:iso_dist_key_hypo_rate}, we plot the upper bounds on the rate of distilling secret bits from $n$ copies of an isotropic state using Theorem~\ref{theo:dd_1shot_key_bnd_k_ext}, along with~\eqref{eq:n_iso_hypo_test_ub_bernoulli}, for $k=2, 3$, and $10^5$. We also plot an upper bound on the hypothesis-testing relative entropy of entanglement of $n$ copies of the isotropic state using~\eqref{eq:hypo_test_rel_entanglement_iso_n_copy}, which is a well-known upper bound on the $n$-shot distillable key of a state~\cite{WTB17} and is also achieved by setting $k\to\infty$ in Theorem~\ref{theo:dd_1shot_key_bnd_k_ext}. In this example, we find that fixing $k=2$ in Theorem~\ref{theo:dd_1shot_key_bnd_k_ext} gives the tightest bound on the $n$-shot, one-way distillable key of the isotropic state.

Recall that the bound in Theorem~\ref{theo:dd_1shot_key_bnd_k_ext} holds for a fixed $k$ if and only if the $k$-unextendible hypothesis testing divergence of the state is less than $\log_2 k$. This is reflected in Figure~\ref{fig:iso_dist_key_hypo_rate} as the upper bound from Theorem~\ref{theo:dd_1shot_key_bnd_k_ext} can only be computed for a small number of copies when $k$ is set equal to two or three. One can try to compute the bound from Theorem~\ref{theo:dd_1shot_key_bnd_k_ext} for larger number of copies by choosing a large $k$, as we show in Figure~\ref{fig:iso_dist_key_hypo_rate} by setting $k=10^5$. However, such a choice may significantly worsen the bound. In fact, for the example of isotropic state considered in Figure~\ref{fig:iso_dist_key_hypo_rate}, the bound with $k=10^5$ is less tight than the hypothesis-testing relative entropy of entanglement bound ($k\to\infty$) for fifty copies of an isotropic state.

\subsubsection{Minimum number of copies to distill a single secret bit}\label{sec:iso_single_bit}

The numerical example discussed in Figure~\ref{fig:iso_dist_key_hypo_rate} shows that one needs several copies of an isotropic state before a single secret bit can be distilled using any one-way LOCC protocol. This feature can be seen in the state of the art key distillation protocols, where several copies of an isotropic state are needed before a single secret bit can be distilled despite the protocols achieving significantly higher key rates asymptotically~\cite{TL17}. This motivates the question: What is the minimum number of copies of an isotropic state needed to distill a single secret bit using a  one-way LOCC protocol?

\begin{figure}
	\centering
	\includegraphics[width = 0.65\linewidth]{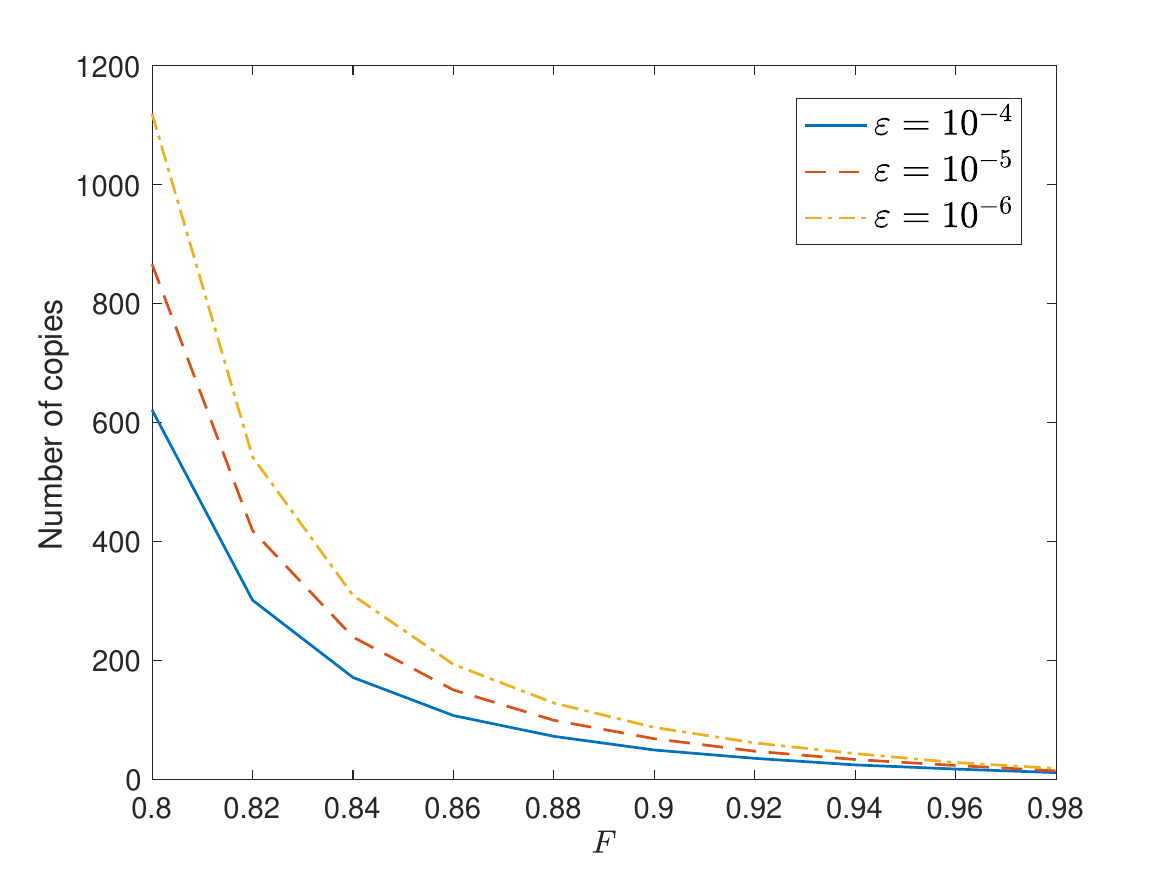}
	\caption{Lower bound on the minimum number of copies of a two-dimensional isotropic state needed to distill a single secret bit, with error tolerance $\varepsilon$, using a one-way LOCC protocol. The lower bound on the minimum number of copies is computed for different values of the parameter $F$ (see~\eqref{eq:iso_st_defn}) and three different values of the error tolerance $\varepsilon$. When $F=1$, the isotropic state is a maximally entangled state, and only a single copy of the state would suffice to distill a secret bit with any $\varepsilon \in [0,1]$. However, the isotropic state becomes increasingly noisy as $F$ decreases, which means that a larger number of copies are needed to distill a single secret bit with the desired error tolerance. }
	\label{fig:min_copies_iso_secret_key}
\end{figure}

The statement of Theorem~\ref{theo:dd_1shot_key_bnd_k_ext}, along with the inequality in~\eqref{eq:n_iso_hypo_test_ub_bernoulli}, allows us to obtain a lower bound on the minimum number of copies of an isotropic state required to distill a single secret bit with some error tolerance $\varepsilon$ using a one-way LOCC protocol. Essentially, we use a binary search to find the smallest $n$ such that the upper bound on $K^{\varepsilon,\to}\!\left(\left(\zeta^{F,d}_{AB}\right)^{\otimes n}\right)$ obtained from Theorem~\ref{theo:dd_1shot_key_bnd_k_ext} is greater than or equal to one. We demonstrate our lower bounds on the minimum number of copies of an isotropic state required to distill a single secret bit using a one-way LOCC protocol in Figure~\ref{fig:min_copies_iso_secret_key}.

We note that the single-letter upper bounds on the $n$-shot, one-way distillable key of a state obtained in Corollary~\ref{cor:DD_key_n_shot_ub_sandwich} yield a simpler bound on the minimum number of copies of a given quantum state needed to distill a single secret bit using a one-way LOCC protocol. For isotropic states, these bounds are much worse than the bounds demonstrated in Figure~\ref{fig:min_copies_iso_secret_key} owing to the fact that Corollary~\ref{cor:DD_key_n_shot_ub_sandwich} are relaxations of the statement in Theorem~\ref{theo:dd_1shot_key_bnd_k_ext}. However, these bounds have the advantage of being efficiently computable for all states, not just highly symmetric states like the isotropic states.

\subsection{Private communication over erasure channels}

A commonly studied class of channels in the context of quantum and private communication are erasure channels. The action of an erasure channel $\mathcal{E}^p_{A\to B}$ on an arbitrary state $\rho_{RA}$ is mathematically described as follows~\cite{GBP97}:
\begin{equation}\label{eq:eras_ch_defn}
    \mathcal{E}^p_{A\to B}\!\left(\rho_{RA}\right) = (1-p)\rho_{RA} + p\operatorname{Tr}_{A}\!\left[\rho_{RA}\right]\otimes |e\rangle\!\langle e|_B,
\end{equation}
where $p$ refers to the erasure probability and $|e\rangle_B$ is the erasure symbol, which is orthogonal to every vector in the Hilbert space $\mathcal{H}_B$. 

\subsubsection{\texorpdfstring{$n$}{n}-Shot, forward-assisted  private capacity of erasure channels}\label{sec:n_shot_eras_cap}

Now, we demonstrate our upper bounds on the $n$-shot private capacity of an erasure channel using Theorem~\ref{theo:priv_cap_ub_hypo_test}.  

Note that a $d$-dimensional erasure channel with an erasure probability greater than or equal to $1-\frac{1}{k}$ is $k$-extendible. This can be verified from its Choi state
\begin{equation}
    \Phi^{\mathcal{E}^{1-1/k}}_{AB} = \frac{1}{k}\Phi^d_{AB} + \left(1-\frac{1}{k}\right)\frac{I_A}{|A|}\otimes |e\rangle\!\langle e|_B,
\end{equation}
which has the following $k$-extension:
\begin{equation}
    \sigma^{\mathcal{E}^{1-1/k}}_{AB_{[k]}} \coloneqq \frac{1}{k}\sum_{i=1}^k\Phi^d_{AB_i}\otimes \bigotimes_{\substack{j=1,\\ i\neq j}}^k|e\rangle\!\langle e|_{B_j}.
\end{equation}
Therefore, 
\begin{equation}
    \mathbf{E}_k\!\left(\mathcal{E}^{p}_{A\to B}\right) \le \mathbf{D}\!\left(\mathcal{E}^p_{A\to B}\middle\Vert\mathcal{E}^{1-1/k}_{A\to B}\right),
\end{equation}
which follows from the definition of $k$-unextendible generalized divergence of channels. 

\begin{proposition}\label{prop:eras_ch_gen_div}
    The $k$-unextendible generalized divergence of a tensor product of erasure channels is bounded from above by the following:
    \begin{equation}
        \mathbf{E}_k\!\left(\left(\mathcal{E}^p_{A\to B}\right)^{\otimes n}\right) \le \mathbf{D}\!\left(\left\{1-p,p\right\}^{\times n}\middle\Vert\left\{\frac{1}{k},1-\frac{1}{k}\right\}^{\times n}\right).
    \end{equation}
\end{proposition}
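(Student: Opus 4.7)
The plan is to upper-bound $\mathbf{E}_k\!\left(\left(\mathcal{E}^p_{A\to B}\right)^{\otimes n}\right)$ by evaluating the generalized channel divergence against a specific $k$-extendible comparator, and then to reduce that channel divergence to a classical Bernoulli divergence via the data-processing inequality.

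First, I would take as comparator the channel $\mathcal{M} = \left(\mathcal{E}^{1-1/k}_{A\to B}\right)^{\otimes n}$. Just before the proposition it is established that $\mathcal{E}^{1-1/k}_{A\to B}$ is a point-to-point $k$-extendible channel, with an explicit $k$-extension $\sigma^{\mathcal{E}^{1-1/k}}_{AB_{[k]}}$ of its Choi state. Taking the $n$-fold tensor product of that extension and observing that it remains invariant under the diagonal permutation action of $S_k$ on the $n$ parallel registers shows that the Choi state of $\left(\mathcal{E}^{1-1/k}\right)^{\otimes n}$ is $k$-extendible, so $\mathcal{M}$ is a feasible point in the infimum defining $\mathbf{E}_k\!\left(\left(\mathcal{E}^p\right)^{\otimes n}\right)$ in~\eqref{eq:k-unext-div-gen-def}.

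Next, I would fix an arbitrary input $\rho_{RA^n}$ and decompose both outputs along the classical \emph{erasure pattern}. For each $x\in\{0,1\}^n$, let $\rho^{(x)}_{RB^n}$ denote the normalized conditional state obtained by erasing exactly the registers $\{i:x_i=1\}$ of $\rho_{RA^n}$ (so each such register is replaced by $|e\rangle\!\langle e|$, while the remaining registers carry the appropriate reduced state of $\rho_{RA^n}$). From the definition~\eqref{eq:eras_ch_defn} of the erasure channel,
\begin{align}
\left(\mathcal{E}^p\right)^{\otimes n}\!\left(\rho_{RA^n}\right) &= \sum_{x\in\{0,1\}^n} (1-p)^{n-|x|} p^{|x|}\, \rho^{(x)}_{RB^n},\\
\left(\mathcal{E}^{1-1/k}\right)^{\otimes n}\!\left(\rho_{RA^n}\right) &= \sum_{x\in\{0,1\}^n} \left(\tfrac{1}{k}\right)^{n-|x|}\!\left(1-\tfrac{1}{k}\right)^{|x|}\, \rho^{(x)}_{RB^n}.
\end{align}
The key point is that the family $\{\rho^{(x)}\}_x$ is shared by both outputs and depends only on $\rho_{RA^n}$; only the mixing weights differ, and they coincide with the two Bernoulli product distributions appearing in the statement.

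With this decomposition in hand, I would define the preparation channel $\Lambda_{X\to RB^n}$, where $X$ is a classical system of dimension $2^n$, by $\Lambda\!\left(|x\rangle\!\langle x|_X\right) \coloneqq \rho^{(x)}_{RB^n}$ and linear extension. This is a valid CPTP map, and by construction $\Lambda$ sends the diagonal state of the distribution $\{1-p,p\}^{\times n}$ to $\left(\mathcal{E}^p\right)^{\otimes n}\!\left(\rho_{RA^n}\right)$ and the diagonal state of $\{1/k,1-1/k\}^{\times n}$ to $\left(\mathcal{E}^{1-1/k}\right)^{\otimes n}\!\left(\rho_{RA^n}\right)$. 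The data-processing inequality for the generalized divergence then yields
\begin{equation}
\mathbf{D}\!\left(\{1-p,p\}^{\times n}\middle\Vert\{\tfrac{1}{k},1-\tfrac{1}{k}\}^{\times n}\right) \ge \mathbf{D}\!\left(\left(\mathcal{E}^p\right)^{\otimes n}\!\left(\rho_{RA^n}\right)\middle\Vert\left(\mathcal{E}^{1-1/k}\right)^{\otimes n}\!\left(\rho_{RA^n}\right)\right).
\end{equation}
Taking the supremum over $\rho_{RA^n}$ bounds the channel divergence $\mathbf{D}\!\left(\left(\mathcal{E}^p\right)^{\otimes n}\middle\Vert\left(\mathcal{E}^{1-1/k}\right)^{\otimes n}\right)$ by the same classical quantity, and combining with the feasibility of $\mathcal{M}$ established in the first step finishes the proof. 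The only mildly delicate step is the bookkeeping needed to define $\rho^{(x)}$ and verify the common-decomposition formula above; once that is done, the argument is a single application of data-processing, which any generalized divergence satisfies by definition.
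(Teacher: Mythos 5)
Your proposal is correct, and it reaches the bound by a genuinely different route from the paper. The paper first invokes covariance of erasure channels with respect to a unitary one-design together with \cite[Corollary~II.5]{LKDW18} to reduce the channel divergence $\mathbf{D}\!\left(\left(\mathcal{E}^p\right)^{\otimes n}\middle\Vert\left(\mathcal{E}^{1-1/k}\right)^{\otimes n}\right)$ to the state divergence at the maximally entangled input, and then equates that state divergence with the classical Bernoulli divergence via \emph{two} applications of data processing (a pattern-counting POVM in one direction, a preparation map in the other). You instead fix an arbitrary input $\rho_{RA^n}$, observe that both channel outputs are mixtures of the \emph{same} family of erasure-pattern states $\rho^{(x)} = \bigl(\bigotimes_i \mathcal{C}_{x_i}\bigr)(\rho_{RA^n})$ (with $\mathcal{C}_0 = \mathrm{id}$ and $\mathcal{C}_1 = \operatorname{Tr}\otimes|e\rangle\!\langle e|$) weighted by the two product Bernoulli distributions, and apply data processing once through the preparation channel $\Lambda$. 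Your version is more elementary and self-contained: it needs neither the one-design covariance argument nor the external corollary from~\cite{LKDW18}, and it handles every input uniformly rather than reducing to the maximally entangled one. What it gives up is the reverse inequality: the paper's argument actually establishes the \emph{equality} $\mathbf{D}\!\left(\left(\mathcal{E}^p\right)^{\otimes n}\middle\Vert\left(\mathcal{E}^{q}\right)^{\otimes n}\right) = \mathbf{D}\!\left(\{1-p,p\}^{\times n}\middle\Vert\{1-q,q\}^{\times n}\right)$, whereas you prove only the $\le$ direction for the particular comparator — which is all the proposition requires. Your feasibility argument for $\left(\mathcal{E}^{1-1/k}\right)^{\otimes n}$ (tensoring the explicit single-copy extension and checking invariance under the diagonal $S_k$ action) is also sound and matches what the paper implicitly relies on.
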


\begin{proof}
    See Appendix~\ref{app:eras_ch_gen_div}.
\end{proof}

\medskip

\begin{figure}
    \centering
    \begin{subfigure}{0.45\linewidth}
        \includegraphics[width=\linewidth]{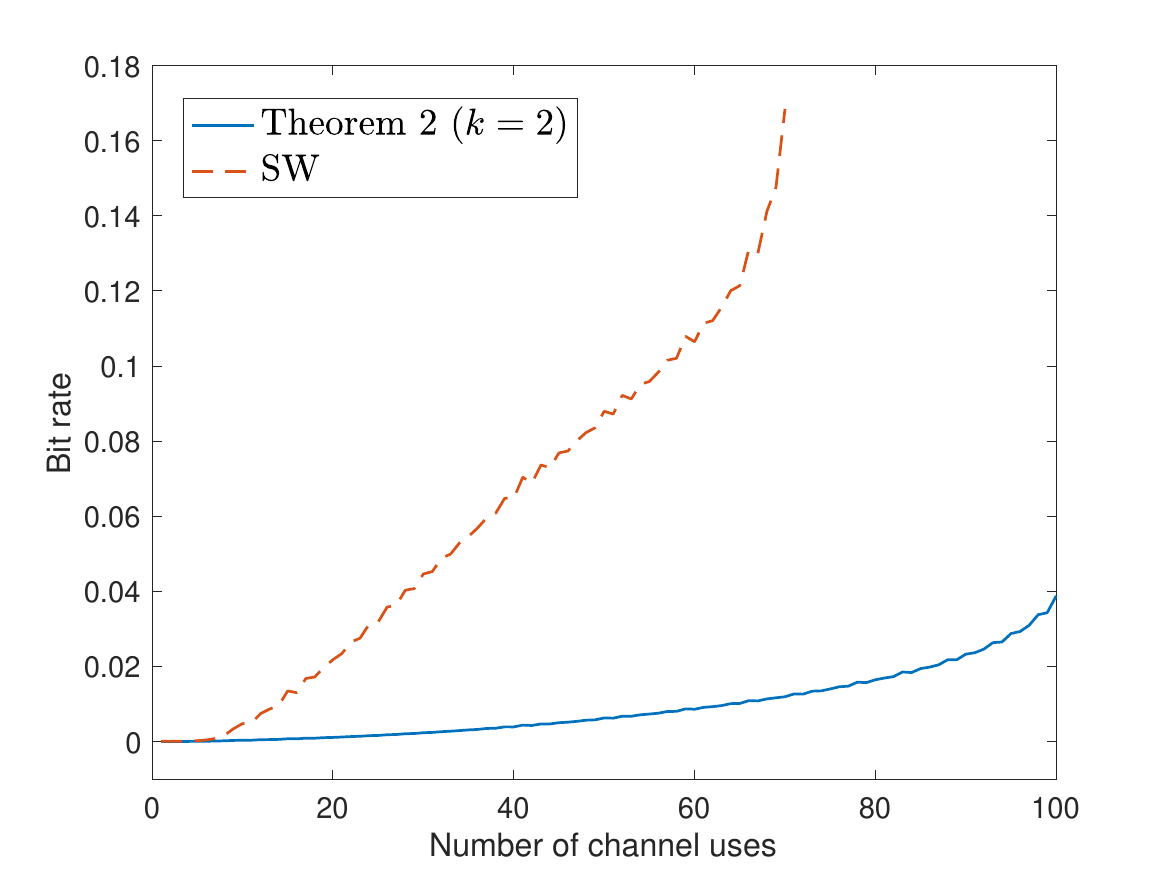}
        \caption{\centering Rate of private communication}
        \label{fig:eras_ch_key_rate}
    \end{subfigure}
    \begin{subfigure}{0.45\linewidth}
        \includegraphics[width=\linewidth]{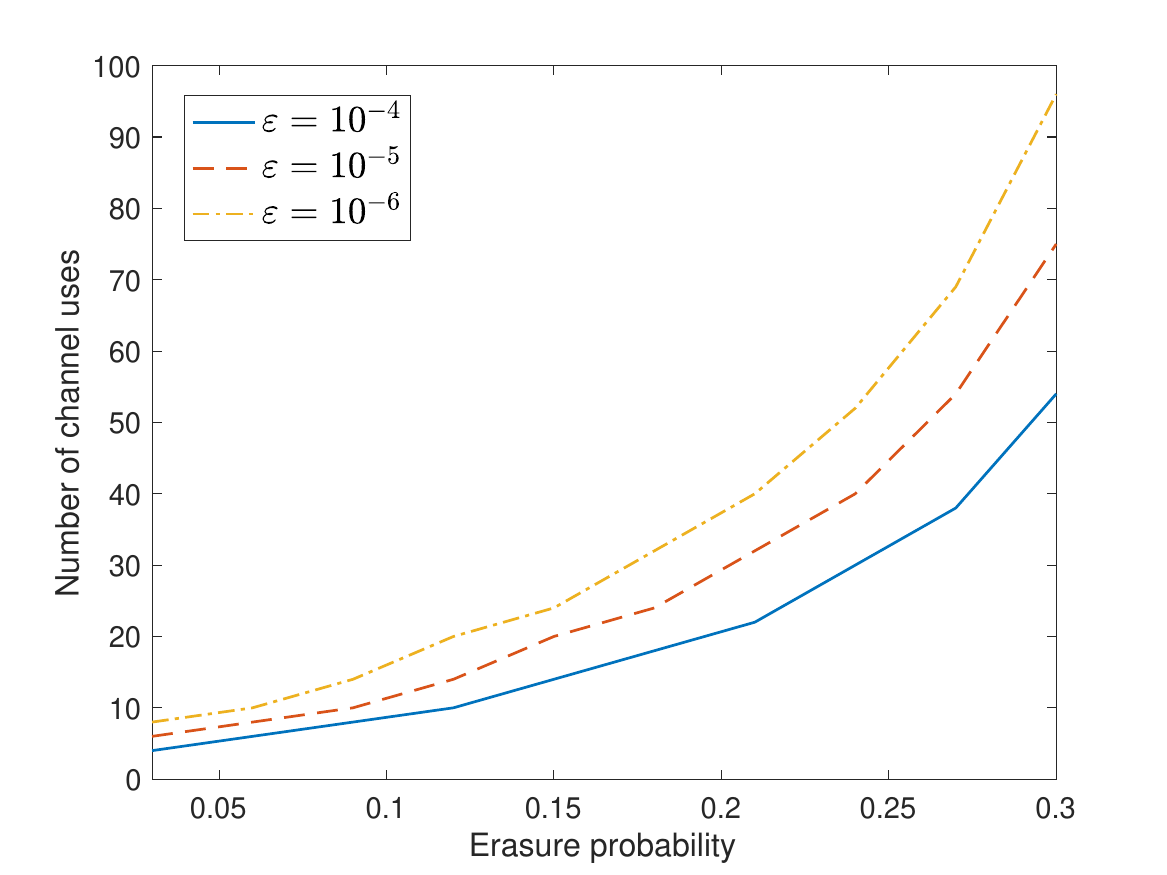}
        \caption{\centering Single bit transmission cost}
        \label{fig:eras_ch_min_uses}
    \end{subfigure}
    \caption{\textbf{(a)} Upper bound on the $n$-shot, forward-assisted private capacity of an erasure channel using Theorem~\ref{theo:priv_cap_ub_hypo_test} and~\cite[Theorem 4]{SW25}. The erasure probability is set equal to 0.3 and the error tolerance is set equal to $10^{-5}$. The upper bound from~\cite[Theorem 4]{SW25} holds for only 70 channel uses for this choice of parameters. However, the bound from Theorem~\ref{theo:priv_cap_ub_hypo_test} holds for 104 channel uses with this choice of parameters.
    \textbf{(b)} Lower bound on the minimum number of uses of an erasure channel needed to securely transmit a single bit over the channels, assisted by local operations and forward public communication.}
    \label{fig:eras_ch_bounds}
    
\end{figure}

Proposition~\ref{prop:eras_ch_gen_div} allows us to compute an upper bound on the $n$-shot, forward-assisted private capacity of an erasure channel using Theorem~\ref{theo:priv_cap_ub_hypo_test}, efficiently in $n$.

In Figure~\ref{fig:eras_ch_key_rate}, we plot upper bounds on the rate at which bits can be transmitted securely over multiple uses of an erasure channel. We chose the erasure probability to be equal to 0.3 and the error tolerance was set equal to $10^{-5}$. We compare our bounds against the upper bound on the one-shot, forward-assisted private capacity discovered in~\cite[Theorem 4]{SW25}. To compute the smooth-min unextendible entanglement of a tensor product of erasure channels, which is the quantity of interest in~\cite[Theorem 4]{SW25}, we use the fact that $\left(\mathcal{E}^p_{A\to B}\right)^{\otimes n}$ can be joined with $\left(\mathcal{E}^{1-p}_{A\to B}\right)^{\otimes n}$. This may not be an optimal choice, but it yields an upper bound on the quantity in~\cite[Theorem 4]{SW25} for $\left(\mathcal{E}^p_{A\to B}\right)^{\otimes n}$, facilitating a comparison with our bounds.

We did not compare our bounds with~\cite{WTB17} for this example because the bound from~\cite{WTB17} is much higher than the bound from Theorem~\ref{theo:priv_cap_ub_hypo_test} for this example. 

\subsubsection{Minimum number of channel uses to securely transmit a single bit}\label{sec:eras_ch_min_use}

In this section, we use Theorem~\ref{theo:priv_cap_ub_hypo_test} to compute a lower bound on the minimum number of uses of an erasure channel needed to transmit a single bit securely over the channels, assisted by local operations and an arbitrary amount of forward public communication.

Similar to our approach in Section~\ref{sec:iso_single_bit}, we employ binary search to find the smallest positive integer $n$ such that the upper bound on $P^{\varepsilon,\to}\!\left(\left(\mathcal{E}^p_{A\to B}\right)^{\otimes n}\right)$, obtained using Theorem~\ref{theo:priv_cap_ub_hypo_test} along with Proposition~\ref{prop:eras_ch_gen_div}, is greater than or equal to one. In Figure~\ref{fig:eras_ch_min_uses}, we plot our lower bounds on the minimum number of uses of an erasure channel to transmit a single bit securely over the channels, with the assistance of local operations and forward public communication, for different values of erasure probability $p$ and error tolerance $\varepsilon$.

\section{Conclusion}

\subsection{Summary}

In this paper, we determined the maximum probability with which a $k$-extendible state can pass a privacy test, and we found it to be equal to the maximum fidelity between a $k$-extendible state and the standard maximally entangled state.

As an application of our finding, we showed that the limits on quantum communication under freely available local operations and forward classical communication found in~\cite{KDWW19,KDWW21} are in fact limits on private communication under the same set of free operations, generalizing the results of~\cite{KDWW19,KDWW21}. As a consequence, we obtained upper bounds on the one-shot, one-way distillable key of a state and the one-shot, forward-assisted private capacity of a channel, which can be computed using a semidefinite program.

We also extended our formalism to the $n$-shot setting and obtained single-letter upper bounds on the $n$-shot, one-way distillable key of a state and the $n$-shot, forward-assisted private capacity of a channel, both of which can be computed using a semidefinite program. 

\subsection{Future directions}

A question that still remains unanswered is as follows: For a collection of joinable states, what are the maximum probabilities with which each of them can pass a privacy test? Since the maximum fidelity of a $k$-extendible state with the standard maximally entangled state turned out to be the maximum probability with which a $k$-extendible state passes the privacy test, one might expect a similar result to hold for joinable states as well.  

We were able to numerically demonstrate that the bounds on the one-shot, one-way distillable key of a state and the one-shot, forward-assisted private capacity of a state based on extendibility, obtained in this work, performed better than the bounds obtained in~\cite{SW25}, which were based on unjoinability. However, a stronger understanding of the connections between the resource theory of $k$-unextendibility developed in~\cite{KDWW19,KDWW21} and the resource theory of unextendible entanglement~\cite{WWW24} may shed light onto the regimes in which either bound performs better than the other.

\section*{Acknowledgements}

VS  thanks Ian George, Theshani Nuradha, Ernest Y.-Z.~Tan, and Marco Tomamichel for helpful discussions. The authors also thank the organizers of Quantum Resources 2025, held in Jeju, Korea, for organizing the conference and especially the open problems session, which served as a catalyst for this paper.

VS thanks the Dieter Schwarz Exchange Programme on Quantum Communication and Security at the Centre for Quantum Technologies for support. 
KH acknowledges support of the National Science Centre, Poland, under grant Opus 25, 2023/49/B/ST2/02468.
AP acknowledges support from the National Science Centre Poland (Grant No. 2022/46/E/ST2/00115). MMW acknowledges support
from the National Science Foundation under grant no.~2329662 and
from the Cornell School of Electrical and Computer Engineering.

\appendix

\appendix

\section{Semidefinite programs}\label{app:SDPs}

In this section we present all the semidefinite programs that were used in this work.

\begin{enumerate}
    \item \textbf{$k$-unextendible hypothesis-testing divergence of a state:}
    \begin{equation}
        E^{\varepsilon}_{k}(\rho_{AB}) = -\log_2\max\left\{
        \begin{array}{c}
             \mu(1-\varepsilon)-\operatorname{Tr}\!\left[Z_{AB}\right]: \\
             \mu \ge 0, Z_{AB} \ge 0, \sigma_{AB_{[k]}} \ge 0,\\
             \mu\rho_{AB} \le \operatorname{Tr}_{B_{[k]\setminus 1}}\!\left[\sigma_{AB_{[k]}}\right] + Z_{AB},\\
             W^{\pi}_{B_{[k]}}\sigma_{AB_{[k]}}\!\left(W^{\pi}_{B_{[k]}}\right)^{\dagger} = \sigma_{AB_{[k]}}\quad \forall \pi \in S_k,\\
             \operatorname{Tr}\!\left[\sigma_{AB_{[k]}}\right] = 1
        \end{array}
        \right\}.
    \end{equation}
    \item \textbf{$k$-unextendible max divergence of a state:} \begin{equation}\label{eq:max_unext_ent_st_SDP}
        E^{\max}_k\!\left(\rho_{AB}\right) = -\log_2 \max\left\{
        \begin{array}{c}
             \lambda:  \\
             \lambda \rho_{AB} \le \operatorname{Tr}_{B}\!\left[\sigma_{AB_{[k]}}\right],\\
             \sigma_{AB_{[k]}} \ge 0,\\
             W^{\pi}_{B_{[k]}}\sigma_{AB_{[k]}}\!\left(W^{\pi}_{B_{[k]}}\right)^{\dagger} = \sigma_{AB_{[k]}}\quad \forall \pi \in S_k,\\
             \operatorname{Tr}\!\left[\sigma_{AB_{[k]}}\right] = 1
        \end{array}
        \right\}.
    \end{equation}
    \item \textbf{$k$-unextendible hypothesis-testing divergence of a channel:} The hypothesis-testing relative entropy of a channel $\mathcal{N}$ with respect to a channel $\mathcal{M}$ has a semidefinite program, which was given in~\cite[Proposition~2]{WW19}. We use it to write the semidefinite program for the $k$-unextendible hypothesis-testing divergence of a channel as follows:
    \begin{equation}
        E^{\varepsilon}_{k}\!\left(\mathcal{N}_{A\to B}\right) = -\log_2 \max\left\{
        \begin{array}{c}
            \mu(1-\varepsilon) - \lambda: \\
             \lambda \ge 0, \mu \ge 0, Y_{AB}\ge 0,\Gamma^{\mathcal{P}}_{AB_{[k]}}\ge 0,\\
             \mu\Gamma^{\mathcal{N}}_{AB} \le \operatorname{Tr}_{B_{[k]\setminus 1}}\!\left[\Gamma^{\mathcal{P}}_{AB_{[k]}}\right]+Y_{AB}\\
             \operatorname{Tr}_{B}\!\left[Y_{AB}\right] \le \lambda I_A,\\
             W^{\pi}_{B_{[k]}}\Gamma^{\mathcal{P}}_{AB_{[k]}}\!\left(W^{\pi}_{B_{[k]}}\right)^{\dagger} = \Gamma^{\mathcal{P}}_{AB_{[k]}}\quad \forall \pi \in S_k,\\
             \operatorname{Tr}_{B_{[k]}}\!\left[\Gamma^{\mathcal{P}}_{AB_{[k]}}\right] = I_A
        \end{array}
        \right\},
    \end{equation}
    where $\Gamma^{\mathcal{N}}_{AB}$ is the Choi operator of the channel $\mathcal{N}_{A\to B}$ defined in~\eqref{eq:Choi_op_defn}.
    
        \item \textbf{$k$-unextendible geometric R\'enyi divergence of a channel:} Fix $\ell \in \mathbb{N}$. The $k$-unextendible geometric R\'enyi divergence of a channel $\mathcal{N}_{A\to B}$ for $\alpha = 1+2^{-\ell}$ can be computed using the following semidefinite program:
    \begin{equation}
		\widehat{E}^{\alpha}_k\!\left(\mathcal{N}_{A\to B}\right) = 2^\ell \min_{\substack{y\in \mathbb{R}, \Gamma^{\mathcal{P}}_{AB_{[k]}} \ge 0\\M_{AB}, \left\{N^i_{AB}\right\}_{i=0}^{\ell},\in \operatorname{Herm}}} \log_2 y,
	\end{equation}
	subject to the constraints,
	\begin{align}
		W^{\pi}_{B_{[k]}}\Gamma^{\mathcal{P}}_{AB_{[k]}}\!\left(W^{\pi}_{B_{[k]}}\right)^{\dagger} &= \Gamma^{\mathcal{P}}_{AB_{[k]}}\quad \forall \pi \in S_k,\\
             \operatorname{Tr}_{B_{[k]}}\!\left[\Gamma^{\mathcal{P}}_{AB_{[k]}}\right] &= I_A,\\
		\operatorname{Tr}_{B} \left[M_{AB}\right] &\le yI_A\label{eq:geo_unext_ent_cond_1},\\
		\operatorname{Tr}_{B_{[k]\setminus 1}} \left[\Gamma^{\mathcal{P}}_{AB_{[k]}}\right] &= N^0_{AB}, \label{eq:geo_unext_ent_cond_2}\\
		\begin{bmatrix}
			M_{AB} & \Gamma^{\mathcal{N}}_{AB}\\
			\Gamma^{\mathcal{N}}_{AB} & N^{\ell}_{AB}
		\end{bmatrix}
		&\ge 0,\label{eq:geo_unext_ent_cond_3}\\
		\begin{bmatrix}
			\Gamma^{\mathcal{N}}_{AB} & N^i_{AB}\\
			N^i_{AB} & N^{i-1}_{AB}
		\end{bmatrix}
		&\ge 0 \quad \forall i\in \{1,2,\ldots,\ell\},\label{eq:geo_unext_ent_cond_4}
	\end{align}
 where $\Gamma^{\mathcal{N}}_{AB}$ is the Choi operator of the channel $\mathcal{N}_{A\to B}$. To compute the $\alpha$-geometric unextendible entanglement of the channel for other rational values of $\alpha$ see~\cite[Table 4]{FS17}.
\end{enumerate}

\section{Proof of Theorem~\ref{theo:priv_cap_ub_hypo_test}}\label{app:priv_cap_ub}

In this section, we present the proof of Theorem~\ref{theo:priv_cap_ub_hypo_test}.

Recall the inequality in~\eqref{eq:priv_cap_le_key_st}. Now using Theorem~\ref{theo:dd_1shot_key_bnd_k_ext}, we have
\begin{align}
    P^{\varepsilon,\to}\!\left(\mathcal{N}_{A\to B}\right) &\le \sup_{\rho_{RA}\in \mathcal{S}(RA)} K^{\varepsilon,\to}\!\left(\mathcal{N}_{A\to B}\!\left(\rho_{RA}\right)\right),\\
    &\le \sup_{\rho_{RA}\in \mathcal{S}(RA)}\left\{-\log_2\!\left(2^{-E^{\varepsilon}_k\left(\mathcal{N}\left(\rho_{RA}\right)\right)}- \frac{1}{k}\right) + \log_2\!\left(\frac{k-1}{k}\right)\right\}\\
    &= \log_2\!\left(\frac{k-1}{k}\right) - \inf_{\rho_{RA}\in \mathcal{S}(RA)}\log_2\!\left(2^{-E^{\varepsilon}_k\left(\mathcal{N}\left(\rho_{RA}\right)\right)}- \frac{1}{k}\right)\\
    &= \log_2\!\left(\frac{k-1}{k}\right) - \log_2\!\left(2^{-\sup_{\rho_{RA}\in \mathcal{S}(RA)}E^{\varepsilon}_k\left(\mathcal{N}\left(\rho_{RA}\right)\right)}- \frac{1}{k}\right),\label{eq:priv_cap_le_hypo_test_st}
\end{align}
where the final equality follows from the monotonicity of the logarithm and the exponential functions. 

The statement of Lemma~\ref{lem:unext_div_ch_ge_unext_div_st} implies the following inequality:
\begin{align}
    \sup_{\rho_{RA}\in \mathcal{S}(RA)}E^{\varepsilon}_k\!\left(\mathcal{N}\!\left(\rho_{RA}\right)\right) &\le E^{\varepsilon}_k\!\left(\mathcal{N}_{A\to B}\right)\\
    \implies -\sup_{\rho_{RA}\in \mathcal{S}(RA)}E^{\varepsilon}_k\!\left(\mathcal{N}\!\left(\rho_{RA}\right)\right) &\ge -E^{\varepsilon}_k\!\left(\mathcal{N}_{A\to B}\right) \\
    \implies 2^{-\sup_{\rho_{RA}\in \mathcal{S}(RA)}E^{\varepsilon}_k\!\left(\mathcal{N}\!\left(\rho_{RA}\right)\right)}-\frac{1}{k} &\ge 2^{E^{\varepsilon}_k\!\left(\mathcal{N}_{A\to B}\right)} - \frac{1}{k}\\
    \implies  - \log_2\!\left(2^{-\sup_{\rho_{RA}\in \mathcal{S}(RA)}E^{\varepsilon}_k\left(\mathcal{N}\left(\rho_{RA}\right)\right)}- \frac{1}{k}\right) &\le -\log_2\!\left(2^{-E^{\varepsilon}_k\!\left(\mathcal{N}_{A\to B}\right)} - \frac{1}{k}\right).
\end{align}
Therefore,
\begin{equation}
    P^{\varepsilon,\to}\!\left(\mathcal{N}_{A\to B}\right) \le -\log_2\!\left(2^{-E^{\varepsilon}_k\!\left(\mathcal{N}_{A\to B}\right)} - \frac{1}{k}\right) + \log_2\!\left(\frac{k-1}{k}\right).
\end{equation}
This concludes the proof.

\section{Proof of Corollary~\ref{cor:n_shot_priv_cap_ub}}\label{app:n_shot_priv_cap_ub_proof}

In this section, we present the proof of the Corollary~\ref{cor:n_shot_priv_cap_ub}. 

The $k$-unextendible geometric R\'enyi divergence of a channel is subadditive under tensor product for every $\alpha \in (0,1)\cup (1,2]$. This is easily seen from the following argument: Fix $\alpha \in (0,1)\cup(1,2]$. Let $\mathcal{N}_{A\to B}$ and $\mathcal{M}_{C\to D}$ be arbitrary channels, and let $\mathcal{P}_{A\to B}$ and $\mathcal{Q}_{C\to D}$ be arbitrary $k$-extendible channels. Since a tensor product of $k$-extendible channels is also $k$-extendible, we can write
 \begin{align}
    \widehat{E}^{\alpha}_k\!\left(\mathcal{N}\otimes \mathcal{M}\right) &\le \widehat{D}^{\alpha}\!\left(\mathcal{N}\otimes\mathcal{M}\middle\Vert\mathcal{P}\otimes\mathcal{Q}\right)\\
    &= \widehat{D}^{\alpha}\!\left(\mathcal{N}\middle\Vert\mathcal{P}\right) + \widehat{D}^{\alpha}\!\left(\mathcal{M}\middle\Vert\mathcal{Q}\right),\label{eq:geo_ch_subadditive_1}
\end{align}
where the inequality follows from the definition of $k$-unextendible geometric R\'enyi divergence of channels and the equality follows from the additivity of $\alpha$-geometric R\'enyi relative entropy of channels for every $\alpha\in (0,1)\cup(1,2]$~\cite{FF21,KW21}. Since the inequality in~\eqref{eq:geo_ch_subadditive_1} holds for every $k$-extendible channel $\mathcal{P}_{A\to B}$ and every $k$-extendible channel $\mathcal{Q}_{C\to D}$, we can write
\begin{align}
    \widehat{E}^{\alpha}_k\!\left(\mathcal{N}\otimes \mathcal{M}\right) &\le \inf_{\mathcal{P},\mathcal{Q}\in \operatorname{CPTP}}\left\{\begin{array}{c}
         \widehat{D}^{\alpha}\!\left(\mathcal{N}\middle\Vert\mathcal{P}\right) + \widehat{D}^{\alpha}\!\left(\mathcal{M}\middle\Vert\mathcal{Q}\right):  \\
         \mathcal{P}_{A\to B}\!\left(\Phi_{RA}\right)\in \operatorname{Ext}_k\!\left(R\!:\!B\right), \mathcal{Q}_{C\to D}\!\left(\Phi_{R'C}\right)\in \operatorname{Ext}_k\!\left(R'\!:\!D\right)
    \end{array}\right\}\\
    &= \widehat{E}^{\alpha}_k\!\left(\mathcal{N}_{A\to B}\right) + \widehat{E}^{\alpha}_k\!\left(\mathcal{M}_{C\to D}\right),
\end{align}
which shows that the $k$-unextendible geometric R\'enyi divergence of channels is subadditive under tensor product for every $\alpha \in (0,1)\cup (1,2]$.

The $\alpha$-geometric R\'enyi relative entropy of states is larger than the $\alpha$-sandwiched R\'enyi relative entropy of states for every $\alpha \in \left(\frac{1}{2},1\right)\cup(1,2]$. That is, the following inequality holds for any two states $\rho$ and $\sigma$:
\begin{equation}
    \widehat{D}_{\alpha}\!\left(\rho\middle\Vert\sigma\right) \ge \widetilde{D}_{\alpha}\!\left(\rho\middle\Vert\sigma\right) \qquad \forall \alpha \in \left(\frac{1}{2},1\right)\cup(1,2].
\end{equation}
Consequently,
\begin{equation}
    \widehat{E}^{\alpha}_k\!\left(\rho_{AB}\right) \ge \widetilde{E}^{\alpha}_k\!\left(\rho_{AB}\right)\qquad \forall \alpha \in \left(\frac{1}{2},1\right)\cup(1,2],
\end{equation}
which, when combined with~\eqref{eq:hypo_test_ent_le_sandwich}, leads to the following inequality:
\begin{equation}
    E^{\varepsilon}_k\!\left(\rho_{AB}\right) \le \widehat{E}^{\alpha}_k\!\left(\rho_{AB}\right) + \frac{\alpha}{\alpha - 1}\log_2\!\left(\frac{1}{1-\varepsilon}\right)\qquad \forall \alpha\in (1,2].
\end{equation}
As such, for any channel $\mathcal{N}_{A\to B}$, and any state $\rho_{RA^n}$, the following inequality holds:
\begin{align}
    E^{\varepsilon}_k\!\left(\mathcal{N}^{\otimes n}\!\left(\rho_{RA^n}\right)\right) &\le \widehat{E}^{\alpha}_k\!\left(\mathcal{N}^{\otimes n}\!\left(\rho_{RA^n}\right)\right) + \frac{\alpha}{\alpha - 1}\log_2\!\left(\frac{1}{1-\varepsilon}\right)\qquad \forall \alpha \in (1,2]\\
    &\le \widehat{E}^{\alpha}_k\!\left(\mathcal{N}^{\otimes n}\right) + \frac{\alpha}{\alpha - 1}\log_2\!\left(\frac{1}{1-\varepsilon}\right)\qquad \forall \alpha \in (1,2]\\
    &\le n\widehat{E}^{\alpha}_k\!\left(\mathcal{N}_{A\to B}\right) + \frac{\alpha}{\alpha - 1}\log_2\!\left(\frac{1}{1-\varepsilon}\right)\qquad \forall \alpha \in (1,2],\label{eq:hypo_ch_n_shot_le_geo_ch}
\end{align}
where the second inequality follows from Lemma~\ref{lem:unext_div_ch_ge_unext_div_st} and the final inequality follows from the subadditivity of the $k$-unextendible geometric R\'enyi divergence of channels. Substituting the inequality in~\eqref{eq:hypo_ch_n_shot_le_geo_ch} into~\eqref{eq:priv_cap_le_hypo_test_st} and using the monotonicity of the logarithm and exponential functions, we arrive at the following inequality:
\begin{align}
    P^{\varepsilon,\to}\!\left(\mathcal{N}_{A\to B}\right) &\le \log_2\!\left(\frac{k-1}{k}\right) - \log_2\!\left(2^{-n\widehat{E}^{\alpha}_k\left(\mathcal{N}\right) + \frac{\alpha}{\alpha - 1}\log_2\left(1-\varepsilon\right)} - \frac{1}{k}\right)\\
    &= \log_2\!\left(\frac{k-1}{k}\right) - \log_2\!\left(2^{-n\widehat{E}^{\alpha}_k\left(\mathcal{N}\right)}(1-\varepsilon)^{\frac{\alpha}{\alpha - 1}} - \frac{1}{k} \right).
\end{align}
This concludes the proof.

\section{Proof of Proposition~\ref{prop:weak_subadd_sandwich}}\label{app:weak_subadd_sandwich_proof}

In this section, we prove the weak subadditivity property of the $k$-unextendible sandwiched R\'enyi divergence of channels stated in Proposition~\ref{prop:weak_subadd_sandwich}.

We begin by showing that the $k$-unextendible sandwiched R\'enyi divergence of states is quasi-convex. 

\begin{proposition}\label{prop:sandwich_ent_quasi_convex}
    Let $\rho^1_{AB}$ and $\rho^2_{AB}$ be arbitrary bipartite states. Then,
    \begin{equation}
        \widetilde{E}^{\alpha}_k\!\left(\lambda \rho^1_{AB} + (1-\lambda)\rho^2_{AB}\right) \le \max_{x\in \{1,2\}}\widetilde{E}^{\alpha}_k\!\left(\rho^x_{AB}\right) \qquad \forall k\ge 2, \alpha >1, \lambda \in [0,1].
    \end{equation}
\end{proposition}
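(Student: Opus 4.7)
The plan is to reduce the claim to the (joint) quasi-convexity of the underlying sandwiched R\'enyi relative entropy, which is available for $\alpha > 1$, and then use the fact that the set $\operatorname{Ext}_k(A\!:\!B)$ is convex to construct a single $k$-extendible ``test state'' for the mixture.

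First, for an arbitrary $\delta > 0$, I would invoke the definition of $\widetilde{E}^{\alpha}_k$ in~\eqref{eq:sandwich_unext_ent} to pick near-optimal $k$-extendible states $\sigma^1_{AB},\sigma^2_{AB} \in \operatorname{Ext}_k(A\!:\!B)$ satisfying
\begin{equation}
    \widetilde{D}_{\alpha}\!\left(\rho^x_{AB}\middle\Vert\sigma^x_{AB}\right) \le \widetilde{E}^{\alpha}_k\!\left(\rho^x_{AB}\right) + \delta \qquad \forall x\in\{1,2\}.
\end{equation}
Then I would define $\sigma_{AB} \coloneqq \lambda \sigma^1_{AB} + (1-\lambda)\sigma^2_{AB}$. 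Because the $k$-extension constraints (positivity, correct marginal, and permutation invariance on $B_{[k]}$) are all linear, any convex combination of $k$-extensions of $\sigma^1$ and $\sigma^2$ is a valid $k$-extension of $\sigma$, so $\sigma_{AB} \in \operatorname{Ext}_k(A\!:\!B)$.

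The core step is joint quasi-convexity of $\widetilde{D}_\alpha$ for $\alpha > 1$: for any states $\rho^1,\rho^2$ and positive semidefinite $\sigma^1,\sigma^2$,
\begin{equation}
    \widetilde{D}_{\alpha}\!\left(\lambda\rho^1 + (1-\lambda)\rho^2\middle\Vert\lambda\sigma^1+(1-\lambda)\sigma^2\right) \le \max_{x\in\{1,2\}}\widetilde{D}_{\alpha}\!\left(\rho^x\middle\Vert\sigma^x\right).
\end{equation}
I would derive this by embedding the mixture in a classical register: consider the classical-quantum states $\omega_{XAB} = \sum_x \lambda_x |x\rangle\!\langle x|_X\otimes \rho^x_{AB}$ and $\tau_{XAB}=\sum_x \lambda_x |x\rangle\!\langle x|_X\otimes \sigma^x_{AB}$ (with $\lambda_1=\lambda$, $\lambda_2=1-\lambda$). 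A direct computation using the block-diagonal structure gives
\begin{equation}
    \widetilde{D}_{\alpha}(\omega_{XAB}\Vert\tau_{XAB}) = \frac{1}{\alpha-1}\log_2\sum_x \lambda_x 2^{(\alpha-1)\widetilde{D}_{\alpha}(\rho^x\Vert\sigma^x)} \le \max_x \widetilde{D}_{\alpha}(\rho^x\Vert\sigma^x),
\end{equation}
where the inequality is simply that a convex combination of exponentials is at most the exponential of the maximum (valid because $\alpha-1>0$). Applying the data-processing inequality under $\operatorname{Tr}_X$ then yields the joint quasi-convexity statement above.

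Combining these ingredients, since $\sigma_{AB}\in \operatorname{Ext}_k(A\!:\!B)$,
\begin{align}
    \widetilde{E}^{\alpha}_k\!\left(\lambda\rho^1_{AB}+(1-\lambda)\rho^2_{AB}\right) &\le \widetilde{D}_{\alpha}\!\left(\lambda\rho^1_{AB}+(1-\lambda)\rho^2_{AB}\middle\Vert\sigma_{AB}\right)\\
    &\le \max_{x\in\{1,2\}}\widetilde{D}_{\alpha}\!\left(\rho^x_{AB}\middle\Vert\sigma^x_{AB}\right)\\
    &\le \max_{x\in\{1,2\}}\widetilde{E}^{\alpha}_k\!\left(\rho^x_{AB}\right) + \delta.
\end{align}
Taking $\delta\to 0$ gives the claim. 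The only nontrivial step is the joint quasi-convexity of $\widetilde{D}_\alpha$, which is a standard fact but the cleanest self-contained derivation is via the classical register trick above; strictly, one has to be a bit careful if the infima are not attained, which is why I use $\delta$-optimizers rather than exact minimizers.
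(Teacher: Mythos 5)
Your proof is correct, and its overall skeleton matches the paper's: choose (near-)optimal $k$-extendible states for $\rho^1$ and $\rho^2$, mix them using the convexity of $\operatorname{Ext}_k(A\!:\!B)$, and invoke a joint convexity-type property of the underlying divergence. The difference lies in how that last property is justified. The paper works with the trace functional $\widetilde{Q}_{\alpha}(\rho\|\sigma)=\operatorname{Tr}\bigl[\bigl(\sigma^{\frac{1-\alpha}{2\alpha}}\rho\sigma^{\frac{1-\alpha}{2\alpha}}\bigr)^{\alpha}\bigr]$ and cites its joint convexity for $\alpha>1$ (Frank--Lieb), which yields the slightly stronger intermediate fact that $\rho\mapsto\inf_{\sigma\in\operatorname{Ext}_k}\widetilde{Q}_{\alpha}(\rho\|\sigma)$ is genuinely convex, after which the logarithm is applied. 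You instead derive joint \emph{quasi}-convexity of $\widetilde{D}_{\alpha}$ from the direct-sum (classical-flag) identity $\widetilde{D}_{\alpha}(\omega_{XAB}\|\tau_{XAB})=\frac{1}{\alpha-1}\log_2\sum_x\lambda_x 2^{(\alpha-1)\widetilde{D}_{\alpha}(\rho^x\|\sigma^x)}$ together with data processing under $\operatorname{Tr}_X$; since data processing for $\widetilde{D}_{\alpha}$ at $\alpha>1$ is itself equivalent in depth to the Frank--Lieb convexity result, the two routes rest on the same foundations and cost about the same. Your use of $\delta$-optimizers is a clean way to sidestep attainment of the infimum (though in finite dimensions $\operatorname{Ext}_k$ is compact and the infimum is attained); the only cosmetic caveat is that the powers of $\tau_{XAB}$ in your block computation should be understood on the support when the $\sigma^x$ are not full rank, which is the standard convention and does not affect the argument.
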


\begin{proof}
    Let us define the quasi-sandwiched R\'enyi relative entropy as follows:
\begin{equation}
\widetilde{Q}_{\alpha}(\rho\|\sigma)\coloneqq\Tr\!\left[\left(\sigma^{\frac{1-\alpha}{2\alpha}}\rho\sigma^{\frac{1-\alpha}{2\alpha}}\right)^{\alpha}\right].
\end{equation}
This function is known to be jointly convex in $\rho$ and $\sigma$
for $\alpha>1$ \cite[Proposition~3]{Frank2013}. We now prove that the function
\begin{equation}
\rho_{AB}\mapsto\widetilde{Q}_{\alpha,k}(\rho_{AB})\coloneqq\inf_{\sigma_{AB}\in\Ext_{k}(A:B)}\widetilde{Q}_{\alpha}(\rho_{AB}\|\sigma_{AB})
\end{equation}
is convex. To see this, pick arbitrary $\sigma_{AB}^{1},\sigma_{AB}^{2}\in\Ext_{k}(A:B)$
and consider that
\begin{align}
\widetilde{Q}_{\alpha,k}(\lambda\rho_{AB}^{1}+\left(1-\lambda\right)\rho_{AB}^{2}) & \leq\widetilde{Q}_{\alpha}(\lambda\rho_{AB}^{1}+\left(1-\lambda\right)\rho_{AB}^{2}\|\lambda\sigma_{AB}^{1}+\left(1-\lambda\right)\sigma_{AB}^{2})\\
 & \leq\lambda\widetilde{Q}_{\alpha}(\rho_{AB}^{1}\|\sigma_{AB}^{1})+\left(1-\lambda\right)\widetilde{Q}_{\alpha}(\rho_{AB}^{2}\|\sigma_{AB}^{2}),
\end{align}
where the first inequality follows because $\lambda\sigma_{AB}^{1}+\left(1-\lambda\right)\sigma_{AB}^{2}\in \Ext_{k}(A\!:\!B)$ and the second inequality follows from joint convexity.
Since the inequality holds for all $\sigma_{AB}^{1},\sigma_{AB}^{2}\in\Ext_{k}(A\!:\!B)$,
we conclude that
\begin{equation}
\widetilde{Q}_{\alpha,k}(\lambda\rho_{AB}^{1}+\left(1-\lambda\right)\rho_{AB}^{2})\leq\lambda\widetilde{Q}_{\alpha,k}(\rho_{AB}^{1})+\left(1-\lambda\right)\widetilde{Q}_{\alpha,k}(\rho_{AB}^{2}),
\end{equation}
thus justifying that $\rho_{AB}\mapsto\widetilde{Q}_{\alpha,k}(\rho_{AB}) $ is convex. Now consider that
\begin{align}
\widetilde{E}_{k}^{\alpha}(\lambda\rho_{AB}^{1}+\left(1-\lambda\right)\rho_{AB}^{2}) & =\frac{1}{\alpha-1}\log_{2}\widetilde{Q}_{\alpha,k}(\lambda\rho_{AB}^{1}+\left(1-\lambda\right)\rho_{AB}^{2})\\
 & \leq\frac{1}{\alpha-1}\log_{2}\left[\lambda\widetilde{Q}_{\alpha,k}(\rho_{AB}^{1})+\left(1-\lambda\right)\widetilde{Q}_{\alpha,k}(\rho_{AB}^{2})\right]\\
 & \leq\frac{1}{\alpha-1}\log_{2}\left[\max_{x\in\left\{ 1,2\right\} }\widetilde{Q}_{\alpha,k}(\rho_{AB}^{x})\right]\\
 & \leq\max_{x\in\left\{ 1,2\right\} }\frac{1}{\alpha-1}\log_{2}\widetilde{Q}_{\alpha,k}(\rho_{AB}^{x})\\
 & =\max_{x\in\left\{ 1,2\right\} }\widetilde{E}_{k}^{\alpha}(\rho_{AB}^{x}),
\end{align}
thus establishing the claim.
\end{proof}

\medskip

Now we continue with the proof of Proposition~\ref{prop:weak_subadd_sandwich}. In the remainder of this section, we fix $k\ge 2$, $n\in \mathbb{N}$, and $\alpha \in (1,\infty)$.

For every channel $\mathcal{N}_{A\to B}$, the tensor product channel $\mathcal{N}^{\otimes n}_{A\to B}$ is covariant under every permutation operation. That is,
\begin{equation}
    \mathcal{W}^{\pi}_{B^n}\circ\mathcal{N}^{\otimes n}_{A\to B} = \mathcal{N}^{\otimes n}_{A\to B}\circ\mathcal{W}^{\pi}_{A^n} \qquad \forall \pi \in S_k,
\end{equation}
where $\mathcal{W}^{\pi}$ is the permutation channel defined after~\eqref{eq:perm_cov_cond_k_ext}. Alternatively,
\begin{equation}
    \mathcal{W}^{\pi}_{B^n}\circ\mathcal{N}^{\otimes n}_{A\to B}\circ\mathcal{W}^{\pi^{-1}}_{A^n} = \mathcal{N}^{\otimes n}_{A\to B} \qquad \forall \pi \in S_k. \label{eq:tensor_ch_perm_cov}
\end{equation}

Let $\psi_{RA^n}$ be an arbitrary pure state. From~\eqref{eq:tensor_ch_perm_cov}, we can write
\begin{align}
    \widetilde{E}^{\alpha}_k\!\left(\mathcal{N}^{\otimes n}_{A\to B}\!\left(\psi_{RA^n}\right)\right) &= \widetilde{E}^{\alpha}_k\!\left(\mathcal{W}^{\pi}_{B^n}\circ\mathcal{N}^{\otimes n}_{A\to B}\circ\mathcal{W}^{\pi^{-1}}_{A^n}\!\left(\psi_{RA^n}\right)\right) \qquad \forall \pi \in S_k\\
    &= \widetilde{E}^{\alpha}_k\!\left(\mathcal{N}^{\otimes n}_{A\to B}\circ\mathcal{W}^{\pi^{-1}}_{A^n}\!\left(\psi_{RA^n}\right)\right) \qquad \forall \pi \in S_k,\label{eq:sandwich_n_ch_le_twirled_ch}
\end{align}
where the $k$-unextendible sandwiched R\'enyi divergence of states is computed with respect to the partition $R\!:\!B^n$. The equality in~\eqref{eq:sandwich_n_ch_le_twirled_ch} follows from the invariance of the $k$-unextendible sandwiched R\'enyi divergence of states under local unitary channels.

Let us define the following state:
\begin{equation}
    \overline{\rho}_{A^n} \coloneqq \frac{1}{|S_k|}\sum_{\pi \in S_k}\mathcal{W}^{\pi}_{A^n}\!\left(\operatorname{Tr}_R\!\left[\psi_{RA^n}\right]\right),
\end{equation}
and let $\psi^{\overline{\rho}}_{RA^n}$ be a purification of $\overline{\rho}_{A^n}$. Now applying reasoning similar to that in the proof of \cite[Proposition~2]{TWW17}, we conclude that
\begin{equation}\label{eq:sandwich_n_ch_le_avg_st}
    \widetilde{E}^{\alpha}_k\!\left(\mathcal{N}^{\otimes n}_{A\to B}\!\left(\psi_{RA^n}\right)\right) \le \widetilde{E}^{\alpha}_k\!\left(\mathcal{N}^{\otimes n}_{A\to B}\!\left(\psi^{\overline{\rho}}_{RA^n}\right)\right).
\end{equation}
Note that the state $\overline{\rho}_{A^n}$ is invariant under permutations, which implies that there exists a purification of $\overline{\rho}_{A^n}$, say $\phi^{\overline{\rho}}_{\hat{A}^nA^n}$, that is invariant under the channel $\mathcal{W}^{\pi}_{\hat{A}^n}\otimes \mathcal{W}^{\pi}_{A^n}$ (see~\cite[Lemma 4.2.2]{Ren06}). Since both $\psi^{\overline{\rho}}_{RA^n}$ and $\phi^{\overline{\rho}}_{\hat{A}^nA^n}$ are purifications of $\overline{\rho}_{A^n}$, there exists an isometric channel $\mathcal{P}_{\hat{A}^n\to R}$ such that
\begin{equation}\label{eq:pur_to_ext}
    \mathcal{P}_{\hat{A}^n\to R}\!\left(\phi^{\overline{\rho}}_{\hat{A}^nA^n}\right) = \psi^{\overline{\rho}}_{RA^n}.
\end{equation}
Combining~\eqref{eq:sandwich_n_ch_le_avg_st} and~\eqref{eq:pur_to_ext}, we arrive at the following inequality:
\begin{align}
    \widetilde{E}^{\alpha}_k\!\left(\mathcal{N}^{\otimes n}_{A\to B}\!\left(\psi_{RA^n}\right)\right) &\le \widetilde{E}^{\alpha}_k\!\left(\mathcal{N}^{\otimes n}_{A\to B}\otimes\mathcal{P}_{\hat{A}^n\to R}\!\left(\phi^{\overline{\rho}}_{\hat{A}^nA^n}\right)\right)\\
    &\le \widetilde{E}^{\alpha}_k\!\left(\mathcal{N}^{\otimes n}_{A\to B}\!\left(\phi^{\overline{\rho}}_{\hat{A}^nA^n}\right)\right),\label{eq:sandwich_le_avg_pure}
\end{align}
where the final inequality follows from the monotonicity of the $k$-unextendible sandwiched R\'enyi divergence of states under local channels.

Consider the projection onto the symmetric subspace of $\mathcal{H}^{\otimes n}_{\hat{A}A}$, which is defined as follows:
\begin{equation}
    \Pi^{\operatorname{sym}}_{\hat{A}^nA^n} = \frac{1}{|S_k|}\sum_{\pi \in S_k}W^{\pi}_{\hat{A}^n}\otimes W^{\pi}_{A^n},
\end{equation}
where $W^{\pi}$ is the unitary operator corresponding to the permutation $\pi$ in the symmetric group $S_k$. Since $\phi^{\overline{\rho}}_{\hat{A}^nA^n}$ is invariant under the channel $\mathcal{W}^{\pi}_{\hat{A}^n}\otimes \mathcal{W}^{\pi}_{A^n}$ for every $\pi\in S_k$,
\begin{equation}
    \Pi^{\operatorname{sym}}_{\hat{A}^nA^n}\phi^{\overline{\rho}}_{\hat{A}^nA^n}\Pi^{\operatorname{sym}}_{\hat{A}^nA^n} = \phi^{\overline{\rho}}_{\hat{A}^nA^n}.
\end{equation}
The fact that $\phi^{\overline{\rho}}_{\hat{A}^nA^n}$ is a state implies the following inequality:
\begin{align}
    \phi^{\overline{\rho}}_{\hat{A}^nA^n} &\le I_{\hat{A}^nA^n},\\
    \implies \Pi^{\operatorname{sym}}_{\hat{A}^nA^n}\phi^{\overline{\rho}}_{\hat{A}^nA^n}\Pi^{\operatorname{sym}}_{\hat{A}^nA^n} &\le \Pi^{\operatorname{sym}}_{\hat{A}^nA^n}I_{\hat{A}^nA^n}\Pi^{\operatorname{sym}}_{\hat{A}^nA^n},\\
    \implies \phi^{\overline{\rho}}_{\hat{A}^nA^n} & \le \Pi^{\operatorname{sym}}_{\hat{A}^nA^n},
\end{align}
where the second inequality follows from the positive semidefiniteness of $\Pi^{\operatorname{sym}}_{\hat{A}^nA^n}$ and the final inequality is a consequence of $\Pi^{\operatorname{sym}}_{\hat{A}^nA^n}$ being a projection operator.

As noted in~\cite[Proposition 6]{Har13}, the projection operator $\Pi^{\operatorname{sym}}_{\hat{A}^nA^n}$ can be written as follows:
\begin{equation}
    \Pi^{\operatorname{sym}}_{\hat{A}^nA^n} = \binom{n+|A|^2-1}{n}\int d\mu\!\left(\phi\right) \phi^{\otimes n}_{\hat{A}A},
\end{equation}
where $\mu(\phi)$ is the uniform probability distribution on the unit sphere consisting of pure bipartite states. As such,
\begin{equation}\label{eq:avg_pur_le_twirled_scaled}
    \phi^{\overline{\rho}}_{\hat{A}^nA^n} \le \binom{n+|A|^2-1}{n}\int d\mu\!\left(\phi\right) \phi^{\otimes n}_{\hat{A}A}.
\end{equation}

For all states $\omega^1_{AB}$ and $\omega^2_{AB}$ such that $\omega^1_{AB} \le \gamma\omega^2_{AB}$ for some $\gamma \ge 1$, the following inequality holds:
\begin{equation}\label{eq:sandwich_ent_gamma_scaled}
    \widetilde{E}^{\alpha}_k\!\left(\omega^1_{AB}\right) \le \frac{\alpha}{\alpha - 1}\log_2\gamma + \widetilde{E}^{\alpha}_k\!\left(\omega^2_{AB}\right),
\end{equation}
which follows directly from~\cite[Lemma 5]{TWW17}. Combining~\eqref{eq:sandwich_le_avg_pure},~\eqref{eq:avg_pur_le_twirled_scaled}, and~\eqref{eq:sandwich_ent_gamma_scaled}, we arrive at the following inequality:
\begin{align}
    \widetilde{E}^{\alpha}_k\!\left(\mathcal{N}^{\otimes n}_{A\to B}\!\left(\psi_{RA^n}\right)\right) &\le \frac{\alpha}{\alpha - 1}\log_2\!\left(\binom{n+|A|^2-1}{n}\right) + \widetilde{E}^{\alpha}_k\!\left(\int d\mu(\phi)\mathcal{N}^{\otimes n}_{A\to B}\!\left(\phi^{\otimes n}_{\hat{A}A}\right)\right)\\
    &\le \frac{\alpha}{\alpha - 1}\log_2\!\left(\binom{n+|A|^2-1}{n}\right) + \sup_{\phi_{\hat{A}A}}\widetilde{E}^{\alpha}_k\!\left(\mathcal{N}^{\otimes n}_{A\to B}\!\left(\phi^{\otimes n}_{\hat{A}A}\right)\right),\label{eq:sandwich_n_ch_le_sandwich_tensor_st}
\end{align}
where the final inequality follows from Proposition~\ref{prop:sandwich_ent_quasi_convex}, and the supremum in the final inequality is over all bipartite pure states in $\mathcal{S}(\hat{A}A)$.

Recall that the $k$-unextendible sandwiched R\'enyi divergence of a state is subadditive with respect to tensor products. Therefore,
\begin{align}
    \widetilde{E}^{\alpha}_k\!\left(\mathcal{N}^{\otimes n}_{A\to B}\!\left(\phi^{\otimes n}_{\hat{A}A}\right)\right) &= \widetilde{E}^{\alpha}_k\!\left(\left(\mathcal{N}_{A\to B}\!\left(\phi_{\hat{A}A}\right)\right)^{\otimes n}\right)\\
    &\le n\widetilde{E}^{\alpha}_k\!\left(\mathcal{N}_{A\to B}\!\left(\phi_{\hat{A}A}\right)\right).
\end{align}
Substituting the above inequality in~\eqref{eq:sandwich_n_ch_le_sandwich_tensor_st}, we arrive at the following:
\begin{align}
    \widetilde{E}^{\alpha}_k\!\left(\mathcal{N}^{\otimes n}_{A\to B}\!\left(\psi_{RA^n}\right)\right) &\le \frac{\alpha}{\alpha - 1}\log_2\!\left(\binom{n+|A|^2-1}{n}\right) + \sup_{\phi_{\hat{A}A}}n\widetilde{E}^{\alpha}_k\!\left(\mathcal{N}_{A\to B}\!\left(\phi_{\hat{A}A}\right)\right)\\
    &\le \frac{\alpha}{\alpha - 1}\log_2\!\left(\binom{n+|A|^2-1}{n}\right) + n\widetilde{E}^{\alpha}_k\!\left(\mathcal{N}_{A\to B}\right),\label{eq:sandwich_weak_subadd_final}
\end{align}
where the final inequality follows from Lemma~\ref{lem:unext_div_ch_ge_unext_div_st}. 

Now consider an arbitrary state $\omega_{RA^n}$ with the following pure-state decomposition:
\begin{equation}
    \omega_{RA^n} = \sum_i\lambda_i\varphi^i_{RA^n}.
\end{equation}
Then, from the quasi-convexity of $k$-unextendible sandwiched R\'enyi divergence of states, 
\begin{align}
    \widetilde{E}^{\alpha}_k\!\left(\mathcal{N}^{\otimes n}_{A\to B}\!\left(\omega_{RA^n}\right)\right) &\le \max_{i}\widetilde{E}^{\alpha}_k\!\left(\mathcal{N}^{\otimes n}_{A\to B}\!\left(\varphi^i_{RA^n}\right)\right)\\
    &\le \frac{\alpha}{\alpha - 1}\log_2\!\left(\binom{n+|A|^2-1}{n}\right) + n\widetilde{E}^{\alpha}_k\!\left(\mathcal{N}_{A\to B}\right),\label{eq:sandwich_weak_subadd_final_mixed}
\end{align}
where the final inequality follows from~\eqref{eq:sandwich_weak_subadd_final}. 

Since~\eqref{eq:sandwich_weak_subadd_final_mixed} holds for every state $\omega_{RA^n}$, every $k\ge 2$, and every $\alpha>1$, we conclude the statement of Proposition~\ref{prop:weak_subadd_sandwich}.

\section{Proof of Corollary~\ref{cor:n_shot_priv_cap_sandwich}}\label{app:n_shot_priv_cap_sandwich_proof}

In this section, we find a single-letter upper bound on the $n$-shot forward-assisted private capacity of a channel in terms of the $k$-unextendible sandwiched R\'enyi divergence of the channel.

Fix $k\ge 2$ and $\alpha > 1$. Let $\mathcal{N}_{A\to B}$ be an arbitrary channel. We first substitute the inequality from~\eqref{eq:hypo_test_ent_le_sandwich} into~\eqref{eq:priv_cap_le_hypo_test_st}, which leads to the following inequality:
\begin{align}
    P^{\varepsilon,\to}\!\left(\mathcal{N}_{A\to B}\right) &\le \log_2\!\left(\frac{k-1}{k}\right) - \log_2\!\left(2^{-\sup_{\rho_{RA}\in \mathcal{S}(RA)}\left\{\widetilde{E}^{\alpha}_k\left(\mathcal{N}(\rho_{RA})\right) + \frac{\alpha}{\alpha - 1}\log_2\left(\frac{1}{1-\varepsilon}\right)\right\}} - \frac{1}{k}\right)\\
    &= \log_2\!\left(\frac{k-1}{k}\right) - \log_2\!\left(2^{-\sup_{\rho_{RA}\in \mathcal{S}(RA)}\widetilde{E}^{\alpha}_k\left(\mathcal{N}(\rho_{RA})\right)}\!\left(1-\varepsilon\right)^{\frac{\alpha}{\alpha - 1}} - \frac{1}{k}\right).
\end{align}
As such, for a tensor product of $n$ copies of $\mathcal{N}_{A\to B}$, we have the following inequality:
\begin{equation}\label{eq:priv_cap_sandwich_tensor_prod}
    P^{\varepsilon,\to}\!\left(\mathcal{N}^{\otimes n}_{A\to B}\right) \le \log_2\!\left(\frac{k-1}{k}\right) - \log_2\!\left(2^{-\sup_{\rho_{RA^n}\in \mathcal{S}(RA^n)}\widetilde{E}^{\alpha}_k\left(\mathcal{N}^{\otimes n}(\rho_{RA^n})\right)}\!\left(1-\varepsilon\right)^{\frac{\alpha}{\alpha - 1}} - \frac{1}{k}\right).
\end{equation}

Since the statement of Proposition~\ref{prop:weak_subadd_sandwich} holds for every state $\rho_{RA^n}$, we can take a supremum over all states in $\mathcal{S}(RA^n)$ and arrive at the following inequality:
\begin{equation}\label{eq:sup_ov_st_sandwich_tensor_ch}
    \sup_{\rho_{RA^n}\in \mathcal{S}(RA^n)}\widetilde{E}^{\alpha}_k\!\left(\mathcal{N}^{\otimes n}(\rho_{RA^n})\right) \le \frac{\alpha}{\alpha - 1}\log_2\!\left(C(n,|A|)\right) + n\widetilde{E}^{\alpha}_k\!\left(\mathcal{N}_{A\to B}\right),
\end{equation}
where
\begin{equation}
    C(n,|A|) \coloneqq \binom{n+|A|^2-1}{n}.
\end{equation}
Substituting~\eqref{eq:sup_ov_st_sandwich_tensor_ch} into~\eqref{eq:priv_cap_sandwich_tensor_prod}, we arrive at the following inequality:
\begin{align}
    P^{\varepsilon,\to}\!\left(\mathcal{N}^{\otimes n}_{A\to B}\right) &\le \log_2\!\left(\frac{k-1}{k}\right) - \log_2\!\left(2^{-\left\{n\widetilde{E}^{\alpha}_k\left(\mathcal{N}\right) + \frac{\alpha}{\alpha - 1}\log_2\left(C(n,|A|)\right)\right\}}\!\left(1-\varepsilon\right)^{\frac{\alpha}{\alpha - 1}} - \frac{1}{k}\right)\\
    &= \log_2\!\left(\frac{k-1}{k}\right) - \log_2\!\left(2^{-n\widetilde{E}^{\alpha}_k\left(\mathcal{N}\right)}\!\left(\frac{1-\varepsilon}{C(n,|A|)}\right)^{\frac{\alpha}{\alpha - 1}} - \frac{1}{k}\right).
\end{align}
This concludes the proof.

\section{Proof of Proposition~\ref{prop:eras_ch_gen_div}}\label{app:eras_ch_gen_div}

In this section, we find an upper bound on the $k$-unextendible generalized channel divergence of a tensor product of $n$ erasure channels in terms of the generalized divergence between two probability distributions.

Let $\left\{U^g_A\right\}_{g\in G}$ be a unitary one-design acting on the Hilbert space $\mathcal{H}_A$. Let us define the following operators:
\begin{equation}
    V^g \coloneqq U^g + |e\rangle\!\langle e| \qquad \forall g\in \mathcal{G}.
\end{equation}
Note that all erasure channels are covariant with respect to $\left\{\left(U^g,V^g\right)\right\}_{g\in \mathcal{G}}$. That is,
\begin{equation}
   \mathcal{E}^p\!\left(U^g\rho \left(U^g\right)^{\dagger}\right) = V^g\mathcal{E}^p\!\left(\rho \right)\!\left(V^g\right)^{\dagger} \qquad \forall g\in \mathcal{G}, p\in [0,1],
\end{equation}
where $\mathcal{E}^p$ is an erasure channel with an erasure probability $p$, defined in~\eqref{eq:eras_ch_defn}. 

Let $h\coloneqq (g_1(h),g_2(h),\ldots,g_n(h))$ be an arbitrary element of the set $\mathcal{G}^{\times n}$, where $g_i\in \mathcal{G}$ for every $i\in \{1,2,\ldots,n\}$. Note that $\left\{U^{g_1(h)}\otimes U^{g_2(h)}\otimes \cdots\otimes U^{g_n(h)}\right\}_{h\in \mathcal{G}^{\times n}}$ is also a unitary one-design on the Hilbert space $\mathcal{H}^{\otimes n}_{A}$ for any positive integer $n$. Let us use the following notations:
\begin{align}
    U^h &\coloneqq U^{g_1(h)}\otimes U^{g_2(h)}\otimes \cdots\otimes U^{g_n(h)},\\
    V^h &\coloneqq V^{g_1(h)}\otimes V^{g_2(h)}\otimes \cdots\otimes V^{g_n(h)}.
\end{align}
Then it can be verified that a tensor product of erasure channels $\left(\mathcal{E}^p\right)^{\otimes n}$ is covariant with respect to $\left\{\left(U^h,V^h\right)\right\}_{h\in \mathcal{G}^{\times n}}$. Since $\left\{U^h\right\}_{h\in \mathcal{G}^{\times n}}$ is a unitary one-design, we can use~\cite[Corollary II.5]{LKDW18} to arrive at the following equality:
\begin{equation}\label{eq:eras_ch_eq_eras_st}
    \mathbf{D}\!\left(\left(\mathcal{E}^p_{A\to B}\right)^{\otimes n}\middle\Vert\left(\mathcal{E}^q_{A\to B}\right)^{\otimes n}\right) = \mathbf{D}\!\left(\left(\mathcal{E}^p_{A\to B}\right)^{\otimes n}\!\left(\Phi_{R^nA^n}\right)\middle\Vert\left(\mathcal{E}^q_{A\to B}\right)^{\otimes n}\!\left(\Phi_{R^nA^n}\right)\right) \qquad \forall p,q \in [0,1],
\end{equation}
where $\Phi_{R^nA^n}$ is the maximally entangled state on the Hilbert space $\mathcal{H}^{\otimes n}_R\otimes \mathcal{H}^{\otimes n}_A$.

Let us define the following projections:
\begin{align}
    P^0_B &= \sum_{i=0}^{|B|-1}|i\rangle\!\langle i|_B\\
    P^1_B &= |e\rangle\!\langle e|_B.
\end{align}
Now consider the following POVM:
\begin{equation}
    \Pi_{B_{[n]}}\coloneqq \left\{\bigotimes_{j=1}^nP^{i_j}_{B_j}\right\}_{(i_1,i_2,\ldots,i_n)\in \{0,1\}^n}.
\end{equation}
The POVM $\Pi_{B_{[n]}}$ counts the number of erasure symbols in a state on $\mathcal{H}^{\otimes n}_B$ without modifying the state. When acted upon $\left(\mathcal{E}^p_{A\to B}\right)^{\otimes n}\!\left(\Phi_{R^nA^n}\right)$, the outcome of the POVM $\Pi_{B_{[n]}}$ is distributed with respect to the binomial distribution $\left\{1-p,p\right\}^{\times n}$. The data-processing inequality of generalized divergence thus implies the following:
\begin{equation}
    \mathbf{D}\!\left(\left\{1-p,p\right\}^{\times n}\middle \Vert \left\{1-q,q\right\}^{\times n}\right) \le \mathbf{D}\!\left(\left(\mathcal{E}^p_{A\to B}\right)^{\otimes n}\!\left(\Phi_{R^nA^n}\right)\middle\Vert\left(\mathcal{E}^q_{A\to B}\right)^{\otimes n}\!\left(\Phi_{R^nA^n}\right)\right).
\end{equation}
Also, note that one can construct the state $\left(\mathcal{E}^p_{A\to B}\right)^{\otimes n}\!\left(\Phi_{R^nA^n}\right)$ if one has access to the binomial distribution $\left\{1-p,p\right\}^{\times n}$. One way to achieve this is by first generating the state $\Phi_{R^nB^n}$, then drawing $n$ bits from the distribution $\left\{1-p,p\right\}^{\times n}$ and erasing the state on $B_i$ if the $i^{\operatorname{th}}$ draw corresponds to the outcome that occurs with probability $p$. The data-processing inequality for the generalized divergence now yields the opposite inequality:
\begin{equation}
    \mathbf{D}\!\left(\left\{1-p,p\right\}^{\times n}\middle \Vert \left\{1-q,q\right\}^{\times n}\right) \ge \mathbf{D}\!\left(\left(\mathcal{E}^p_{A\to B}\right)^{\otimes n}\!\left(\Phi_{R^nA^n}\right)\middle\Vert\left(\mathcal{E}^q_{A\to B}\right)^{\otimes n}\!\left(\Phi_{R^nA^n}\right)\right).
\end{equation}
Therefore,
\begin{equation}\label{eq:bin_eq_eras_st}
    \mathbf{D}\!\left(\left\{1-p,p\right\}^{\times n}\middle \Vert \left\{1-q,q\right\}^{\times n}\right) = \mathbf{D}\!\left(\left(\mathcal{E}^p_{A\to B}\right)^{\otimes n}\!\left(\Phi_{R^nA^n}\right)\middle\Vert\left(\mathcal{E}^q_{A\to B}\right)^{\otimes n}\!\left(\Phi_{R^nA^n}\right)\right).
\end{equation}

Now we turn our attention to the $k$-unextendible generalized divergence of a tensor product of erasure channels. Recall that $\mathcal{E}^{1-1/k}_{A\to B}$ is a $k$-extendible channel, and consequently, $\left(\mathcal{E}^{1-1/k}_{A\to B}\right)^{\otimes n}$ is also a $k$-extendible channel. Then, by definition of the $k$-unextendible generalized divergence of channels,
\begin{align}
    \mathbf{E}_k\!\left(\left(\mathcal{E}^{p}_{A\to B}\right)^{\otimes n}\right) &\le \mathbf{D}\!\left(\left(\mathcal{E}^p_{A\to B}\right)^{\otimes n}\middle\Vert\left(\mathcal{E}^{1-1/k}_{A\to B}\right)^{\otimes n}\right)\\
    &= \mathbf{D}\!\left(\left\{1-p,p\right\}^{\times n}\middle \Vert \left\{\frac{1}{k},1-\frac{1}{k}\right\}^{\times n}\right),
\end{align}
where the final equality follows from~\eqref{eq:eras_ch_eq_eras_st} and~\eqref{eq:bin_eq_eras_st}.

\bibliographystyle{alphaurl}
\bibliography{Ref}

\end{document}